\numberwithin{equation}{section}
\theoremstyle{plain} 
\newtheorem{thm}{Theorem}[section] 
\newtheorem{lem}[thm]{Lemma} 
\newtheorem{prop}[thm]{Proposition} 
\newtheorem{defn}[thm]{Definition}
\newtheorem{remark}[thm]{Remark}
\newcommand{\A}{{\mathcal A}}
\newcommand{\B}{{\mathcal B}}
\newcommand{\F}{{\mathcal F}}
\newcommand{\D}{{\mathcal D}}
\newcommand{\HH}{{\mathcal H}}
\newcommand{\J}{\mathcal{J}}
\newcommand{\C}{{\mathbb C}}
\newcommand{\I}{\mathbb I}
\title{Twisted Spectral Triples \\ without First-Order Condition}
\author[13]{Pierre Martinetti}
\author[2]{Jacopo Zanchettin}
\affil[1]{\textit{Universit\'a di Genova, Dipt. di Matematica \& INFN, via Dodecaneso 35, Genova}}
\affil[2]{\textit{SISSA, Via Bonomea 165, Trieste}}
\date{}
\begin{document}
	\maketitle
	{\fontfamily{qpl}\selectfont
		\vspace{-.95truecm}	\begin{abstract}
			
			We extend twisted inner fluctuations to twisted spectral triples that
			do not meet 
			the twisted first order condition, following what has been
			done in \cite{chamconnessuijlekom:innerfluc} for the non twisted
			case. We find a similar non-linear term in the fluctuation, and work
			out the twisted version of the semi-group of inner perturbations.\end{abstract}
		
		\vspace{-.5truecm}	\tableofcontents
		
		\section{Introduction}
		Twisted spectral triples have been introduced by Connes
		and Moscovici in \cite{connesmosco:twisted} to incorporate
		type III algebras in the paradigm of spectral
		triples. Instead of requiring that the commutator $[D,a]$
		is bounded for any $a\in\A$, one asks for an algebra
                automorphism $\sigma$ that makes the \emph{twisted commutator}
		\begin{equation*}
			[D,a]_{\sigma}:=Da-\sigma(a)D
		\end{equation*} 
		bounded for any $a\in \A$.
		Other properties of spectral triples have been later generalised
		to the twisted case, in particular the real structure \cite{Connes:1995kx}}. This
	leads to a twisted version of the first order condition \cite{landimart:twisting}, \cite{landimart:twistgauge} 
	\begin{equation*}
		[[D,a]_\sigma,b^{\circ}]_{\sigma^\circ}=0 \quad \forall a\A, b^\circ\in \A^\circ
	\end{equation*} 	
	where $\sigma^\circ$ is the automorphism induced by $\sigma$ on
	the opposite algebra $\A^\circ$ (for a twisted version of
	the the regularity condition, see \cite{Matassa:20119aa}).
	
	In \cite{chamconnessuijlekom:innerfluc}, the authors have
	shown how the removal of the first order condition for a
	usual - i.e. non twisted - spectral triple  yields a
	non-linear term in the inner fluctuation of the
	Dirac operator. This term turns out to be important for the application of noncommutative
	geometry to high-energy physics, since it paves the way to models ``Beyond the Standard Model'' of
	fundamental interactions \cite{chamconnessuijlekom:beyond}.
	
	In this paper, we show that a similar phenomenon occurs for
	twisted spectral triples. The twisted inner fluctuations of
	\cite{landimart:twistgauge} generalise in case the twisted
	first-order condition \eqref{eq:twisted_commutator} does not hold, yielding a
	non-linear term. Twisted gauge transformations as well still make sense, and are well encoded by a
	semi-group structure, similar as the one worked out in
	\cite{chamconnessuijlekom:innerfluc} in the non-twisted case.
	\medskip
	
	The paper is organised as follows. In \S\ref{sec:twist-spectr-tripl} we recall some basics on real twisted spectral
	triples (\S \ref{sec:real-twist-spectr}), including generalised
	$1$-forms and connections (\S \ref{sec:one-forms-conn}). We
	discuss in particular in \S \ref{subsec:lift} the conditions under which the twisting
	automorphism $\sigma$ lifts to $\A$-modules, showing that the
	assumption made in \cite{landimart:twisting} is no the only
	possibility. The definition of hermitian connections in the twisted
	context is discussed in \S \ref{sec:twist-herm-conn}. Fluctuations without the
	first order conditions are generalised to the twisted case in
	section  \ref{sec:twist-fluct-with-2}. First, we work out in details how to
	export a  twisted spectral triple between Morita equivalent
	algebras using a right module (\S\ref{sec:morita-equivalence}), then
	taking into account the real structure and assuming the condition of
	order $1$ (\S \ref{sec:twist-fluct-real}). All this is done following what has been done in
	\cite{chamconnessuijlekom:innerfluc} for the non-twisted case, and
	provides an extension of the results of \cite{landimart:twistgauge} beyond
	self-Morita equivalence. The twisted first-order condition is removed
	in \S\ref{sec:twist-fluct-with}. One obtains a non-linear
	component in the twisted fluctuation in proposition \ref{sec:twist-fluct-with-1}. Section
	\ref{sec:gauge-transformation} deals with gauge transformation. It
	begins with a brief recalling of gauge transformations for twisted spectral
	triples in \S\ref{sec:twist-gauge-transf}, that are extended to the
	non-linear term in \S \ref{sec:non-linear-gauge}. The equivalence
	between gauge transformations and the twisted conjugate action of
	unitaries on the Dirac operator is shown in Prop.\ref{prop:TwistedGaugeTransformedNoFirstOrder}. The loss of
	selfadjointness of the Dirac operator under gauge transformation is
	discussed in \S \ref{sec:self-adjointness}: one finds the same
	limitations as when the condition of order $1$ holds. In 
	section \ref{sec:self-adjointness-1} we work out the structure of
	semi-group associated to twisted inner fluctuations. The normalisation
	condition defining the semi-group  is discussed in \S
	\ref{sec:twist-norm-cond}, and the semi-group is explicitly built in
	\S \ref{sec:from-semi-group}. All is summarised in Prop.\ref{prop:final}, which also
	shows how to describe gauge transformations by actions of unitary
	elements of the semi-group. Finally, in \S\ref{sec:twisted-u1times-u2} we adapt to the
	twisted case the concluding $U(1)\times U(2)$ example of
	\cite{chamconnessuijlekom:innerfluc}, showing that we obtain a similar
	field contains. The appendices contain technical results on
	fluctuations implemented by a left module.

	In all the paper, we assume that the algebras
	are unital.
	\newpage
	
	\section{Twisted Spectral Triples}
	\label{sec:twist-spectr-tripl}

	We begin with a brief summary of the
	results of
	\cite{landimart:twisting,landimart:twistgauge} on
	real twisted  spectral triples (\S
	\ref{sec:real-twist-spectr}) and  twisted $1$-forms (\S
	\ref{sec:one-forms-conn}). We
	discuss the lift of the twisting automorphism to
	modules  in \S \ref{subsec:lift}, and hermitian
	connections in the twisted context in \S \ref{sec:twist-herm-conn}.
	
	\subsection{Real twisted spectral triples}
	\label{sec:real-twist-spectr}
	
	For simplicity we work with complex algebras, but the definitions
	below make sense for real algebras as well (this is important for
	applications to physics, since the algebra describing the Standard Model
	of fundamental interactions is a real one).
	
	\begin{defn}\cite{connesmosco:twisted}
		\label{defn:twisted_spectral_triples}
		A twisted spectral triple
		$(\A,\mathcal{H},D),\sigma$
		consists~in 
		a unital, involutive, complex algebra $\A$ acting faithfully on
		a separable Hilbert space $\mathcal{H}$ via an involutive representation $\pi$, together with a
		self-adjoint densely defined operator with compact resolvent
		$D$ (called Dirac operator) and an automorphism $\sigma\in\mathrm{Aut}(\A)$ satisfying
		\begin{equation}
			\label{eq:RegularityAutomorph}
			\sigma(a^{*})=\sigma^{-1}(a)^{*}\quad \forall a\in\A,
		\end{equation}
		and such that for any $a\in\A$ the twisted commutator
		\begin{eqnarray}
			\label{eq:twisted_commutator}
			\left[D,\pi(a)\right]_{\sigma}:=D\pi(a)-\pi(\sigma(a))D\quad\text{
				is bounded}.
		\end{eqnarray}
		%is bounded.  
	\end{defn}
	
	The automorphism $\sigma$ is not asked to be involutive, but rather
	to satisfy the regularity property
	(\ref{eq:RegularityAutomorph})  following from considerations on local
	index theory~\cite{connesmosco:twisted}. 
	
	As in the non-twisted case, the spectral triple is
	graded when there is a selfadjoint operator $\Gamma$ on
	$\cal H$
	that squares to the identity, anticommutes with $D$ and
	commutes with $\A$. The real
	structure \cite{Connes:1995kx} as well
	makes sense without change: this is % .
	an antilinear isometry $J$ on $\mathcal{H}$ such~that:
	\begin{equation}
		\label{real_structure}
		J^{2}=\varepsilon\mathbb{I} \quad
		JD=\varepsilon'DJ\quad
		J\Gamma=\varepsilon''\Gamma J
	\end{equation}
	where $\varepsilon,\varepsilon',\varepsilon''\in\{\pm 1\}$
	defines the $KO$-dimension of the triple.
	It implements a representation 
	\begin{equation}
		\label{eq:opposite_rep}
		\pi^{\circ}(a^{\circ}):=J\pi(a)^{*}J^{-1}
	\end{equation}
	of the
	opposite algebra $\A^{\circ}$, where the map $\circ\!:\!\A\!\rightarrow\!\A^{\circ}$ identifies
	any $a\in \A$ as an element
	$a^\circ$~of~$\A^\circ$. 
	
	To $\sigma$ is associated the automorphism of $\A^\circ$
	\begin{equation}
		\sigma^{\circ}(a^{\circ}):=(\sigma^{-1}(a))^{\circ},\label{eq:143}
		\;\text{ with inverse } \;
		{\sigma^\circ}^{-1}(a^\circ)=\sigma(a)^\circ.
	\end{equation}
	This automorphism satisfies a regularity condition similar
	as \eqref{eq:RegularityAutomorph}
	\begin{equation}
		\label{eq:84}
		\sigma^\circ((a^\circ)^*)=   \sigma^\circ((a^*)^\circ) =(\sigma^{-1}(a^*))^\circ=(\sigma(a)^*)^\circ=(\sigma(a)^\circ)^*=\left({\sigma^\circ}^{-1}(a^\circ)\right)^*,
	\end{equation}
	where we used the commutation of  $\circ$ with the involution:
	$J=J^*$ hence $(a^\circ)^*=(a^*)^\circ$.  
	Moreover,  by~\eqref{eq:opposite_rep} and 
	(\ref{eq:RegularityAutomorph}) one has
	\begin{equation}
		\label{eq:36}
		\pi^{\circ}(\sigma^{\circ}(a^{\circ}))=J\pi(\sigma(a^{*}))J^{-1},
	\end{equation}
	which guarantees the boundedness, for any $a^\circ\in \A^\circ$, of the twisted commutator
	\begin{equation}
		\label{eq:twistzero}
		\left[D,\pi^{\circ}(a^{\circ})\right]_{\sigma^{\circ}}:=D\pi^{\circ}(a^{\circ})-\pi^{\circ}(\sigma^{\circ}(a^{\circ}))D=\varepsilon'J\left[D,\pi(a^{*})\right]_{\sigma}J^{-1}.
	\end{equation}
	
	To define an (ordinary) real spectral triple, $J$ is
	asked to satisfy two conditions of order zero and one. The
	former passes to the twist without modification, the latter is
	modified as follows.   
	\begin{defn}\cite{landimart:twisting}
		\label{defn:RealTwistedSpectralTriples}
		A twisted spectral triple $(\A,\mathcal{H},D),\sigma$ is
		real when it comes with a real structure $J$ which satisfies
		the conditions of 
		\begin{align}
			\label{eq:zero-order}
			&\text{order-zero:}&&\left[\pi(a),\pi^{\circ}(b^{\circ})\right]=0,\\[0pt]
			\label{eq:TwistedFirstOrder}
			&\text{order-one:} 	&&[[D,\pi(a)]_{\sigma},\pi^{\circ}(b^{\circ})]_{\sigma^{\circ}}=0\qquad
			\forall a\in \A, \, b^\circ\in \A^\circ.
		\end{align}	
	\end{defn}
	The order zero condition guarantees  that the right action of $\A$ on
	$\HH$ defined by
	\begin{equation}
		\label{eq:19}
		\psi\, \pi(a):= \pi^\circ(a^\circ)\psi = J\pi(a^*)J^{-1}\psi \qquad \forall
		a\in \A,\, \psi\in\HH,
	\end{equation}
	commutes with the left action induced by the representation $\pi$.
	
	Let  $\hat{\pi}(a)\!:=\!J\pi(a)J^{-1}= \pi^{\circ}((a^{*})^{\circ})$
	denote the conjugation by the real
	structure. Dropping the
	representation $\pi$, this equation, eq.\eqref{eq:19}, the zero and first order conditions
	are equivalent~to
	\begin{equation}
		\label{eq:34}
		\psi a = a^\circ\psi,\quad \hat{a}=(a^{*})^{\circ}=(a^\circ)^*,   \quad
		[a,\hat{b}]=0, \quad [[D,a]_\sigma,\hat{b}]_{\sigma^\circ}=0
		\qquad\forall a,b\in \A,\; \psi\in\HH.
	\end{equation}
	
	\subsection{Twisted one-forms and connections}
	\label{sec:one-forms-conn}
	
	The twisted commutators
	\eqref{eq:twisted_commutator}  and \eqref{eq:twistzero} are derivations on the algebra $\A$,
	\begin{equation}
		\delta(\cdot):=[D,\cdot]_{\sigma},\qquad \delta^{\circ}:=[D,(\cdot)^{\circ}]_{\sigma^{\circ}}
		\label{eq:1}
	\end{equation}
	which take value in the $\A$-bimodule of \textit{twisted one-forms}
	\cite{connesmosco:twisted} and its opposite
	\begin{align*}
		\Omega:=\Omega_{D}^{1}(\A,\sigma)=\bigg\{\sum_{j}a_{j}[D,b_{j}]_{\sigma},\;a_{j},b_{j}\in
		\A\bigg\},\quad
		\Omega^\circ:=\Omega_{D}^{1}(\A^{\circ},\sigma^{\circ})=\bigg\{\sum_{j}a^{\circ}_{j}[D,b^{\circ}_{j}]_{\sigma^{\circ}},\;a^{\circ}_{j},b^{\circ}_{j}\in \A^{\circ}\bigg\},
	\end{align*}
	where the bimodule structure is, for any
	$\omega\in\Omega$,
	$\omega^{\circ}\in\Omega^{\circ}$ and   $a,b\in \A$,
	\begin{align}
		\label{eq:BimoduleOneForms}
		a\cdot\omega\cdot
		b:=\sigma(a)\,\omega\,b  
		\qquad a\cdot\omega^{\circ}\cdot
		b:=\sigma^{\circ}(b^{\circ})\,\omega^{\circ}\,a^{\circ}.
	\end{align}
	
	\begin{remark}
\label{sec:twisted-one-forms-1}
		One  twists the left action of $\A$ in order to have
		the Leibniz rules
		\begin{align}
			\label{eq:TwistedLeibniz}
			& \delta(ab)=\delta(a)b+\sigma(a)\delta(b)
			=\delta(a)\cdot b+a\cdot\delta(b),\\
			\label{eq:TwistedLeibniz2}
			&\delta^{\circ}(ab)
			=\sigma^{\circ}(b^{\circ})\delta^{\circ}(a)+\delta^{\circ}(b)a^{\circ}	=
			\delta^{\circ}(a)\cdot
			b+a\cdot\delta^{\circ}(b) \qquad \forall a,b\in \A.
		\end{align}
		Unlike
		usual commutators, these derivations are not anti-hermitian
		but rather satisfy
		\begin{align}
			\label{eq:82bis}
			\delta(a^*)=  Da^*-\sigma(a^*)D = -(D\sigma^{-1}(a) - aD)^* = -[D, \sigma^{-1}(a)]_\sigma^*=-\left(\delta(\sigma^{-1}(a))\right)^*, \\[4pt]
			\delta^\circ(a^*)=  Da^{*\circ}-\sigma^\circ(a^{*\circ})D =    \left(a^\circ     D -   D\sigma^\circ(a^{*\circ}) ^*\right)^* = 
			-[D,  \sigma(a)^\circ]_{\sigma^\circ}^*=-\left(\delta^\circ(\sigma(a)\right)^*,\label{eq:147}
		\end{align}      
		where in \eqref{eq:82bis}  we use the regularity
		\eqref{eq:RegularityAutomorph} and in \eqref{eq:147} the
		commutativity $\circ,*$, then
		$\sigma^\circ(a^{*\circ}) =\left(\sigma(a)^\circ\right)^*$
		extracted from~\eqref{eq:84}.  These rules, as well as the bimodule laws, do not require the
		zero nor the first order conditions but rely only on
		the properties of the twisted commutators:
		$[D,ab]_\sigma= [D,a]_\sigma \, b + \sigma(a) \,[D,b]$ and
		$[D,(ab)^\circ]_{\sigma^\circ}=   \sigma^\circ(b^\circ) \,  [D,a^\circ]_{\sigma^\circ} + \,[D,b^\circ]_{\sigma^\circ}\,a^\circ.$       
	\end{remark}

	The derivations $\delta$, $\delta^\circ$ serve to
	define connections required to export spectral triples between
	Morita equivalent algebras.  Recall
	that a 
	$\Omega$-valued, resp.
	$\Omega^{\circ}$-valued,  connection on a right $\A$-module
	$\mathcal{E}$, resp. a  left
	$\A$-module~$\mathcal{F}$,  are $\mathbb{C}$-linear maps
	\begin{align}
		\label{eq:TwistedConnection}
		\nabla:\mathcal{E}\longrightarrow\mathcal{E}\otimes_{\A}\Omega,\qquad
		\nabla^{\circ}:\mathcal{F}\longrightarrow\Omega^{\circ}\otimes_{\A}\mathcal{F}
	\end{align} 
	which fulfil the Leibniz rules, for any $\xi\in\mathcal{E},
	\zeta\in\mathcal{F}$ and  $a\in\A$, 
	\begin{align}
		\label{eq:TwistedLeibnizConn}
		\nabla(\xi a)=(\nabla\xi)a+\xi\otimes\delta(a),
		\quad	\nabla^\circ(a\zeta)=a(\nabla^{\circ}\zeta)+\delta^{\circ}(a)\otimes\zeta,
	\end{align}
	where one defines
	\begin{align}
		(\nabla\xi)a= \xi_{(0)}\otimes (\xi_{(1)}\cdot a),\quad
		\label{eq:4}
		a(\nabla^\circ \zeta):=(a.\zeta_{(-1)})\otimes \zeta_{(0)}
	\end{align}
	using the Snyder notations
	\begin{align}
		\nabla \xi =\xi_{(0)}\otimes\xi_{(1)}, \quad \nabla^\circ\zeta=\zeta_{(-1)}\otimes\zeta_{(0)}
		\label{eq:7}
	\end{align}
	with
	\begin{equation}
		\xi_{(0)}\in{\mathcal
			E},\;\xi_{(1)}\in\Omega\quad\text{ and }\quad
		\zeta_{(0)}\in{\cal F},\; \zeta_{(-1)}\in\Omega^\circ.\label{eq:141}
	\end{equation}
	
	\subsection{Lift of automorphism}
	\label{subsec:lift}
	
	Inner fluctuations consist in
	exporting a noncommutative geometry from an algebra $\A$ to a Morita equivalent algebra
	$\B$. In case of twisted geometries, this requires as a preamble to
	lift the twisting automorphism $\sigma$ first to the module $\cal E$ 
	implementing Morita equivalence, then to~$\B$. 
	
	To this aim, recall that 
	two algebras $\A, \B$ are (strongly) Morita equivalent when there
	exists a full Hilbert $\B$-$\A$-module $\cal E$  (that is $\B$-left,
	$\A$-right), 
	such that the algebra $\mathrm{End}_{\A}(\mathcal{E})$ of
	$\A$-linear, adjointable, endormophisms of $\cal E$ is isomorphic to
	$\B$. If both $\A$ and $\B$ are unital then $\cal E$ is finitely projective as right $\A$-module, i.e. there is an idempotent $e=e^{2}=e^{*}\in M_n(\A)$ for some
	$n\in\mathbb{N}$ such that \begin{equation}
		\mathcal{E} \simeq e\A^n.
		\label{eq:124}
	\end{equation}
	Any element of $\cal E$, viewed as a vector $\xi=e\xi \in \A^n$, has
	components $\xi_i=e_i^j\xi_j \in \A$ ($i=1, ...,n$)  with $e_i^j\in\A$ 
	the $i^\text{th}$-line,
	$j^\text{th}$-column component of the idempotent $e$ and we use Einstein summation on indices in
	alternate up/down position. Identifying a vector with its components, we 
	write
	\begin{equation}
		\xi=(\xi_i)=(e_i^j\xi_j).\label{eq:174}
	\end{equation}
	The module $\cal E$  is  hermitian for the $\A$-valued inner product
	\begin{equation}
		\label{eq:12}
		(\xi', \xi) :=\sum_i {\xi_i'}^* \xi_i.
	\end{equation}
	
	An automorphism 
	$\sigma$ of $\A$ lifts to $\cal E$ 
	by defining a $\A$-module
	morphism $\Sigma:\mathcal{E}\rightarrow\mathcal{E}$ such that
	\begin{eqnarray}
		\label{eq:LiftingMorph}
		\Sigma(\xi a):=\Sigma(\xi)\sigma(a) \quad
		\forall\xi\in\mathcal{E}, a\in \A.
	\end{eqnarray} 
	Explicitly,
	\begin{equation}
		\label{eq:62}
		\Sigma
		\left(\begin{array}{c}
			\xi_1\\\vdots\\\xi_n
		\end{array}\right) := 
		e\left(     \begin{array}{c}
			\sigma(\xi_1)\\\vdots\\\sigma(\xi_n)
		\end{array}\right)=e\sigma(\xi) 
		\qquad \forall \xi=e\xi=
		\begin{pmatrix}
			\xi_1\\ \vdots\\ \xi_n
		\end{pmatrix}
	\end{equation}
	where $\sigma(\xi)$ is a shorthand notation
	for the vector in $\A^n$ with components
	$\sigma(\xi_i)$.
	
	To lift  $\sigma$ to $\cal B$, recall the later is the subalgebra of the $n$-square matrices
	with entries in $\A$, invariant under the conjugate action of $e$, namely
	\begin{equation}
		\B\simeq\mathrm{End}_\A(\mathcal{E})\simeq
		eM_n(\A)e=\left\{b\in M_n(\A) \text{ such that } ebe = b\right\}.\label{eq:65}
	\end{equation}
	It is equipped with an involution: the matrix transpose composed
	with the $\A$-involution of every entry.
	The algebra $\cal B$ acts on $\cal E$ by left multiplication: with $b_i^j\in\A$ the
	components of $b$, one~has
	\begin{equation}
		\label{eq:163}
		b\xi =e
		\begin{pmatrix}
			b_1^j\xi_j\\\vdots\\ b_n^j\xi_j 
		\end{pmatrix}
		\qquad\forall b\in{\cal B},\,
		\xi\in{\cal E}.
	\end{equation}
	The automorphism $\sigma$
	extends to $M_n(\A)$ acting on each entry: for $m\in M_n(\A)$ with entries $m_i^j\in \A$, we define $\sigma(m)$ as the matrix with
	entries $\sigma(m_i^j)$. However this does not define an automorphism
	of $\B$, for
	$b=ebe$ does no guarantee that $\sigma(b)$ equals $e\sigma(b)e$. To
	lift $\sigma$ to an automorphism of $\B$, one should first ensures
	that its lift
	$\Sigma$ 
	to $\A$ commutes with the inverse. By this, one intends that the lift to $\A$ of~$\sigma^{-1}$,
	\begin{equation}
		\label{eq:63}
		\Sigma^{-1}\xi := 
		e\sigma^{-1}(\xi) 
	\end{equation}
	coincides with the inverse of the lift $\Sigma$ \eqref{eq:62}.
	
	\begin{lem}
		\label{lem:lift-autom-its}
		$\Sigma^{-1}$ in \eqref{eq:63}  is the inverse of $\Sigma$
		in \eqref{eq:62}  if, and only if,
		\begin{equation}
			\label{eq:70}
			e \sigma(e)  e = e \quad\text{ and } \quad   e \sigma^{-1}(e)  e = e.
		\end{equation}
		Assuming the  regularity condition \ref{eq:RegularityAutomorph}, this
		is equivalent with
		\begin{equation}
			\label{eq:74}
			e\sigma(e) e = e = e\sigma(e)^*e.
	\end{equation}\end{lem}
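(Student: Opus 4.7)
The plan is to expand both compositions $\Sigma^{-1}\circ\Sigma$ and $\Sigma\circ\Sigma^{-1}$ on a generic element of $\mathcal{E}$, exploiting that $\sigma$ extended entry-wise to $M_n(\A)$ is still a unital algebra homomorphism with inverse $\sigma^{-1}$ (entry-wise). For $\xi=e\xi\in\mathcal{E}$ one has $\Sigma(\xi)=e\sigma(\xi)$, so a direct computation gives
\[
\Sigma^{-1}(\Sigma(\xi))=e\,\sigma^{-1}\bigl(e\sigma(\xi)\bigr)=e\,\sigma^{-1}(e)\,\xi,
\]
and symmetrically $\Sigma(\Sigma^{-1}(\xi))=e\,\sigma(e)\,\xi$. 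These two expressions must equal $\xi$ for every $\xi=e\eta\in\mathcal{E}$.

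Next I would argue that the required equalities $e\sigma^{-1}(e)\cdot e\eta = e\eta$ and $e\sigma(e)\cdot e\eta = e\eta$, holding for all $\eta\in\A^n$, are equivalent to the matrix identities $e\sigma^{-1}(e)e=e$ and $e\sigma(e)e=e$, using $e^2=e$. One direction is immediate; the converse follows by letting $\eta$ range over the canonical basis of $\A^n$, forcing the two matrices, viewed as endomorphisms of $\A^n$, to coincide. This already gives the first equivalence \eqref{eq:70}.

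For the second equivalence I would lift the regularity condition \eqref{eq:RegularityAutomorph} entry-wise to $M_n(\A)$. Since the involution on $M_n(\A)$ is transposition composed with entry-wise $*$, and entry-wise $\sigma$ commutes with transposition, one obtains $\sigma(m^{*})=\sigma^{-1}(m)^{*}$ for every $m\in M_n(\A)$. Specialising to the self-adjoint idempotent $e=e^{*}$ yields $\sigma^{-1}(e)=\sigma(e)^{*}$, so the second condition of \eqref{eq:70} rewrites as $e\sigma(e)^{*}e=e$; combined with $e\sigma(e)e=e$, this is exactly \eqref{eq:74}.

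No step is a genuine obstacle: the argument reduces to a short direct calculation once one is comfortable with the entry-wise lift of $\sigma$ to $M_n(\A)$. The only point demanding a moment's care is checking that this lift remains a $*$-compatible algebra automorphism (in the twisted sense of \eqref{eq:RegularityAutomorph}); both facts follow at once from $\sigma$ being a $\C$-algebra automorphism of $\A$ satisfying \eqref{eq:RegularityAutomorph} and from the explicit form of the involution on $M_n(\A)$.
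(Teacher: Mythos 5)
Your proof is correct and follows essentially the same route as the paper: compute $\Sigma\Sigma^{-1}\xi=e\sigma(e)\xi$ and $\Sigma^{-1}\Sigma\xi=e\sigma^{-1}(e)\xi$ on $\xi=e\eta$, then reduce "identity on $e\A^n$" to the matrix identities \eqref{eq:70}. The only difference is that you also write out the passage from \eqref{eq:70} to \eqref{eq:74} via the entry-wise regularity $\sigma^{-1}(e)=\sigma(e)^*$, a step the paper leaves implicit; your argument for it is correct.
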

	\begin{proof}
		One has
		\begin{equation*}
			\Sigma\left(\Sigma^{-1}\xi\right) = 
			\Sigma\left(
			e\sigma^{-1}(\xi)
			\right)=e\left(
			\sigma(e)\xi
			\right)=e\sigma(e)
			\xi.
			\label{eq:71}
		\end{equation*}
		Hence $\Sigma\Sigma^{-1}$ is the identity if, and only if,
		$  \xi = e\sigma(e)\xi$
		for any $\xi=e\xi$. The set of such $\xi$ is the image of $\A^m$ under the
		projection $e$, so the above condition is equivalent to 
		\begin{equation}
			\label{eq:72}
			e\sigma(e) e\varphi=  e\varphi \quad \forall \varphi\in \A^m,
		\end{equation}
		that is $e=e\sigma(e)e$. Similarly $\Sigma^{-1}\Sigma$ is the identity
		iff $e=e\sigma^{-1}(e)e$.
	\end{proof}
	\begin{remark}
		Conditions \eqref{eq:70} are true  if
		$e=\sigma(e)$ is invariant under the twist, as assumed in
		\cite{landimart:twistgauge}. However, this may not be the only~solution.
	\end{remark}
	
	Assuming that the lift $\Sigma$ to $\A$ of the twisting automorphism $\sigma$
	is invertible in the sense
	of lemma \ref{lem:lift-autom-its}, then one is now able to  define its lift $\sigma'$ to
	$\B$ as 
	\begin{equation}
		\label{eq:17}
		\sigma'(b):=e\sigma(b)e \quad \forall b=ebe\in \B.
	\end{equation} 
	\begin{prop}
		\label{sec:lift-autom-its}
		$\sigma'$ is an automorphism of $\B$, with inverse
		${\sigma'}^{-1}(b)=e\sigma^{-1}(b)e$. If $\sigma$ is regular in the sense
		of \ref{eq:RegularityAutomorph}, then $\sigma'$ is regular as well:
		\begin{equation}
			\label{eq:142}
			\sigma'(b^*)={\sigma'}^{-1}(b)^{*} \quad \forall b\in {\cal B}.
		\end{equation}
	\end{prop}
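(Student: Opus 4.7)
The plan is to verify in sequence that $\sigma'$ sends $\B$ into itself, that it is a unital $\mathbb{C}$-algebra homomorphism, that ${\sigma'}^{-1}(b)=e\sigma^{-1}(b)e$ is its two-sided inverse, and finally that regularity \eqref{eq:142} is inherited from \eqref{eq:RegularityAutomorph}. The image statement is immediate from $e^2=e$: $e(e\sigma(b)e)e=e\sigma(b)e$ lies in $eM_n(\A)e=\B$, and $\mathbb{C}$-linearity is clear from the definition.

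The main obstacle will be multiplicativity. I would first extend $\sigma$ entrywise to $M_n(\A)$, where it becomes a $\mathbb{C}$-algebra automorphism since matrix multiplication is built from sums and products of entries. Because $b\in\B$ satisfies $b=ebe$, the standard corner-algebra identity forces $eb=b=be$, so that $\sigma(b)=\sigma(e)\sigma(b)=\sigma(b)\sigma(e)$. The key auxiliary identity is
\begin{equation*}
\sigma(e)\,e\,\sigma(e)=\sigma(e),
\end{equation*}
obtained by applying $\sigma$ to the second equation $e\sigma^{-1}(e)e=e$ of Lemma \ref{lem:lift-autom-its}. With that in hand the middle $e$ in $\sigma'(b)\sigma'(b')=e\sigma(b)\,e\,\sigma(b')e$ is absorbed into the $\sigma(e)$'s coming from $\sigma(b)\sigma(e)$ and $\sigma(e)\sigma(b')$, producing $e\sigma(b)\sigma(b')e=e\sigma(bb')e=\sigma'(bb')$. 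Unitality $\sigma'(e)=e$ is the first equation of Lemma \ref{lem:lift-autom-its} read directly.

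For the inverse claim I would compute both compositions on an arbitrary $b=ebe$: applying $\sigma^{\pm 1}$ entrywise inside the expressions $e\,\sigma^{\pm 1}(e\sigma^{\mp 1}(b)e)\,e$ yields $(e\sigma^{\pm 1}(e)e)\,b\,(e\sigma^{\pm 1}(e)e)$, where each outer factor collapses to $e$ by one of the two conditions in Lemma \ref{lem:lift-autom-its}, so the composite reduces to $ebe=b$. This gives bijectivity and identifies the inverse as stated.

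Regularity is then quickest. The entrywise extension of $\sigma$ to $M_n(\A)$ satisfies $\sigma(m^*)=\sigma^{-1}(m)^*$ because the involution on $M_n(\A)$ is the conjugate transpose and \eqref{eq:RegularityAutomorph} holds componentwise. Combined with $e=e^*$, this gives
\begin{equation*}
\sigma'(b^*)=e\sigma(b^*)e=e\sigma^{-1}(b)^*e=(e\sigma^{-1}(b)e)^*={\sigma'}^{-1}(b)^*,
\end{equation*}
completing the proof.
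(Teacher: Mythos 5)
Your proposal is correct and follows essentially the same route as the paper: multiplicativity via the identity $\sigma(e)\,e\,\sigma(e)=\sigma(e)$ obtained by applying $\sigma$ to \eqref{eq:70} together with $eb=b=be$, the inverse by direct computation collapsing $e\sigma^{\pm1}(e)e$ to $e$, and regularity from the entrywise statement $\sigma(b^*)=\sigma^{-1}(b)^*$ combined with $e=e^*$. No gaps.
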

	\begin{proof}
		For $a,b\in \B$ one has one has $eb=b$ and $ae=a$ thus
		\begin{align}
			\sigma'(a)\sigma'(b) = e\sigma(a)
			e\sigma(b)e&=e\,\sigma(a)\sigma(e)\,
			e\,\sigma(e)\sigma(b)\,e,\\
			\nonumber
			&=e\,\sigma(a)\sigma(e)\,
			\sigma(e)\sigma(b)\,e =e\sigma(a)\sigma(b)e=e\sigma(ab)e=\sigma'(ab),
			\label{eq:75}
		\end{align}
		where, to get the second line, we used
		$\sigma(e)e\sigma(e)=\sigma(e)^2$
		obtained applying  $\sigma$ on \eqref{eq:70}, then using 
		$\sigma(e)=\sigma(e^2)~=~\sigma(e)^2$. This shows that  $\sigma'$ is an
		automorphism of $\cal B$. That ${\sigma'}^{-1}$ is its inverse comes from
		\begin{equation}
			\label{eq:118}
			\sigma'({\sigma'}^{-1} (b))= e\sigma(e\sigma^{-1}(b)e)e= e \sigma(e) b
			\sigma(e) e = e \sigma(e)e\, b\, e  \sigma(e) e =ebe =b
		\end{equation}
		and a similar result for   ${\sigma'}^{-1}(\sigma'(b))$.
		
		For $\sigma$ regular,  the matrix
		$\sigma(b^*)$ has components $\sigma(b^*)_{ij}=\sigma(b_{ji}^*)=
		(\sigma^{-1}(b_{ji}))^*$, which is the component $ij$ of
		$(\sigma^{-1}(b))^*$. Hence
		\begin{equation*}
			\label{eq:116}
			\sigma'(b^*)= e\sigma(b^*)e= e (\sigma^{-1}(b))^*e=(e \sigma^{-1}(b)e)^*=\left({\sigma'}^{-1}(b)\right)^*.
		\end{equation*}
		
		\vspace{-.8truecm}\end{proof}
	
	\subsection{Twisted hermitian connection}
	\label{sec:twist-herm-conn}
	
	The  connections $\nabla$ relevant for inner fluctuations
	are the \emph{hermitian} ones, that is those
	compatible with the inner product of $\cal E$ in that \cite[Chap.6, Def.10]{Connes:1994kx}
	\begin{equation}
		\label{eq:128}
		(\xi',\nabla\xi)  - (\nabla \xi', \xi) = [D, (\xi',\xi)]
	\end{equation}
	where
	\begin{equation}
		(\nabla \xi,\xi')= \xi_{(1)}^*(\xi_{(0)},\xi'),\qquad  (\xi,\nabla\xi')=
		(\xi,\xi'_{(0)})\xi'_{(1)}.\label{eq:134}
	\end{equation}
	As explained in \cite{Connes:1996fu}, the minus sign in \eqref{eq:128} is because $[D, a^*]= -[D, a]^*$, and gua-rantees that any
	such connection is the sum of the Grassmann
	connection with  a selfadjoint element~of~$\Omega^1_D(\A)$.
	
	\smallskip
	For a twisted spectral triple $(\A, \HH, D), \sigma$, the derivations $\delta$ is not
	anti-hermitian but rather satisfy~\eqref{eq:82bis}. In addition, one needs to modify \eqref{eq:134} to take into
	account the bimodule structure of $\Omega$, defining
	\begin{equation}
		\label{eq:130}
		(\nabla \xi',\xi):= {\xi'}_{(1)}^{\, *}\cdot (\xi'_{(0)},\xi)\quad\quad\quad  (\xi',\nabla\xi):=
		(\xi',\xi_{(0)})\cdot \xi_{(1)},\quad \forall \xi, \xi'\in {\cal E},
	\end{equation}
	where the involution  on $\Omega$ follows from the
	one of $\HH$, namely $(a\delta(b))^*=\delta(b)^*a^*$. Taking into
	account the bimodule structure, this means
	\begin{align}
		\label{eq:132}
		(a\cdot\omega)^*&= (\sigma(a)\omega)^*=\omega^*\sigma(a)^*=
		\omega^*\cdot \sigma(a)^*,\\
		\label{eq:132bis}
		(\omega\cdot a)^*&= (\omega a)^*=a^*\omega^*=\sigma^{-1}(a^*)\cdot\omega^*=\sigma(a)^*\cdot\omega^*.
	\end{align}
	Notice these laws are compatible since $\sigma\left(\sigma\left(a\right)^*\right)^*=(\sigma^{-1}\left(\sigma(a)\right)^*)^*=(a^*)^*=a.$

	We look for a definition of a $\Omega$-hermitian
	connection which guarantees the same properties as in the non-twisted
	case, namely that any such 
	connection is the sum of the Grassmann connection 
	\begin{equation}
		\label{eq:77}
		\nabla_0\,\xi :=\begin{pmatrix}
			e_1^j\\ \vdots\\ e_n^j
		\end{pmatrix}\otimes \delta(\xi_j) \simeq e\cdot\begin{pmatrix}
			\delta(\xi_1)\\\vdots\\ \delta(\xi_n)
		\end{pmatrix}=e\cdot\delta(\xi) \quad\quad \forall \xi=e\xi= \begin{pmatrix}
			\xi_1\\\vdots\\ \xi_n
		\end{pmatrix}\in{\cal E},
	\end{equation}
	with a
	selfadjoint element of $M_n(\Omega)$.  Note that the second equality in
	\eqref{eq:77} is the
	identification of ${\cal E}\otimes \Omega$ with
	$e\cdot\Omega^n$ (beware  the matrix $e$ acts by 
	the module law \ref{eq:BimoduleOneForms}:  $e\cdot$ actually is the
	usual matrix multiplication by $\sigma(e)$), while the last one is a
	shorthand notation with $\delta(\xi)$ the vector of
	$\Omega^n$ with components $\delta(\xi_i)\in\Omega$.
	
	\begin{defn}
		\label{sec:twist-herm-conn-1}
		An $\Omega$-connection $\nabla$ on an hermitian
		right $\A$-module $\cal E$, with lift $\Sigma$ invertible in the sense of lemma
		\ref{lem:lift-autom-its}, is hermitian if 
		\begin{equation}
			\label{eq:129}
			(\xi',\nabla\xi)   - (\nabla (\Sigma^{-1}\xi'), \xi) =
			\delta\left((\xi',\xi)\right) \quad \forall \xi, \xi'\in {\cal E}.
		\end{equation}
	\end{defn}
	
	\noindent As long as the idempotent $e$ is twist-invariant or twist-commutes
	componentwise with $D$, 
	\begin{equation}
		\label{eq:119}
		\sigma(e)=e \quad\text{ or }\quad \delta(e_i^j)=0 \quad\forall i,j=1,...,n, 
	\end{equation} 
	then definition \ref{sec:twist-herm-conn-1} is precisely the one guaranteeing similar properties as in
	the non-twisted case.
	
	\begin{lem}
		\label{sec:twist-herm-conn-2}
		Assuming \eqref{eq:119}, the Grassmann connection $\nabla_0$  is
		hermitian. Furthermore, any hermitian connection is of the form
		$\nabla= \nabla_0 + M$ where $M$  is a selfadjoint
		element of $M_n\left(\Omega\right)$.
	\end{lem}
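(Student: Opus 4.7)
My plan is to handle the two claims in order.

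For the hermiticity of $\nabla_0$, I would verify \eqref{eq:129} by direct expansion. On the right-hand side, the twisted Leibniz rule \eqref{eq:TwistedLeibniz} and regularity \eqref{eq:RegularityAutomorph} give
\[
\delta\big((\xi',\xi)\big) = \sum_i \Big(\delta\big((\xi'_i)^*\big)\,\xi_i + \sigma^{-1}(\xi'_i)^*\,\delta(\xi_i)\Big).
\]
On the left, using $\xi = e\xi$ together with $(e_i^j)^* = e_j^i$ yields $(\xi', e^j) = (\xi'_j)^*$, so
\[
(\xi', \nabla_0 \xi) = (\xi'_j)^* \cdot \delta(\xi_j) = \sigma^{-1}(\xi'_j)^*\,\delta(\xi_j),
\]
which already reproduces the second summand of $\delta((\xi',\xi))$. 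It then suffices to show $(\nabla_0(\Sigma^{-1}\xi'), \xi) = -\sum_i \delta\big((\xi'_i)^*\big)\,\xi_i$. Expanding $\delta\big((\Sigma^{-1}\xi')_j\big) = \delta(e_j^k \sigma^{-1}(\xi'_k))$ via Leibniz produces two pieces, one carrying $\delta(e_j^k)$. Assumption \eqref{eq:119} eliminates the obstruction: either $\sigma(e)=e$ collapses $\Sigma^{-1}$ to componentwise $\sigma^{-1}$, or $\delta(e_i^j)=0$ kills the offending term directly. Together with the identity $\delta(\sigma^{-1}(a))^* = -\delta(a^*)$ from \eqref{eq:82bis} and the invertibility relation $e\sigma^{\pm 1}(e)e = e$ of Lemma \ref{lem:lift-autom-its}, the two sides agree.

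For the second claim, set $M := \nabla - \nabla_0$. Since both connections satisfy the twisted Leibniz rule \eqref{eq:TwistedLeibnizConn} with the same derivation $\delta$, the difference $M : \mathcal{E} \to \mathcal{E}\otimes_\A \Omega$ is right $\A$-linear. The finite projectivity $\mathcal{E} = e\A^n$ then represents $M$ as left multiplication by a matrix in $M_n(\Omega)$, compatible with the projector $e$. Subtracting the hermiticity identity \eqref{eq:129} for $\nabla_0$ from that for $\nabla$ yields the reduced equation $(\xi', M\xi) - (M(\Sigma^{-1}\xi'), \xi) = 0$. Unpacking this in components with \eqref{eq:130} and the involution rules \eqref{eq:132}, \eqref{eq:132bis} forces $(M_i^j)^* = M_j^i$, i.e. $M = M^*$ in $M_n(\Omega)$.

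The principal obstacle lies in Part 1: carefully distinguishing the bimodule action $a \cdot \omega = \sigma(a)\omega$ from plain right multiplication $\omega a$, tracking when adjoints flip $\sigma$ into $\sigma^{-1}$, and exhibiting cleanly how either clause of \eqref{eq:119} removes the term $\delta(e_j^k)\sigma^{-1}(\xi'_k)$ that otherwise obstructs hermiticity. Part 2 is comparatively routine once the matrix realisation of $M$ is in hand, although a brief check that this realisation lands in the image of the projector $\sigma(e)$ on $\Omega^n$ is warranted.
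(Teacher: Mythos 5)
Your proposal is correct and follows essentially the same route as the paper: direct verification of \eqref{eq:129} for $\nabla_0$ using the twisted Leibniz rule, the identity \eqref{eq:82bis}, and the two alternatives in \eqref{eq:119} (the only detail left implicit is absorbing the surviving factor $e_j^k\cdot$ across the tensor product over $\A$, as the paper does), followed by the observation that $\nabla-\nabla_0$ is $\A$-linear, hence a matrix in $M_n(\Omega)$, whose selfadjointness drops out of subtracting the two hermiticity identities via \eqref{eq:132bis}. No substantive differences from the paper's proof.
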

	\begin{proof}
		By \eqref{eq:63}, $\Sigma^{-1}\xi'$ 
		has components 
		$S'_j=e_j^k\sigma^{-1}(\xi'_k)$ such that $e_i^j S'_j=S'_i$. Therefore
		\eqref{eq:77}  yields
		\begin{equation}
			\nabla_0(\Sigma^{-1}\xi')= (e_i^k)\otimes \delta(S'_k)
			\label{eq:186}
		\end{equation}
		where  $(e_i^k)$ denotes the element $\xi^k\in\cal E$ with
		components $(\xi^k)_i=e_i^k$. If  $e=\sigma(e)$, then
		$\delta(S'_j)=\delta(\sigma^{-1}(e_j^k\xi'_k))=\delta(\sigma^{-1}(\xi'_j))$. Otherwise
		$e$ twist commutes with $D$ so that $\delta(S'_j)~=~e_j^k\cdot\delta(\sigma^{-1}(\xi'_k))$. In any case,  
		\begin{equation}
			\nabla_0(\Sigma^{-1}\xi') =(e_i^j)\otimes \delta\left(\sigma^{-1}(\xi'_j)\right)=e\cdot \delta\left(\sigma^{-1}(\xi')\right).
			\label{eq:184}
		\end{equation}
		
		The   product \eqref{eq:134} yields
		\begin{align*}
			& (\xi', \nabla_0\xi) =(\xi',  (e_i^j))\cdot \delta(\xi_j)=\left(\sum_i {\xi'_i}^*
			e_i^j\right)\cdot \delta(\xi_j)=\left(\sum_i (e_j^i\xi'_i)^*\right)
			\cdot \delta(\xi_j)=\sum_j{\xi'_j}^*\cdot \delta(\xi_j),\\
			&(\nabla_0(\Sigma^{-1}\xi'), \xi) =\delta(\sigma^{-1}\xi'_j)^*\cdot \left(
			(e_i^j),\xi\right)=\delta(\sigma^{-1}\xi'_j)^*\cdot \xi_j=-\sum_j\delta({\xi'_j}^*)\cdot \xi_j
		\end{align*}
		using \eqref{eq:82bis} for the last equality. Since
		$\delta\left((\xi',\xi)\right)=\delta\left(\sum_i {\xi'_i}^*\xi_i\right)=\sum_i
		{\xi'_i}^*\cdot\delta(\xi_i) +\delta{\xi'_i}^*\cdot\xi_i$, one
		has that $\nabla_0$ satisfies \eqref{eq:129}, hence is hermitian.
		
		From the Leibniz rule \eqref{eq:TwistedLeibnizConn}, the
		difference $\tilde\nabla:=\nabla-\nabla_0$ of the two connections is
		$\A$-linear,
		\begin{equation}
			\label{eq:133}
			\tilde\nabla (\xi a) =(\nabla_0\,\xi) a - (\nabla\xi) a =\tilde \nabla(\xi)a,
		\end{equation}
		meaning that $\tilde\nabla$ is an $\A$-linear
		endomorphism from $\cal E$ to ${\mathcal E}\otimes \Omega$, that
		is an element of $M_n(\A)\otimes_\A\Omega\simeq M_n(\Omega)$ invariant by
		the conjugate action of $e$. More precisely, there exists a matrix $M\in
		M_n(\Omega)$ with entries $m_i^j\in\Omega$, such that   $e\cdot M\cdot e =M$  and
		\begin{align}
			\label{eq:135}
			&\tilde\nabla \xi = (e_i^j)\otimes  (m_j^k\cdot\xi_k) \simeq
			M\cdot\xi,\\
			&\tilde\nabla(\Sigma^{-1}\xi')=
			(e_i^j)\otimes  (m_j^k\cdot
			\sigma^{-1}(\xi'_k))\simeq M\cdot \Sigma^{-1}(\xi').
		\end{align}
		Being both $\nabla_0$ and $\nabla$ hermitian, one has
		\begin{align}
			0&=  \left(\xi',\tilde\nabla\xi\right)
			-\left(\tilde\nabla(\Sigma^{-1}\xi'),\xi\right),\\ 
			\label{eq:136}
			&=\sum_j \,{\xi'}^*_j \cdot (m_j^k\cdot \xi_k)- (m_j^k\cdot
			\sigma^{-1}(\xi'_k))^*\cdot \xi_j
			= \sum_{j,k} {\xi'}^*_j \cdot m_j^k\cdot \xi_k-  {\xi'_j}^*\cdot {m_k^j}^*\cdot \xi_k
		\end{align}
		where, for the last equality, we use  \eqref{eq:132bis} as
		$(m_j^k\cdot
		\sigma^{-1}(\xi'_k))^*= {\xi'_k}^*\cdot
		{m_j^k}^*$, then exchange the indices $j$ and $k$. Being \eqref{eq:136} true for
		any $\xi, \xi'$, one obtains $m_j^k=(m_k^j)^*$, meaning the matrix $M$
		is selfadjoint.
	\end{proof}

	All the results of \S \ref{subsec:lift} and \ref{sec:twist-herm-conn}
	makes sense with minimal modifications for left $\A$-module and
	$\Omega^\circ$-connections, as shown in the
	appendix \ref{sec:twist-herm-conn-4}. This is important later, in order
	to export a \emph{real} twisted geometry to a Morita equivalent
	algebra.
	
	\begin{remark}
		\label{sec:twist-herm-conn-5}
		Here, we have absorbed the twist in the bimodule laws
		\eqref{eq:BimoduleOneForms}, and modi-fied
		accordingly the definition of hermitian connections. An alternative -
		which should be equivalent - consists in letting the module law
		untouched and twist the connection, as done in \cite{Ponge:2016aa}.
	\end{remark}

	\section{Twisted fluctuation with a non-linear term}
	\label{sec:twist-fluct-with-2}
	
	Inner fluctuations follow from self-Morita equivalence and have been adapted to the twisted case in
	\cite{landimart:twistgauge}. We extend these results to a
	wider class of Morita equivalence, namely 
	the one implemented by a
	bimodule which satisfies \eqref{eq:74},\eqref{eq:119}, following what has been done in
	\cite{chamconnessuijlekom:innerfluc} for the non-twisted case. We
	begin with twisted spectral triples in \S\ref{sec:morita-equivalence},
	then take into account the real structure in
	\S\ref{sec:twist-fluct-real}. In \S\ref{sec:twist-fluct-with}  we
	go back to self-Morita equivalence and show how the removal of the first-order
	condition yields an extra non-linear term in the fluctuation, similar as
	the one worked out~in~\cite{chamconnessuijlekom:innerfluc}. 
	
	\subsection{Morita equivalent twisted geometries}
	\label{sec:morita-equivalence}
	
	We first recall inner fluctuations of a usual (i.e. non
	twisted) geometry $ (\A,\HH, D)$. \linebreak
	Take  $\B=\text{End}_\A(\cal E)$ for a
	hermitian right $\A$-module $\cal E$ with inner product  \eqref{eq:12}. Then
	\begin{equation}
		{\cal H}_R:=\mathcal{E}\otimes_{\A}\mathcal{H}
		\label{eq:07}
	\end{equation}
	is a (pre)-Hilbert space for the product (denoting
	$\langle\cdot,\cdot\rangle_{\mathcal{H}}$
	the inner product on~$\mathcal{H}$):
	\begin{equation}
		\label{eq:122}
		\langle\xi'\otimes\psi',\xi\otimes\psi\rangle:=\langle\psi',(\xi',\xi)\psi\rangle_{\mathcal{H}}.
	\end{equation}
	Its completion, still denoted $\HH_R$,
	carries both a representation of the
	algebra $\cal B$
	\begin{equation}
		\label{eq:6}
		\pi_R(b)(\xi\otimes\psi):=b\xi\otimes\psi \qquad \forall
		b\in \B,\, \xi\in\mathcal{E}, \, \psi\in\mathcal{H},
	\end{equation}
	and an action of the operator $D$ as 
	\begin{equation}
		\label{eq:15}
		(\I\otimes_{\nabla}D)(\xi\otimes\psi):=\xi\otimes D\psi+(\nabla\xi)\psi,
	\end{equation}
	with $\I$ the identity endomorphism on $\cal E$, $\nabla$ a $\Omega^1_D(\A)$-valued connection
	on $\mathcal{E}$, and
	\begin{equation}
		\label{eq:177}
		(\nabla \xi)\psi:=\xi_{(0)}\otimes\xi_{(1)}\psi
	\end{equation}
	(using \eqref{eq:7})  where the action of $\xi_{(1)}\in\Omega^1_D(\A)$  on $\HH$ follows from the representation of
	$a_j$ and $[D, b_j]$ as bounded operators). 
	%A connection is required in \eqref{eq:15} for  $1\otimes D$ is not compatible with the tensor product.
	Then $(B,
	{\mathcal H}_R, 1\otimes_{\nabla}D)$ is a spectral triple
	(\cite[\S 10.8]{marcolconnes:noncommgeo}, see also \cite{chamconnessuijlekom:innerfluc}).
	
	The construction is similar for a twisted spectral triple 
	$(\A,\HH, D),
	\sigma$,
	provided $\sigma$ satisfies the 
	compatibility conditions \eqref{eq:119} with the idempotent and its
	lift $\Sigma$ to $\cal E$ defined \eqref{eq:17} is invertible. 
	Make $\cal B$ act on $\HH_R$ as in \eqref{eq:6},
	but instead of \eqref{eq:15} consider an $\Omega$-connection
	$\nabla$ and define
	\begin{equation}
		\label{eq:right_Dirac}
		{D}_R:= 	(\Sigma\otimes \mathbb{I})\circ(\I\otimes_{\nabla}D).
	\end{equation}
	This is a well
	densely-defined operator on $\HH_R$~\cite[Prop.3.5]{landimart:twistgauge}.
	\begin{prop}
		\label{sec:morita-equiv-twist}
		Assume the lift $\Sigma$ is
		invertible in the sense of  lemma~\ref{lem:lift-autom-its} and that
		\eqref{eq:119} holds. Then 
		$
		\left[{D}_R,
		\pi_R(b)\right]_{\sigma'}$
		with $\sigma'$ the lift \eqref{eq:17} of $\sigma$ to $\cal B$ is bounded for
		any $b\in B$.  
	\end{prop}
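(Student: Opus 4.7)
The plan is to compute the twisted commutator directly on a simple tensor $\xi\otimes\psi\in\mathcal{E}\otimes_\A\HH$, separate the potentially unbounded $D\psi$-contribution from the $\nabla$-part, and verify that the former cancels while the latter is bounded. Using \eqref{eq:right_Dirac}, \eqref{eq:15}, \eqref{eq:177} and \eqref{eq:6}, I first expand
\begin{align*}
D_R\,\pi_R(b)(\xi\otimes\psi) &= \Sigma(b\xi)\otimes D\psi + (\Sigma\otimes\I)\,\nabla(b\xi)\,\psi,\\
\pi_R(\sigma'(b))\,D_R(\xi\otimes\psi) &= \sigma'(b)\Sigma(\xi)\otimes D\psi + \sigma'(b)\,(\Sigma\otimes\I)(\nabla\xi)\,\psi.
\end{align*}
The $D\psi$-terms, which are the only candidates to spoil boundedness, cancel precisely when $\Sigma$ intertwines the left $\B$-action with $\sigma'$, i.e.\ $\Sigma(b\xi)=\sigma'(b)\Sigma(\xi)$.

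To establish this intertwining, I use $b=ebe$ and $\xi=e\xi$ seen in $\A^n$ to rewrite $\Sigma(b\xi)=e\sigma(b)\sigma(\xi)$ and $\sigma'(b)\Sigma(\xi)=e\sigma(b)e\sigma(\xi)$. Substituting $\sigma(b)=\sigma(e)\sigma(b)\sigma(e)$ and $\sigma(\xi)=\sigma(e)\sigma(\xi')$ for some $\xi'\in\A^n$, then using $\sigma(e)^2=\sigma(e)$, both expressions collapse to $e\sigma(e)\sigma(b)\sigma(e)\sigma(\xi')$ as soon as $\sigma(e)\,e\,\sigma(e)=\sigma(e)$. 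The latter identity is obtained by applying $\sigma$ to $e\sigma^{-1}(e)\,e=e$ from lemma \ref{lem:lift-autom-its}; thus the invertibility condition \eqref{eq:70} is exactly what makes the $D\psi$-piece vanish.

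It then remains to bound $(\Sigma\otimes\I)\nabla(b\xi)\psi-\sigma'(b)(\Sigma\otimes\I)(\nabla\xi)\psi$. Invoking lemma \ref{sec:twist-herm-conn-2}, which applies thanks to \eqref{eq:119}, I decompose $\nabla=\nabla_0+M$ with $M\in M_n(\Omega)$; the $M$-contribution is manifestly bounded since its entries act as bounded one-forms on $\HH$. For the Grassmann piece, the twisted Leibniz rule \eqref{eq:TwistedLeibniz} gives $\delta((b\xi)_j)=\delta(b_j^k)\,\xi_k+\sigma(b_j^k)\,\delta(\xi_k)$: the first summand is bounded because $\delta(b_j^k)=[D,b_j^k]_\sigma$ is bounded by definition \ref{defn:twisted_spectral_triples}, and the second reassembles, using \eqref{eq:119} to pass $\sigma$ through the idempotent components, into $\sigma'(b)\,(\Sigma\otimes\I)\nabla_0\xi$, exactly cancelling the subtracted term. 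The main obstacle I anticipate is the first step, namely the $D\psi$-cancellation: it is the unique point where the technical invertibility hypothesis \eqref{eq:70} is essential, and without it an unbounded contribution would survive the twisted commutator.
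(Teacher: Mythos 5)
Your argument is correct, and at its core it is the same proof as the paper's: the decisive inputs are identical (the identity $\sigma(e)e\sigma(e)=\sigma(e)$ coming from \eqref{eq:70}, the splitting $\nabla=\nabla_0+M$ with $M\in M_n(\Omega)$, the twisted Leibniz rule, and the boundedness of the twisted commutators $[D,b_i^j]_\sigma$). The difference is bookkeeping: you work on general simple tensors and isolate the intertwining $\Sigma(b\xi)=\sigma'(b)\Sigma(\xi)$ to cancel the $D\psi$-part, whereas the paper identifies $\HH_R\simeq e\HH^n$ via $\Psi=(e_i^j)\otimes\psi_j$, reduces $D_R$ to $e(\D+M)$ and gets the commutator directly as $e[\D+M,b]_\sigma$. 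The paper's matrix picture buys one thing your write-up leaves implicit: uniform boundedness on the completed $\HH_R$, since ``each leftover piece is a bounded operator applied to $\psi$'' only yields an operator bound after you note that $\xi_k\psi$ are precisely the components of $\xi\otimes\psi$ under the isometry $\HH_R\simeq e\HH^n$, so the residual terms assemble into a fixed finite matrix of bounded operators ($\delta(b_j^k)$ and the $M$-entries) compressed by $e$. Two small corrections to your attributions: the reassembly of the $\sigma(b_j^k)\delta(\xi_k)$ terms into $\sigma'(b)(\Sigma\otimes\I)\nabla_0\xi$ again uses only \eqref{eq:70} (via $\sigma(b)=\sigma(e)\sigma(b)\sigma(e)$ and $\sigma(e)e\sigma(e)=\sigma(e)$), not \eqref{eq:119}; and you do not need the hermiticity statement of lemma \ref{sec:twist-herm-conn-2} at all here, only the $\A$-linearity of $\nabla-\nabla_0$, since the connection in this proposition is not assumed hermitian (that assumption enters only in Prop.~\ref{prop:rightMorita} for selfadjointness of $D_R$).
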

	\begin{proof}
		With $\xi^p=e\xi^p$ and $\psi_p$  generic elements of ${\cal E}$ and
		$\HH$, a generic~element of $\HH_R$ is 
		\begin{align}
			\label{eq:66prebis}
			\Psi=\xi^p\otimes\psi_p &= 
			(e_i^j)\xi_j^p\otimes\psi_p =(e_i^j)\otimes \psi_j\simeq \begin{pmatrix}
				\psi_1\\\vdots\\ \psi_n
			\end{pmatrix}=e\Psi
		\end{align}
		where  $\xi_i^p =
		e_i^j\xi_j^p\in\A$ are the components of $\xi^p$ 
		and 
		$\psi_j:=\xi^p_j\psi_p\in\HH$.
		The former last equality is the identification ${\cal E}\otimes_\A
		\HH\sim e\HH^n$. 
		Denoting $\tilde\nabla$ the difference of $\nabla$ with the~Grassmann
		connection \eqref{eq:77}, then for
		$\Psi$ with components $\psi_j$ in $\mathrm{Dom}(D)$, eq.\eqref{eq:15} yields 
		\begin{equation}
			\label{eq:45bis}
			(\I\otimes_{\nabla} D)\Psi=
			(e_i^j)\otimes D\psi_j+\nabla_0
			(e_i^j)
			\psi_j + \tilde\nabla
			(e_i^j)
			\psi_j.
		\end{equation}
		By \eqref{eq:77}, $\nabla_0
		(e_i^j) = (e_i^k)\otimes\delta(e_k^j)$ is zero in case $e$
		twist-commutes with $D$. 
		Otherwise by \eqref{eq:177}
		\begin{align}
			\label{eq:69bis}
			(\nabla_0
			(e_i^j))\psi_j&=  
			(e_i^k)\otimes\delta(e_k^j)
			\psi_j\simeq e\delta(e)\Psi
		\end{align}
		with $\delta(e)\in M_n(\Omega)$ with components
		$\delta(e)_i^j:=\delta(e_i^j)$. 
		Being $e$ twist-invariant,
		\eqref{eq:TwistedLeibniz} gives
		$\delta(e_i^j)=  	\delta\left(e_i^ke_k^j\right)=\sigma(e_i^k)\delta(e_k^j)+\delta(e_i^k)e_k^j=e_i^k\delta(e_k^j)+\delta(e_i^k)e_k^j$,
		that is
		$\delta(e)~=~e\delta(e)~+~\delta(e)e$.
		Multiplying on the right by $e$ shows that $e\delta(e)e=0$, that is 
		~\eqref{eq:69bis} is zero.
		So in any case, the second term in the r.h.s. of \eqref{eq:45bis}
		vanishes. 
		
		The third term is, by \eqref{eq:135},
		\begin{equation}
			\label{eq:38}
			(\tilde\nabla (e_i^j))\psi_j =(e_i^k) \otimes (m_k^l\cdot e^j_l)\psi_j\simeq eM\Psi
		\end{equation}
		Applying $\Sigma\otimes\mathbb{I}$ to \eqref{eq:45bis} (with $\D\Psi\in e\HH^n$  the
		action of $D$ on each components of $\Psi$) one~gets
		\begin{equation}
			\label{eq:46}
			{D}_R\Psi
			=e(\sigma(e_i^j))\otimes D\psi_j + e(\sigma(e_i^k)) \otimes
			m_k^j\psi_j\simeq e\sigma(e) \left( \D+ M\right)\Psi.
		\end{equation}
		If $e$ is twist-invariant, this becomes
		\begin{equation}
			\label{eq:46bis}
			{D}_R\,
			\Psi
			=(e^k_i)\otimes\left( D\psi_k + m_k^l\psi_l\right) \simeq e     (\D+M)\Psi.
		\end{equation} 
		The same is true if $e$ twist-commutes with $\D$ since then
		$e\sigma(e)\D\Psi= e\D e\Psi=e\D\Psi$ whereas $\sigma(e)M=e.M=M$ by
		definition of $M$.  
		
		Consider now $b=ebe$ in $\B$ with components
		$b_i^j\in \A$. From \eqref{eq:6} and \eqref{eq:66prebis} one has
		\begin{equation}
			\pi_R(b)\Psi = b\xi^p \otimes\psi_p = (e^j_i)\otimes
			b_j^k\xi^p_k\psi_p=(e^j_i)\otimes
			b_j^k\psi_k \simeq b\Psi\label{eq:18}
		\end{equation}
		where  we use  $((b\xi^p)_i)=(e_i^jb_j^k\xi^p_k)= (e_i^j) b_j^k\xi^p_k$. Thus \eqref{eq:46bis} gives
		\begin{align*}
			{D}_R\pi_R(b)\Psi
			=	e(\D+M) b\Psi,\quad
			\pi_R(\sigma'(b))	{D}_R\Psi=
			e\sigma(b)e \sigma(e)(\D+M)\Psi=	e \sigma(b)(\D+M)\Psi,
		\end{align*}
		where for the second equation we used \eqref{eq:17} together with
		\eqref{eq:46}, then \eqref{eq:70}  as
		$\sigma(b) e \sigma(e)=\sigma(e)\sigma(b)\sigma(e)e\sigma(e)
		=\sigma(e) \sigma(b)=\sigma(b)$.
		Hence
		\begin{equation}
			\label{eq:3}
			\left[	{D}_R,\pi_R(b)\right]_{\sigma'}\Psi\simeq e[\D+M,b]_\sigma\Psi.
		\end{equation}	
		Since $[\D,b]_\sigma$ is a matrix with components $[D,b_i^j]_\sigma$,
		bounded by hypothesis, while $[M,b]_\sigma$ is bounded being both
		$M$ and $b$ bounded, then \eqref{eq:3} is bounded.
	\end{proof}
	
	Proposition \ref{sec:morita-equiv-twist} is not sufficient to build a twisted spectral triple, for
	there is no guarantee that
	${D}_R$ is a selfadjoint. This happens however if one restricts
	to hermitian connections.
	
	\begin{prop}
		\label{prop:rightMorita}
		In the conditions of
		Prop. \ref{sec:morita-equiv-twist} and with $\nabla$ hermitian in the
		sense of  Def. \ref{sec:twist-herm-conn-1},
		then  $(\B, \HH_R,	{D}_R), \sigma'$ with
		$\B$ acting on $\HH_R$ as in \eqref{eq:6} is a twisted spectral triple.
	\end{prop}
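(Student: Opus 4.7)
The plan is to check each clause of Definition \ref{defn:twisted_spectral_triples} in turn, leveraging the results already established. The boundedness of the twisted commutator $[D_R,\pi_R(b)]_{\sigma'}$ for every $b\in\B$ is exactly the content of Proposition \ref{sec:morita-equiv-twist}, and the fact that $\sigma'$ is a regular automorphism of $\B$ is Proposition \ref{sec:lift-autom-its}. What remains is: (i) that $\pi_R$ is an involutive faithful representation of $\B$ on $\HH_R$, (ii) that $D_R$ is densely defined and selfadjoint, and (iii) that $D_R$ has compact resolvent.

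First I would deal with (i): the formula \eqref{eq:6} defines an algebra morphism because the left $\B$-action on $\cal E$ given by \eqref{eq:163} commutes with the right $\A$-action used to form the balanced tensor product, and involutivity follows from the fact that the $\A$-valued inner product \eqref{eq:12} satisfies $(b\xi',\xi)=(\xi',b^*\xi)$, so that $\langle \pi_R(b)\Phi,\Psi\rangle=\langle \Phi,\pi_R(b^*)\Psi\rangle$ by \eqref{eq:122}. Faithfulness comes from the faithfulness of $\pi$ on $\HH$ together with fullness of $\cal E$.

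For (ii), which is the core of the proposition, I would use the explicit identification ${\cal E}\otimes_\A\HH\simeq e\HH^n$ derived in the proof of Proposition \ref{sec:morita-equiv-twist}, under which formula \eqref{eq:46bis} gives $D_R\Psi\simeq e(\slashed D+M)\Psi$, where $\slashed D$ is the componentwise action of $D$ and $M\in M_n(\Omega)$ is the matrix representing $\tilde\nabla=\nabla-\nabla_0$. Since $\nabla$ is hermitian in the sense of Definition \ref{sec:twist-herm-conn-1}, Lemma \ref{sec:twist-herm-conn-2} tells us $M$ is selfadjoint, i.e. $m_j^k=(m_k^j)^*$ in $\Omega$, hence also as bounded operators on $\HH$. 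On the core $e\cdot(\mathrm{Dom}\,D)^n\subset e\HH^n$, symmetry of $D_R$ then reduces to two componentwise computations: $\langle \slashed D\Psi',\Psi\rangle=\sum_i\langle D\psi'_i,\psi_i\rangle_\HH=\langle\Psi',\slashed D\Psi\rangle$ by selfadjointness of $D$, and $\langle M\Psi',\Psi\rangle=\sum_{i,j}\langle m_i^j\psi'_j,\psi_i\rangle_\HH=\sum_{i,j}\langle\psi'_j,m_j^i\psi_i\rangle_\HH=\langle\Psi',M\Psi\rangle$, absorbing the projector $e$ using $e^*=e$ and $e\Psi=\Psi$. To upgrade symmetry to selfadjointness I would argue that $M$ extends to a bounded selfadjoint operator on $\HH_R$, while $e\slashed D$ (restricted to its natural domain in $e\HH^n$) is selfadjoint because $\slashed D$ is selfadjoint on $\HH^n$ and commutes, up to the bounded term $e\delta(e)$ which we saw vanishes against $e$-projected vectors, with the projection $e$ — so the Kato–Rellich theorem applied to the bounded perturbation $M$ of the selfadjoint operator $e\slashed D$ yields selfadjointness of $D_R$.

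For (iii), compactness of the resolvent of $D_R$ is inherited from $D$: the componentwise operator $\slashed D$ on $\HH^n$ has compact resolvent because $D$ does, the projection $e$ preserves this property when restricting to $e\HH^n$, and adding the bounded operator $M$ preserves compactness of the resolvent by the standard second-resolvent identity argument. I expect the main obstacle to be the selfadjointness step in (ii): the twisted bimodule action \eqref{eq:BimoduleOneForms} means that the naïve formula $e\cdot M$ is $\sigma(e)M$ rather than $eM$, and one must carefully verify, separately in the two cases of \eqref{eq:119}, that the simplification leading to $D_R\Psi\simeq e(\slashed D+M)\Psi$ is compatible with the inner product on $\HH_R$ — i.e. that the twist does not obstruct the symmetry computation above. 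Both cases reduce, via $e\sigma(e)e=e$ and either $\sigma(e)=e$ or $\delta(e)=0$, to the clean form already used in \eqref{eq:46bis}, at which point the computation proceeds as in the untwisted case of \cite{chamconnessuijlekom:innerfluc}.
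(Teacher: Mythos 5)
Your overall plan is the same as the paper's: reduce, via Prop.~\ref{sec:lift-autom-its} and Prop.~\ref{sec:morita-equiv-twist}, to selfadjointness and compact resolvent of $D_R$; prove symmetry componentwise on $e\HH^n$ using selfadjointness of $D$ and of the matrix $M$ supplied by Lemma~\ref{sec:twist-herm-conn-2}; then treat $M$ as a bounded selfadjoint perturbation, and get compact resolvent from the Grassmann case plus that perturbation. The paper upgrades symmetry to selfadjointness by the range criterion (Reed--Simon Thm.~VIII.3) applied to $\D+M$, whereas you compress first and invoke Kato--Rellich on $e\D$; these are close in spirit, and your extra check that $\pi_R$ is an involutive representation is harmless.

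The genuine soft spot is your justification that $e\D$ restricted to $e\HH^n$ is selfadjoint. You say $\D$ ``commutes, up to the bounded term $e\delta(e)$, with the projection $e$'', but $\delta(e)=De-\sigma(e)D$ is the \emph{twisted} commutator, while the compression/Kato--Rellich argument (writing $e\D e+(\I-e)\D(\I-e)=\D+\text{bounded}$, which commutes with $e$) needs the \emph{ordinary} commutator $[\D,e]$ to be bounded. In the first case of \eqref{eq:119}, $\sigma(e)=e$, the two commutators coincide and your argument goes through. In the second case, $\delta(e)=0$ with $\sigma(e)\neq e$, one has $[\D,e]=(\sigma(e)-e)\D$, which is in general unbounded and certainly not $e\delta(e)=0$, so the step fails as stated; the vanishing $e\delta(e)e=0$ used to drop the Grassmann term in \eqref{eq:46bis} does not control this. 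To repair it you can either follow the paper --- symmetry (which your computation gives in both cases) plus surjectivity of $D_R\pm i$, inherited from $\D+M$ on $\HH^n$ --- or, in the twist-commuting case, use $\D e=\sigma(e)\D$ together with the regularity \eqref{eq:RegularityAutomorph} (giving $e\D=\D\,\sigma^{-1}(e)$) and $e\sigma(e)e=e$ from \eqref{eq:70} to handle the compression directly; the same caveat applies to your resolvent-compactness claim that ``the projection $e$ preserves this property'', which in that case again rests on these twisted commutation relations rather than on $[\D,e]$ being small.
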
    
	\begin{proof}
		In view of props.\ref{sec:lift-autom-its} and
		\ref{sec:morita-equiv-twist}, the only point to check  is that
		$	{D}_R$ is selfadjoint
		with compact resolvent. 
		The operator \eqref{eq:46bis} coincides with  the operator (7) in
		\cite{chamconnessuijlekom:innerfluc}, which is shown to be part of a
		spectral triple, hence selfadjoint with compact resolvent. For completeness we develop the proof here.
		For $
		\Psi'
		=(e_i^j)\otimes \psi'_j$ with $\psi'_j\in\text{Dom}\,D$, 
		\eqref{eq:122}~yields
		\begin{small}
			\begin{align}
				\label{eq:120}
				&\langle \Psi', 	{D}_R
				\Psi \rangle \!=\!\langle \psi'_j,
				(\!(e_i^j), \!(e_i^k\!)\!)\!\left(D\psi_k + m_k^l\psi_l\right)\rangle_{\HH} \!=\! \sum_k\langle e_k^j \psi'_j,
				D\psi_k \!+\! m_k^l\psi_l\rangle_{\HH} \!=\! \sum_k\langle\psi'_k,
				D\psi_k \!+\! m_k^l\psi_l\rangle_{\HH},\\
				\label{eq:120bis}
				&\langle (	{D}_R\,\Psi',
				\Psi \rangle = \sum_j\langle D\psi'_j + m_j^l\psi'_l,
				e_j^k\psi_k\rangle_{\HH} =\sum_j \langle D \psi'_j +m_j^l\psi'_l,
				\psi_j\rangle_{\HH},
			\end{align}
		\end{small}
		where we compute the $\A$-valued inner product \eqref{eq:12} (remembering ${e_i^j}^*=e_j^i$)
		\begin{equation}
			\label{eq:121}
			\left(
			(e_i^j),(e_i^k)\right)=\sum_i ({e_i^j}^*e_i^k) = e_j^ie_i^k =e_j^k={e_k^j}^*.
		\end{equation}
		$D$ and $m_k^l$ being selfadjoint, \eqref{eq:120} equals \eqref{eq:120bis},
		meaning  ${D}_R$ is symmetric.
		Furthermore, $\text{Ran}(D+ m_k^l\pm i)=\HH$
		\cite[Theo. 8.3]{Reed1980}, so $\text{Ran}({D}_R\pm
		i)=\HH_R$, hence ${D}_R$ is selfadjoint.
		
		Regarding the compact resolvent, let us denote $D_R=D_0$ in case
		$\nabla$ is the Grassmann connection. For $\lambda$ in the resolvent set of $D$, ${D}_0-\lambda \I$ is
		invertible with inverse
		\begin{equation}
			\label{eq:123}
			({D}_0-\lambda\I)^{-1}\,
			\psi
			=e     \begin{pmatrix}
				(D-\lambda)^{-1}\psi_1\\\vdots\\ (D-\lambda \I)^{-1}\psi_n.
			\end{pmatrix} .
		\end{equation}
		$e$ being the identity on $\HH_R$ and a finite direct sum of compact
		operators being compact, $({D}_0-\lambda\I)^{-1}$ is compact.
		The passage to an arbitrary hermitian connection $\nabla$ is similar as for \cite[Thm. 6.15]{Walterlivre},
		having in mind Prop.~\ref{sec:twist-herm-conn-2} which guarantees that the difference
		$\nabla-\nabla_0$ is similar as for the non-twisted case.\end{proof}
	
	\subsection{Morita equivalence for real twisted geometries}
	\label{sec:twist-fluct-real}
	Provided the initial twisted spectral triple is real,
	then the previous construction holds as well if the Morita equivalence between
	$\A$ and $\B$ is implemented by a left $\A$-module %  This is a
	% straightforward adaptation of Prop.~\ref{prop:rightMorita}, whose
	(details are in~\ref{sec:twist-fluct-left}). 
	In particular, instead  of the right-Morita
	equivalent triple of proposition \ref{prop:rightMorita}, one may~built a
	left-Morita equivalent triple using the $\A$-$\B$-module $\bar{\cal
		E}$ conjugate to~$\cal E$ defined in~\eqref{eq:9}. 
	The latter is hermitian for the $\A$-valued pairing
	\begin{equation}
		\{\bar \xi', \bar\xi \}:=(\xi', \xi)\qquad \forall \xi,
		\xi'\in \cal E,
		\label{eq:10}
	\end{equation}
	and
	the Hilbert space $\HH_L:=
	\mathcal{H}\otimes_{\A}\overline{\mathcal{E}}$ 
	carries the representation \eqref{eq:leftrep} of
	$\B$
	\begin{equation}
		\label{eq:leftrepbbbis}
		\pi_L(b)(\Psi\otimes\overline{\xi}):=\Psi\otimes\overline{\xi}\,
		b\qquad \forall b\in \B,\, \overline{\xi}\in\overline{\mathcal{E}},\,\Psi\in\mathcal{H}_R.
	\end{equation}
	Consider an hermitian $\Omega^\circ$-connection  on $\bar{\cal E}$, for instance the
	conjugate $\bar\nabla$ of an hermitian connection $\nabla$ on $\cal E$
	defined in lemma
	\ref{sec:twisted-one-forms}. 
	Assuming the idempotent
	$e$ satisfies the conditions of
	proposition \ref{prop:rightMorita}, one makes the operator $D$ act on $\HH_L$ as 
	$D_L$ in \eqref{eq:left_Dirac}. Then proposition
	\ref{sec:twist-fluct-left-2} shows that 
	$(\B, \HH_L,
	{D}_L), \Sigma^{\circ-1}$
	is a twisted spectral triple. 
	
	However, the real
	structure $J$ of the initial triple has no reason to be a real
	structure, neither  for this left-Morita equivalent
	nor for the right-Morita one of 
	Prop. \ref{prop:rightMorita}. Actually this is not even true for self-Morita equivalence [Lem.3.7]\cite{landimart:twistgauge}). So, to export a \emph{real}
	twisted spectral triple one needs to combines the two
	previous constructions, following what has been done
	for usual spectral triples in \cite{Connes:1996fu}
	(later explained in greater details in
	\cite{marcolconnes:noncommgeo,chamconnessuijlekom:innerfluc}).
	
	Explicitly,  given a real, graded,  twisted spectral triple
	\begin{equation}
		(\A, \HH, D), \sigma,\, \Gamma,\, J,
		\label{eq:188}
	\end{equation}
	one considers the Hilbert space    \begin{equation}
		\HH':=\HH_R\otimes_{\A}\overline{\mathcal{E}}%=\mathcal{E}\otimes_{\A}\mathcal{H}\otimes_{\A}\overline{\mathcal{E}}\label{eq:126}
	\end{equation}
	for $\HH_R$ in \eqref{eq:07}. The tensor product makes sense with
	respect to the right action of $\A$ on $\HH_R$
	\begin{equation}
		(\xi\otimes\psi)a=\xi\otimes \psi a
		\label{eq:158}
	\end{equation}
	well defined by the order zero
	condition of \eqref{eq:188} (see \eqref{eq:190} below). On $\HH'$, one makes the
	Dirac operator $D$ act as
	the operator
	\begin{equation}
		\label{eq:23}
		D':=(\mathbb{I}\otimes\Sigma^{\circ-1})\circ\left({D}_R\otimes_{\nabla^\circ_R}1\right)
	\end{equation}
	with ${D}_R$  given in
	\eqref{eq:right_Dirac},  $\nabla$ an $\Omega$-valued
	hermitian connection on $\cal E$, $\Sigma^\circ$ the lift \eqref{eq:5}
	of $\sigma$ to $\bar{\cal E}$, and
	$\nabla^\circ_R$ an hermitian connection on $\bar{\cal E}$ with value in the
	bimodule 
	\begin{equation}
		\label{eq:194}
		\Omega^\circ_R:=\Omega^1_{D_R}(\A^\circ,\sigma^\circ)=\left\{
		\sum_j\, \tilde\pi^\circ(a_j^\circ)\left[{D}_R,\tilde\pi^\circ(b_j^\circ)\right]_{\sigma^\circ},\;
		a_j^\circ, b_j^\circ \in \A^\circ\right\}
	\end{equation}
	generated by the derivation $\delta_R^\circ(a):=[D_R,
	\tilde\pi(a^\circ)]$, 
	where the action of $\A^\circ$ on $\HH_R$ is
	\begin{equation}
		\label{eq:190}
		\tilde\pi^\circ(a)(\xi\otimes\psi):=  \xi\otimes a^\circ\psi\quad
		\forall a\in\A,\, \xi\otimes \psi\in\HH.
	\end{equation}
	This action
	coincides with the right action \eqref{eq:158} and is well
	defined, for
	\begin{equation}
		\tilde\pi^\circ(a^\circ)(\xi
		a' \otimes \psi) = \xi \otimes
		a'a^\circ\psi = \xi\otimes a^\circ
		a'\psi=\tilde\pi^\circ(a^\circ)(\xi\otimes a'\psi) \qquad\forall a'\in\A.
		\label{eq:183new}
	\end{equation}
	by  the order zero condition for \eqref{eq:188}.
	
	In order for $D'$ to make sense, one
	has to make sure that
	$\Omega^\circ_R$ acts on $\HH_R$ as bounded operators.  
	To this aim, let us denote $R$ the bimodule morphism
        $\Omega^\circ\rightarrow \Omega^\circ_R$
	\begin{equation}
		\label{eq:214}
		R(\omega^\circ):=\sum_j\tilde\pi(a^\circ_j)\delta^\circ_R(b_j)\in\Omega_R^\circ
                \qquad \forall \omega^\circ=\sum_j
	a^\circ_j\delta^\circ(b_j)\in\Omega^\circ
	\end{equation}
	(one shows this is a morphism by considerations as in remark \ref{sec:twisted-one-forms-1}).
	\begin{lem} 
		\label{sec:morita-equiv-real-1}
		For any $a\in\A$, $\delta_R^\circ(a)$ is a bounded
		operator on $\HH_R$ and acts as $\Sigma\otimes\delta^\circ(a)$. 
		Any element of $\Omega^\circ_R$ is of the form $R(\omega^\circ)$ for
		some $\omega^\circ\in\Omega^\circ$, and acts on $\HH_R$ as 
		\begin{equation}
			\label{eq:208}
			R(\omega^\circ)\Psi = \Sigma\xi\otimes\omega^\circ\psi \qquad \forall
			\Psi=\xi\otimes\psi\in \HH_R.
		\end{equation}
	\end{lem}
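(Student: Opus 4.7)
The plan is to work directly from the definitions by computing the action of $\delta_R^\circ(a)=[D_R,\tilde\pi^\circ(a^\circ)]_{\sigma^\circ}$ on a simple tensor $\xi\otimes\psi\in\HH_R$. Using $D_R=(\Sigma\otimes\I)\circ(\I\otimes_\nabla D)$ together with the Sweedler notation $\nabla\xi=\xi_{(0)}\otimes\xi_{(1)}$ from \eqref{eq:7}, I obtain
\begin{equation*}
D_R(\xi\otimes\psi)=\Sigma\xi\otimes D\psi+\Sigma\xi_{(0)}\otimes\xi_{(1)}\psi,
\end{equation*}
and since $\tilde\pi^\circ(a^\circ)(\xi\otimes\psi)=\xi\otimes a^\circ\psi$, the standard cancellation of the twisted commutator yields
\begin{equation*}
\delta_R^\circ(a)(\xi\otimes\psi)=\Sigma\xi\otimes[D,a^\circ]_{\sigma^\circ}\psi+\Sigma\xi_{(0)}\otimes[\xi_{(1)},a^\circ]_{\sigma^\circ}\psi.
\end{equation*}

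Next I show that the connection-dependent term vanishes. Expanding a generic $\xi_{(1)}=\sum_k c_k[D,d_k]_\sigma\in\Omega$, the order-zero condition \eqref{eq:zero-order} applied to both $a^\circ$ and $\sigma^\circ(a^\circ)=(\sigma^{-1}(a))^\circ$ allows me to pull the $c_k\in\A$ out of the twisted commutator with $a^\circ$, leaving $\sum_k c_k\,[[D,d_k]_\sigma,a^\circ]_{\sigma^\circ}$, which is zero by the twisted first-order condition \eqref{eq:TwistedFirstOrder}. Hence $\delta_R^\circ(a)(\xi\otimes\psi)=\Sigma\xi\otimes\delta^\circ(a)\psi$, which is the announced formula. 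Boundedness is then immediate: $\delta^\circ(a)=[D,a^\circ]_{\sigma^\circ}$ is a bounded operator on $\HH$ by \eqref{eq:twistzero} applied to $a^*$, while $\Sigma\otimes\I$ is bounded since, under the identification $\HH_R\simeq e\HH^n$ used in the proof of Prop.~\ref{sec:morita-equiv-twist}, it acts by left multiplication by the matrix $e\sigma(e)\in M_n(\A)$.

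For the second assertion, any element of $\Omega_R^\circ$ is by definition of the form $\omega_R^\circ=\sum_j\tilde\pi^\circ(a_j^\circ)\,\delta_R^\circ(b_j)$. Using the formula just established and the definition of $\tilde\pi^\circ$, I compute
\begin{equation*}
\omega_R^\circ(\xi\otimes\psi)=\sum_j\tilde\pi^\circ(a_j^\circ)\bigl(\Sigma\xi\otimes\delta^\circ(b_j)\psi\bigr)=\Sigma\xi\otimes\Bigl(\sum_j a_j^\circ\delta^\circ(b_j)\Bigr)\psi=\Sigma\xi\otimes\omega^\circ\psi,
\end{equation*}
with $\omega^\circ:=\sum_j a_j^\circ\delta^\circ(b_j)\in\Omega^\circ$. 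This is precisely $R(\omega^\circ)$ by \eqref{eq:214}, establishing both \eqref{eq:208} and the surjectivity of $R$. Since the right-hand side depends on $\omega^\circ$ only through its action on $\HH$, the prescription \eqref{eq:214} descends to a well-defined bimodule morphism independent of the chosen decomposition of $\omega^\circ$.

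The main delicacy lies in the vanishing of $[\xi_{(1)},a^\circ]_{\sigma^\circ}$: one must carefully separate the role of the order-zero condition, needed to move $c_k$ past both $a^\circ$ and $\sigma^\circ(a^\circ)$, from that of the twisted first-order condition, which finishes the argument. This is the only point in the proof where the hypotheses of \S\ref{sec:twist-fluct-real} really enter, and it is exactly this step that will fail once the first-order condition is dropped in \S\ref{sec:twist-fluct-with}, producing the extra non-linear contribution discussed there.
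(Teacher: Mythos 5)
Your proof is correct and follows essentially the same route as the paper: compute the twisted commutator of $D_R$ with $\tilde\pi^\circ(a^\circ)$ on simple tensors, observe that the connection-dependent term $\xi_{(0)}\otimes[\xi_{(1)},a^\circ]_{\sigma^\circ}\psi$ vanishes by the order-zero and twisted first-order conditions of the initial triple, and then apply the resulting formula termwise to get the action of $R(\omega^\circ)$. The only difference is that you spell out the expansion $\xi_{(1)}=\sum_k c_k[D,d_k]_\sigma$ explicitly where the paper simply invokes the first-order condition, which is a harmless (indeed welcome) extra level of detail.
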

	\begin{proof}
		From ~\eqref{eq:15} one has
		\begin{align}
			\label{eq:196}
			[\I\otimes_\nabla D, \tilde\pi^\circ
			(a^\circ)]&_{\sigma^\circ}(\xi\otimes \psi ) = (\I\otimes_\nabla D)(\xi\otimes
			a^\circ\psi) - \tilde\pi^\circ
			(\sigma^\circ(a^\circ)) (\I\otimes_\nabla D)(\xi\otimes\psi),\\
			\nonumber
			&=   \xi\otimes Da^\circ\psi + \nabla(\xi)a^\circ\psi
			-\xi\otimes \sigma^\circ(a^\circ)D\psi -
			\tilde\pi
			(\sigma^\circ(a^\circ))\nabla(\xi)\psi
			=\xi\otimes[D,a^\circ]_{\sigma^\circ}\psi,
		\end{align}
		where we noticed that 
		\begin{align}
			\label{eq:197}
			\nabla(\xi)a^\circ\psi -
			\tilde\pi^\circ(\sigma^\circ(a^\circ))\nabla(\xi)\psi=\xi_{(0)}\otimes\xi_{(1)}a^\circ\psi
			-\xi_{(0)}\otimes \sigma^\circ(a^\circ)\xi_{1}\psi= \xi_{(0)}\otimes[\xi_{(1)},a^\circ]_{\sigma^\circ}\psi 
		\end{align}
		vanishes by the first order condition satisfied by \eqref{eq:188}. Thus
		\eqref{eq:right_Dirac} yields
		\begin{equation}
			\label{eq:198}
			[{D}_R, \tilde\pi^\circ(a^\circ)]_{\sigma^\circ}(\xi\otimes
			\psi ) =\Sigma(\xi)\otimes [D,a^\circ]_{\sigma^\circ}\psi,
		\end{equation}
		which is bounded, being  $[D,a^\circ]_{\sigma^\circ}$ bounded by
		definition of the triple \eqref{eq:188}.
		Notice that this action is well defined thanks to  the twisted-first order
		condition, rewritten as $[[D,a^\circ]_{\sigma^\circ}, a']_\sigma=0$
		(see \cite[Def.2.1]{landimart:twisting}).
		\end{proof}
	\noindent In the language of
	\cite{chamconnessuijlekom:innerfluc},  $\omega^\circ$ and
	$R(\omega^\circ)$ are representations of the same universal
	$1$-form: on $\HH$ using the twisted commutator with $D$,
	on $\HH_R$ using the one with $D_R$.
	
	Given an $\Omega$-hermitian connection $\nabla$ on $\cal E$, we denote
	\begin{equation}
		\bar\nabla_R=(R\otimes\I)\circ \bar\nabla\label{eq:225}
	\end{equation}
	the $\Omega^\circ_R$-connection on $\bar{\cal E}$ defined,
	for any $\bar\eta\in \bar{\cal E}$ with $\nabla\eta=\eta_{(0)}\otimes
	\eta_{(1)}$, as
	\begin{equation}
		\label{eq:209}
		\bar\nabla_R\bar\eta= R(\bar\eta_{(-1)})\otimes \bar\eta_{(0)}
		%  \;\text{ with }\;\bar\eta^R_{(-1)}= \epsilon' J\eta_{(1)}J^{-1}\otimes \eta_{(0)}
	\end{equation}
	where $\bar\eta_{(-1)}=\epsilon'J\eta_{(1)} J^{-1}\in\Omega^\circ$ is defined in  \eqref{eq:26}.
	This is an hermitian connection (one checks \eqref{eq:148} using  $R$ is
       a bimodule morphism).
	This permits to conclude the construction of twisted fluctuations of real
	twisted spectral triples.
	\begin{prop}
		\label{sec:morita-equiv-real}
		Consider a real, graded twisted spectral triple \eqref{eq:188} and
		a finite projective right $\A$-module ${\cal
			E}=e\A^n$ such that the lift
		$\Sigma$ of $\sigma$ is
		invertible in the sense of  lemma~\ref{lem:lift-autom-its} and 
		\eqref{eq:119} holds.
		Let
		$\B=\text{End}_\A({\cal E})$  act on $\HH'$ as 
		\begin{equation}
			\pi':=~\pi_R\otimes\I\label{eq:218}
		\end{equation}
		with $\pi_R$
		defined in \ref{eq:6}, and $\sigma'$ the lift \eqref{eq:142} of $\sigma$ to $\cal B$.
		Given an $\Omega$-connection $\nabla$ on $\cal E$, define $D'$ as in
		\eqref{eq:23} with $\nabla^\circ_R=\bar\nabla_R$ given in \eqref{eq:209}. 	 Then
		\begin{equation}
			\label{eq:twisted_triple_Mequiv_general}
			(\B,\mathcal{H}',D'),\;\sigma'
		\end{equation}
		is a real, graded,  twisted spectral triple with grading and
		real structure
		\begin{align}
			\label{eq:17bter}
			&\Gamma'(\xi\otimes\psi\otimes\overline{\eta}):=\xi\otimes
			\Gamma\psi\otimes\overline{\eta},\\
			&J'(\xi\otimes\psi\otimes\overline{\eta}):=\eta\otimes
			J\psi\otimes\overline{\xi} ,\;\quad
			\forall \, \xi\otimes\psi\in\HH_R,\bar\eta\in\bar{\cal E}
			\label{eq:17bbis}
		\end{align}
		and the same $KO$-dimension as \eqref{eq:188}.
	\end{prop}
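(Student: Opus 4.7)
The plan is to mirror the non-twisted construction of \cite{chamconnessuijlekom:innerfluc} by iterating the right-Morita step of Proposition \ref{prop:rightMorita} with a left-Morita step implemented by $\bar{\cal E}$. I would first verify the twisted spectral triple axioms on $\HH'$. Faithfulness of $\pi'=\pi_R\otimes\I$ follows from that of $\pi_R$. For boundedness of $[D',\pi'(b)]_{\sigma'}$, I would unpack \eqref{eq:23}: since $\pi'(b)$ acts trivially on the $\bar{\cal E}$-factor, the twisted commutator reduces to $[D_R,\pi_R(b)]_{\sigma'}\otimes\I$ (bounded by Prop.\ \ref{prop:rightMorita}) plus a correction term coming from the $\bar\nabla_R$-piece of $D_R\otimes_{\bar\nabla_R}1$, which is bounded by Lemma \ref{sec:morita-equiv-real-1}. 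For selfadjointness and compactness of the resolvent of $D'$, I would imitate the argument in Prop.\ \ref{prop:rightMorita}: treat first the Grassmann case (where $D'$ is essentially a finite direct sum of copies of $D$) and then invoke Lemma \ref{sec:twist-herm-conn-2} to reduce a general hermitian connection to a bounded selfadjoint perturbation.

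Next, I would check the grading and real-structure axioms. That $\Gamma'$ squares to $\I$, commutes with $\pi'(\B)$ and anticommutes with $D'$ is inherited from the corresponding properties of $\Gamma$ on $\HH$, since neither the lifts $\Sigma,\Sigma^{\circ-1}$ nor the connection terms touch the spin grading. Antilinearity of $J'$ holds because $J$ is antilinear while the two conjugations ${\cal E}\to\bar{\cal E}$ contribute two further antilinear factors, for an odd total. The identity $J'^2=\varepsilon\I'$ reduces directly to $J^2=\varepsilon\I$ on the middle $\HH$-factor, and $J'\Gamma'=\varepsilon''\Gamma'J'$ follows in the same way from $J\Gamma=\varepsilon''\Gamma J$.

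The heart of the argument, and the step I expect to be hardest, is to verify $J'D'=\varepsilon'D'J'$. Using \eqref{eq:23} and the explicit form \eqref{eq:46bis} of $D_R$, the identity splits into a Grassmann piece and a connection piece. The Grassmann piece boils down to $JD=\varepsilon'DJ$ on the middle factor, once one checks that the outer twists $\Sigma$ and $\Sigma^{\circ-1}$ are exchanged by $J'$ (swapping of the outer $\cal E$ and $\bar{\cal E}$ factors), which is exactly the compatibility recorded in \eqref{eq:142}. The connection piece is the delicate one: under the swap $\xi\leftrightarrow\bar\xi$ implemented by $J'$, the $\Omega$-valued $\nabla$-contribution on the $\cal E$-factor must be converted into the $\Omega^\circ_R$-valued $\bar\nabla_R$-contribution on the $\bar{\cal E}$-factor. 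This is precisely what the specific choice $\bar\nabla_R=(R\otimes\I)\circ\bar\nabla$ of \eqref{eq:225} delivers: by Lemma \ref{sec:morita-equiv-real-1} one has $R(\bar\eta_{(-1)})=\Sigma\otimes\bar\eta_{(-1)}$ with $\bar\eta_{(-1)}=\varepsilon'J\eta_{(1)}J^{-1}$, which produces the required $\varepsilon'$ factor upon intertwining with $J$, and the $\Sigma$ cancels the $\Sigma^{\circ-1}$ coming from \eqref{eq:23}.

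Finally, I would address the order-zero and twisted first-order conditions. The opposite representation $\pi'^\circ$ induced by $J'$ via \eqref{eq:opposite_rep} lives on the $\bar{\cal E}$-factor, whereas $\pi'(b)$ lives on the $\HH_R$-factor; the two commute because they act on orthogonal tensor slots, establishing \eqref{eq:zero-order}. For the twisted first-order condition \eqref{eq:TwistedFirstOrder}, the computation in the first paragraph shows that $[D',\pi'(b)]_{\sigma'}$ acts nontrivially only on the $\HH_R$-factor, so it twist-commutes with $\pi'^\circ$ on the $\bar{\cal E}$-factor. Since none of $\varepsilon,\varepsilon',\varepsilon''$ changes in the passage $(D,\Gamma,J)\mapsto(D',\Gamma',J')$, the $KO$-dimension is preserved. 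The main obstacle, as noted, is the interlocking of the two twists and the two connections in $J'D'=\varepsilon'D'J'$, which is where the particular form of $\bar\nabla_R$ and Lemma \ref{sec:morita-equiv-real-1} are essential.
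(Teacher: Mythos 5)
Your overall strategy coincides with the paper's: compute $D'$ explicitly, get boundedness of $[D',\pi'(b)]_{\sigma'}$ from Prop.~\ref{prop:rightMorita} plus Lemma~\ref{sec:morita-equiv-real-1}, handle selfadjointness and compactness as in the right-Morita case, and reduce $J'D'=\varepsilon'D'J'$ to the relation $\bar\eta_{(-1)}=\varepsilon'J\eta_{(1)}J^{-1}$ together with $\overline{\Sigma\eta}=\Sigma^{\circ-1}\bar\eta$ (note: the latter is \eqref{eq:165}, not \eqref{eq:142}). Up to that point the proposal is sound.

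The genuine gap is in your last paragraph, where you dispose of the order-zero and twisted first-order conditions by saying that $\pi'(b)$ and $\pi'^{\circ}(c^{\circ})$ ``act on orthogonal tensor slots.'' They do not: both tensor products in $\HH'=\mathcal{E}\otimes_{\A}\HH\otimes_{\A}\bar{\mathcal{E}}$ are balanced over $\A$, so the right multiplication of $\bar{\mathcal{E}}$ by $c\in\B$ is transported through the balancing onto the central $\HH$-factor, where it acts componentwise as $c^{\,l\,\circ}_{m}$; concretely, identifying $\HH'\simeq eM_n(\HH)e$, one has $\pi'(b)\Psi'\simeq b\Psi'$ and $\pi'^{\circ}(c^{\circ})\Psi'\simeq\Psi' c$ with matrix entries acting on the \emph{same} copies of $\HH$. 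The order-zero condition for the new triple therefore reduces to $(b_k^r\psi)c_m^l=b_k^r(\psi c_m^l)$, i.e.\ to the order-zero condition of the initial triple — not to disjointness of slots. More seriously, the twisted first-order condition is where the real work lies: $[D',\pi'(b)]_{\sigma'}$ acts on the central factor through operators $X_k^r=[D,b_k^r]_\sigma+([m,b]_\sigma)_k^r$, and one must show these twist-commute with the transported $c^{\,l\,\circ}_{m}$, which requires the twisted first-order condition of the \emph{initial} triple for the $[D,b_k^r]_\sigma$ part and the order-zero plus first-order conditions for the $[m,b]_\sigma$ part (this is the computation \eqref{eq:67}--\eqref{eq:117} in the paper, which also needs the identity \eqref{eq:117} relating $\sigma'$ to componentwise $\sigma^{-1}$). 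One must also observe that the second, $R(\bar\eta_{(-1)})$-term of $[D',\pi'(b)]_{\sigma'}$ vanishes when the initial first-order condition holds. Without these steps the first-order condition for \eqref{eq:twisted_triple_Mequiv_general} is not established.
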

	\begin{proof}
		For $\Psi'=\Psi\otimes \bar\eta\in\HH'$ with $\Psi\in \HH_R$ and
		$\bar\eta\in\bar{\cal E}$, one gets from \eqref{eq:153},
		\eqref{eq:155}, \eqref{eq:165}
		\begin{align}
			\label{eq:171}
			D'\Psi' = {D}_R\Psi \otimes \Sigma^{\circ-1}\bar\eta +
			(\I\otimes\Sigma^{\circ-1})\circ \bar\nabla_R(\bar\eta)\Psi
			= {D}_R\Psi \otimes  \overline{\Sigma\eta} +  R(\bar\eta_{(-1)})\Psi\otimes \overline{\Sigma\eta_{(0)}},
		\end{align}
		so that
		\begin{equation}
			\label{eq:173}
			\left[D',\pi'(b)\right]_{\sigma'} \Psi'= \left[{D}_R,\pi_R(b)\right]_{\sigma'}\Psi \otimes
			\overline{\Sigma\eta} +\left[R(\bar\eta_{(-1)}),\pi_R(b)\right]_{\sigma'}\Psi\otimes \overline{\Sigma\,\eta_{(0)}}.
		\end{equation}
		The first term is bounded by Prop.\ref{sec:morita-equiv-twist}, the second
		because $\pi_R(b)$ is bounded, as well as $R(\bar\eta_{(-1)})$ by lemma~\ref{sec:morita-equiv-real-1} . That $D'$ is selfadjoint with compact resolvent is shown
		as in the proof of Prop.\ref{prop:rightMorita}. 
		The operator $\Gamma'$
		is well defined ($\Gamma'(\xi a\otimes\psi\otimes b\bar\eta)=\Gamma'(\xi\otimes a\psi
		b\otimes \bar\eta)$ for $\Gamma$ commutes~with~$\A$). In addition,  ${\Gamma'}^2=\I$
		and $[\Gamma',\pi'(a)]=0$. Since $\Gamma$ anticommutes with both
		$\Omega$ and $\Omega^\circ$, then $\I\otimes\Gamma$ anticommutes with
		$D_R$ and $\bar\eta_{(-1)}$, thus $\Gamma'$ anticommutes
		with $D$.
		In other terms $({\cal B}, \HH', D'), \sigma'$ is a
		graded twisted spectral triple.
		\medskip
		
		To show that it is  real, first note that $J'$ is well
		defined on $\HH'$, for \eqref{eq:9} yields
		\begin{align*}
			J'(\xi a\otimes\psi\otimes\bar\eta)&=
			\eta\otimes J\psi\otimes a^*\bar\xi= \eta\otimes
			(J\psi)a^*\otimes\bar\eta=\eta\otimes JaJ^{-1}J\psi \otimes\bar\eta=
			J'(\xi\otimes a\psi\otimes\bar\eta),\\
			J'(\xi \otimes\psi\otimes a\bar\eta)&=J'(\xi \otimes\psi\otimes
			\overline{\eta a^*})
			=  \eta \otimes a^*J\psi\otimes \bar\xi= \eta\otimes
			Ja^\circ\psi\otimes\bar\eta=\eta\otimes J(\psi a) \otimes\bar\eta=
			J'(\xi\otimes \psi a\otimes\bar\eta).
			\label{eq:176}
		\end{align*}
		It induces a representation of the opposite
		algebra $\B^{\circ}$, following
		(\ref{eq:opposite_rep}),
		\begin{align}
			\pi'^{\circ}(c^{\circ})\Psi'&=J'\pi'(c^{*})J'^{-1}\Psi'
			=\epsilon J'\pi'(c^{*})(\eta\otimes J\psi\otimes\overline{\xi})
			=\epsilon J'(c^*\eta\otimes J\psi\otimes\overline{\xi})=
			\xi\otimes\psi\otimes \overline{\eta}c
		\end{align}
		for any $c\in{\cal B}$ and $\Psi'=\xi\otimes\psi\otimes\bar\eta$ in $\HH'$,	where we have used ${J'}^{-1}=\varepsilon J'$ as well as
		\begin{equation}
			\label{eq:168}
			\overline{c^*\eta}=\overline{
				\begin{pmatrix}
					{(c^*)_1^j}\eta_j\\ \vdots\\{(c^*)_n^j}\eta_j
			\end{pmatrix}} 
			=(\eta_j^* ((c^*)_1^j)^*,\ldots,\eta_j^* ((c^*)_n^j))^*=(\eta_j^* c_j^1,\ldots,
			\eta_j^* c_j^n)= \bar\eta c.\end{equation}
		
		For the order zero condition, it is convenient to identify $\HH'$
		with $eM_n(\HH)e$ ($n$-square matrices on $\HH$, invariant by
		conjugation with $e$). Indeed,  a generic element of $\HH'$ is
		\begin{equation}
			\Psi'=\xi^p\otimes\psi_p^q\otimes\bar\eta_q =
			(e_i^k)\otimes\psi_k^l\otimes (e_l^j) \simeq
			\begin{pmatrix}
				\psi_1^1 &\cdots & \psi_1^n\\
				\vdots& &\vdots\\
				\psi_n^1 &\cdots & \psi_n^n
			\end{pmatrix}=e\Psi'e
			\label{eq:187}
		\end{equation}
		where $\xi^p_i=e^k_i\xi^p_k$ and $\bar\eta_q^j=\bar\eta_q^l e_l^j$ 
		are the components of generic elements $\xi^p\in{\cal E}$,
		$\bar\eta_q\in\bar{\cal E}$, $\psi^q_p$ is  a generic element of $\HH$
		and we denote $\psi_k^l=\zeta_k^p \psi_p^q\bar\eta^l_q\in\HH$ (unambiguously defined by the order zero condition of \eqref{eq:188}).
		The action \eqref{eq:17bbis} of $J'$ then 
		amounts to acting with $J$ on each components of
		the transpose of  $\Psi'$: from
		\eqref{eq:195} one has 
		\begin{equation}
			\label{eq:199}
			J'\Psi' = \sum_{k,l}(e^l_j)\otimes\ J\psi^l_k\otimes (e_k^i)=(e^l_j)\otimes\!\
			J\,^T\!( \psi)^k_l\otimes (e_k^i)\simeq e (\J\; ^T\!\Psi') e
		\end{equation}
		where $\J$ is the $n$-diagonal matrix with $J$ on the diagonal.
		Meanwhile, the action
		of $\pi'(b)$, ${\pi'}^\circ(c^\circ)$ are the left and right matrix multiplications 
		\begin{align}
			\label{eq:66preter}
			\pi'(b)\Psi' &= (e_i^k)\otimes b_k^r \psi_r^l
			\otimes(e^j_l)\simeq b\Psi' ,\\
			\pi'^{\circ}(c^{\circ})\Psi'&=
			(e_i^k)\otimes\psi_k^l\otimes(e^j_l)c 
			=(e_i^k)\otimes\psi_k^m
			c^l_m\otimes(e^j_l)
			\simeq \Psi' c
			\label{eq:66prequat}
		\end{align}
		where the first equation comes from \eqref{eq:18}, while for the
		second we use $ec=ce$ as $(e^j_l)c=(e_l^mc_m^j)=(c_l^me_m^j)=c_l^m(e_m^j)$, then
		exchange $l$ with $m$.
		The order zero condition of \eqref{eq:188} guarantees that
		the $i,j$  component
		$e_i^k((b_k^r\psi_r^m)c_m^l)e_l^j$ of 
		${\pi'}^\circ(c^\circ)\pi'(b)\Psi'$ equals the one
		$e_i^k(b_k^r(\psi_r^mc^l_m)e_l^j$ of ${\pi'}^\circ(c^\circ)
		\pi'(b)\Psi'$ for any $b,c\in\B$. This means that  \ref{eq:twisted_triple_Mequiv_general} satisfies the order zero
		condition
		$[\pi'(b),\pi'^{\circ}(c^{\circ})]=0$.

		Regarding the condition of order $1$, given a generic
		$\Psi'\in\HH'$  by \eqref{eq:187}, 
		the first term on the r.h.s. of \eqref{eq:173} is - denoting $X_k^r:=\left[D,b_k^r\right]_\sigma +
		\left([m,b]_\sigma\right)_k^r$ and using \eqref{eq:46bis} and
		\eqref{eq:18}~-
		\begin{equation}
			\label{eq:67}
			X\Psi'= 	\left[{D}_R,\pi_R(b)\right]_\sigma
			\Psi\otimes\Sigma^{\circ-1}(e^j_l)
			= (e^k_i)\otimes X_k^r \psi_r^l\otimes(\sigma^{-1}(e^j_l))e.
		\end{equation}
		Together with \eqref{eq:66prequat} this gives
		\begin{align}
			\label{eq:108}
			&X{\pi'}^\circ(c^\circ) \Psi'=(e^k_i)\otimes \left(X^r_k (\psi_r^m c_m^l)\right)\otimes(\sigma^{-1}(e^j_l))e,\\
			&{\pi'}^\circ(\sigma^\circ(c^\circ)) X\Psi' =(e^k_i)\otimes \left((X_k^r\psi_r^m(\sigma^{-1}(c_m^l)\right)\otimes \sigma^{-1}(e_l^j)e
			\label{eq:76}
		\end{align}
		where to get \eqref{eq:76} we multiply $X\Psi'$  on the left by
		${\pi'}^{\circ}({\sigma'}^\circ(c^\circ))={\pi'}^{\circ}({\sigma'}^{-1}(c)^\circ)$, using 
		\begin{align}
			(\sigma^{-1}(e^j_l))e{\sigma'}^{-1}(c)&=(\sigma^{-1}(e^j_l))e\sigma^{-1}(c)e
			=(\sigma^{-1}(e^k_l \sigma(e_k^r)c_r^j)e=(\sigma^{-1}(e^k_l
			\sigma(e_k^r)e_r^mc_m^j)e,\\
			&=(\sigma^{-1}(e_l^mc_m^j)e=(\sigma^{-1}(c_l^me_m^j)e=
			\sigma^{-1}(c_l^m) \sigma^{-1}(e_m^j)e
			\label{eq:117}
		\end{align}
		following from the definition \eqref{eq:17} of $\sigma'$ together with $c=ec=ec$
		and \eqref{eq:70}, then exchanging $l$ with $m$. The twisted-first order
		condition from the initial triple guarantees that
		$X_k^r(\psi_r^mc^l_m)=X_r^r{c^l_m}^\circ\psi_r^m$ equals
		$X_k^r\psi_r^m\sigma^{-1}(c^l_m)=\sigma^\circ({c^l_m}^\circ)X_k^r\psi_r^m$:
		for the component $[D, b_k^r]_\sigma$ of $X_k^r$ this is precisely the
		order one condition, for the component $[m, b]_\sigma$ this is because
		both $m$ and $b$ twist commute with $c^\circ$ by the order zero
		and the first order conditions. Hence the first term in the r.h.s. of \eqref{eq:173}
		twist-commutes with ${\pi'}^\circ(c^\circ)$.
		
		The twisted first order condition for
		\eqref{eq:twisted_triple_Mequiv_general} then follows noticing that 
		the second term in the r.h.s. of \eqref{eq:173} actually
		vanishes. Indeed, for
		$\Psi= (e_i^j)\otimes\psi_j$ in $\HH_R$, one has
		\begin{equation}
			\label{eq:21}
			\bar\eta_{(-1)}\Psi = (e_i^k)\otimes \sigma(e_k^j)\tilde\eta_{(-1)}\psi_j 
		\end{equation}
		where for $\bar\eta_{(-1)}=\sum_i a^\circ_i \delta^\circ_R(b_i)\in \Omega^\circ_R$, one denotes
		$\tilde\eta_{(-1)}= \sum_i a^\circ_i\delta^\circ(b_i)\in
		\Omega^\circ$: From \eqref{eq:18} one obtains
		\begin{align}
			\label{eq:68}
			\pi_R(b)\bar\eta_{(-1)}\Psi &= (e_i^k)\otimes
			b_k^l\sigma(e_l^j)\tilde\eta_{(-1)}\psi_j,\\
			\bar\eta_{(-1)}\pi_R(\sigma'(b))\Psi &= (e_i^k)\otimes \sigma(e_k^j)\tilde\eta_{(-1)}\sigma(b_j^k)\psi_k 
		\end{align}

		Finally, for the real structure, one
		has ${J'}^2=\I\otimes J^2\otimes \I=\epsilon \I$, as well as
		\begin{equation}
			J'\Gamma'\Psi' =
			\eta\otimes J\Gamma\psi\otimes\bar\xi = \epsilon'' \eta\otimes
			\Gamma J\psi\otimes\bar\xi=\epsilon'' \Gamma'\Psi'.
			\label{eq:201}
		\end{equation}
		There remains
		only to check that $J'D'=\epsilon' D' J'$.  The construction of \S \ref{sec:twist-fluct-left} applies
		because $\Omega^\circ_R$ satisfies the same module laws as
		$\Omega^\circ$ and $\delta^\circ(e)=0$ is equivalent to
		$\delta_R^\circ(e)=0$ by lemma \ref{sec:morita-equiv-real-1}.
		Developing in
		\eqref{eq:171} the actions on $\HH_R$ of $D_R$ (by \eqref{eq:right_Dirac}) and
		$R(\bar\eta_{(-1)})$ (by \eqref{eq:209},\eqref{eq:208}) yields
		\begin{equation}
			\label{eq:125}
			D'\Psi'= \Sigma\xi\otimes D\psi\otimes\overline{\Sigma\eta} +\Sigma\xi_{(0)}\otimes \xi_{(1)}\psi\otimes\overline{\Sigma\eta}  +\Sigma\xi\otimes \bar\eta_{(-1)}\psi\otimes\overline{\Sigma\eta_{(0)}} ,
		\end{equation}
		so that by \eqref{eq:199} one obtains
		\begin{align}
			\label{eq:73}
			J'D'\Psi' &= \Sigma\eta\otimes JD\psi\otimes\overline{\Sigma\xi} +
			\Sigma\eta \otimes  J\xi_{(1)}\psi\otimes\overline{\Sigma\xi_{(0)}}+ \Sigma\eta_{(0)}\otimes
			J\bar\eta_{(-1)}\psi\otimes\overline{\Sigma\xi},\\
			D' J'\Psi' &= \Sigma\eta\otimes DJ\psi\otimes\overline{\Sigma\xi} +
			\Sigma\eta_{(0)} \otimes  \eta_{(1)}J\psi\otimes\overline{\Sigma\xi}+ \Sigma\eta\otimes
			\bar\xi_{(-1)}J\psi\otimes\overline{\Sigma\xi_{(0)}},
		\end{align}
		From \eqref{eq:26} follows $\bar{\eta}_{(-1)}=\epsilon'
		J\eta_{(1)}J^{-1}=\epsilon'
		J^{-1}\eta_{(1)}J$ (from $J^{-1}=\epsilon J$, with $\epsilon^2=1$), so that
		\begin{equation}
			\label{eq:16}
			J\bar\eta_{(-1)}\psi=\epsilon'  \bar\eta_{(1)}J\psi.
		\end{equation}
		Similarly, $\bar\xi_{(-1)}J\psi= J\xi_{(1)}\psi$. Together with
		$DJ=\epsilon'JD$, this give $J'D'\Psi' = \epsilon' D'J'\psi'$. Hence
		$J'$ is a real structure for
		\eqref{eq:twisted_triple_Mequiv_general}, for the same $KO$-dimension
		as the initial triple~\eqref{eq:188}. 
	\end{proof}
	\noindent This proposition is both a generalization of 
	\cite{landimart:twistgauge} which dealt with twist but only for self
	Morita equivalence, and of \cite{chamconnessuijlekom:innerfluc} which
	dealt with general Morita equivalence but with no twist. 
	
	\begin{remark}
		Whether conditions \eqref{eq:74} and \eqref{eq:119} 
		are necessary restrictions on the module implementing Morita
		equivalence in order  to export a twisted spectral triple should be
		investigated elsewhere.  Notice, however, that any idempotent whose
		non-zero components are the identity of $\A$ satisfy all these
		conditions. This is in particular true for self-Morita equivalence,
		in which case $e$ is the unit element of $\A$.
	\end{remark}
	The
	construction above is symmetric from the left/right module points of
	view. Namely one may view the total Hilbert space as
	\begin{equation}
		\HH'=\cal E\otimes_\A\HH_L
		\label{eq:212}
	\end{equation}
	and, given an hermitian $\Omega$-connection on $\cal E$,  define the
	Dirac operator
	\begin{equation} 
		\label{eq:183}
		D''= (\Sigma\otimes\I)\circ(\I\otimes_{\nabla_L} D_L),
	\end{equation}
	where
	\begin{equation}
		\label{eq:226}
		\nabla_L=(\I\otimes L)\circ\nabla
	\end{equation}
	is a connection on $\cal E$ valued in the
	bimodule $\Omega_L$ generated by the derivation
	$\delta_L(a):=[D_L,\tilde\pi(a)]$,
	\begin{equation}
		\tilde\pi(a)(\psi\otimes\bar\eta)=
		a\psi\otimes\bar\eta
		\label{eq:220}
	\end{equation}
	is the representation of $\A$
	on $\HH_L=\HH\otimes\bar{\cal E}$, and $L$ is the
	map sending any $\omega=\sum_j a_j\delta(b_j)\in\Omega$ to 
	\begin{equation}
		\label{eq:215}
		L(\omega) :=\sum_j\tilde\pi(a_j)\delta_L(b_j)\in\Omega_L.
	\end{equation}  One checks
	that $\delta_L(a)$ acts on $\HH_L$ as
	$\delta(a)\otimes\Sigma^{\circ-1}$, so that % any element of $\Omega_L$,
	$L(\omega)$ acts on $\HH_L$ as
	\begin{equation}
		L(\omega)\varphi= \omega\psi\otimes\Sigma^{\circ-1}\bar\eta \qquad\forall \varphi=\psi\otimes\bar\eta\in\HH_L.\label{eq:216}
	\end{equation}
	\begin{equation}
		\label{eq:217}
		\nabla_L\xi =\xi_{(0)}\otimes L(\xi_{(1)}).
	\end{equation}
	This yields a twisted version of
	the first statement of
	\cite[Prop. 1]{chamconnessuijlekom:innerfluc}. 
	\begin{prop}
		Let $\nabla$ be an hermitian connection on $\cal E$ with  $\nabla_L$ the
		associated $\Omega_L$-connection \eqref{eq:226}, and $D_L$
		defined by \eqref{eq:left_Dirac} with $\nabla^\circ=\bar\nabla$ the
		conjugate
		to $\nabla$ of  lemma~\ref{sec:herm-conn-conj}.
		Then $D''=D'$.
	\end{prop}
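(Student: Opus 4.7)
The plan is to compute both $D'\Psi'$ and $D''\Psi'$ on a generic decomposable element $\Psi' = \xi\otimes\psi\otimes\bar\eta\in\HH'$ and verify term-by-term equality. The right-hand description of $D'$ has already been unpacked in \eqref{eq:125}, so it suffices to unpack the left-hand description $D''$ and check that we obtain exactly the same three terms.

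First I would expand $(\I\otimes_{\nabla_L}D_L)(\xi\otimes\psi\otimes\bar\eta)$ via the Leibniz rule \eqref{eq:TwistedLeibnizConn} in the form $\xi\otimes D_L\varphi + (\nabla_L\xi)\varphi$ where $\varphi=\psi\otimes\bar\eta\in\HH_L$. For the connection term, using $\nabla_L\xi = \xi_{(0)}\otimes L(\xi_{(1)})$ from \eqref{eq:217} together with \eqref{eq:216}, one gets
\begin{equation*}
(\nabla_L\xi)\varphi=\xi_{(0)}\otimes L(\xi_{(1)})(\psi\otimes\bar\eta) = \xi_{(0)}\otimes \xi_{(1)}\psi\otimes \Sigma^{\circ-1}\bar\eta.
\end{equation*}
For the first term, $D_L(\psi\otimes\bar\eta)$ is given by the symmetric (left-module) analogue of \eqref{eq:right_Dirac}, evaluated on $\bar\nabla$ from lemma \ref{sec:herm-conn-conj}. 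Its two pieces produce $D\psi\otimes\Sigma^{\circ-1}\bar\eta$ and $\bar\eta_{(-1)}\psi\otimes \Sigma^{\circ-1}\bar\eta_{(0)}$, where $\bar\eta_{(-1)}\in\Omega^\circ$ acts on $\HH$ through its original representation.

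Next I would apply $\Sigma\otimes\I$ to each of the three resulting summands. Using $\Sigma^{\circ-1}\bar\eta=\overline{\Sigma\eta}$ from \eqref{eq:9} (the definition \eqref{eq:5} of $\Sigma^\circ$), this gives
\begin{equation*}
D''\Psi' = \Sigma\xi\otimes D\psi\otimes \overline{\Sigma\eta} + \Sigma\xi_{(0)}\otimes \xi_{(1)}\psi\otimes \overline{\Sigma\eta} + \Sigma\xi\otimes \bar\eta_{(-1)}\psi\otimes \overline{\Sigma\eta_{(0)}}.
\end{equation*}
Comparing with \eqref{eq:125} term-by-term, the three summands match identically, and the identification extends by linearity to all of $\HH'$, which establishes $D''=D'$.

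The only genuine subtlety is a bookkeeping one: ensuring that the middle entry $\xi_{(1)}\psi$ in the connection term of $D''$ is \emph{not} accompanied by a twist on $\overline{\Sigma\eta}$ (and dually that $\bar\eta_{(-1)}\psi$ is not accompanied by a twist on $\Sigma\xi$). This comes out correctly because $L$ and $R$ are bimodule morphisms that reproduce, on $\HH_L$ and $\HH_R$ respectively, the same twisted commutator representatives as $\Omega$ and $\Omega^\circ$ on $\HH$ (cf.\ lemma~\ref{sec:morita-equiv-real-1} and \eqref{eq:216}), while the twist $\Sigma$ (resp.\ $\Sigma^{\circ-1}$) is picked up only once, from the outer composition with $\Sigma\otimes\I$ (resp.\ $\I\otimes\Sigma^{\circ-1}$) in the definitions \eqref{eq:183} and \eqref{eq:23}. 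Once this is tracked carefully in each of the three terms, the identification $D''=D'$ is immediate.
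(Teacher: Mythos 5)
Your proposal is correct and follows essentially the same route as the paper: expand $D''\Psi'$ via the Leibniz-type definition into the three summands $\Sigma\xi\otimes D\psi\otimes\overline{\Sigma\eta}$, $\Sigma\xi_{(0)}\otimes\xi_{(1)}\psi\otimes\overline{\Sigma\eta}$ and $\Sigma\xi\otimes\bar\eta_{(-1)}\psi\otimes\overline{\Sigma\eta_{(0)}}$, using \eqref{eq:216}--\eqref{eq:217} and the identity $\Sigma^{\circ-1}\bar\eta=\overline{\Sigma\eta}$ (which is \eqref{eq:165}), then match against \eqref{eq:125}. The only minor quibble is that the relation $\Sigma^{\circ-1}\bar\eta=\overline{\Sigma\eta}$ should be cited as \eqref{eq:165} rather than \eqref{eq:9}/\eqref{eq:5}.
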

	\begin{proof} For $\Psi'=\xi\otimes\varphi\in\HH'$ with $\varphi= \psi\otimes\bar\eta\in\HH_L$, one has
		\begin{align}
			\label{eq:181}
			D''\Psi'&=\Sigma\xi \otimes D_L\varphi +
			(\Sigma\otimes\I)\circ(\nabla_L\xi)\varphi,\\
			&=\Sigma\xi \otimes
			D\psi\otimes \Sigma^{\circ-1}\bar\eta + \Sigma\xi\otimes\bar\eta_{(-1)} \psi\otimes \Sigma^{\circ-1}\bar\eta_{(0)} +
			\Sigma\xi_{(0)}\otimes\xi_{(1)}\psi\otimes\Sigma^{\circ-1}\bar\eta,\\
			&= \Sigma\xi \otimes
			D\psi\otimes \overline{\Sigma\eta} + \Sigma\xi\otimes\bar\eta_{(-1)} \psi\otimes \overline{\Sigma\eta_{(0)}} +
			\Sigma\xi_{(0)}\otimes\xi_{(1)}\psi\otimes\overline{\Sigma\eta}.
		\end{align}
		This coincides with the formula \eqref{eq:125} of $D'$.
	\end{proof}
	
	\subsection{Twisted fluctuations without first order condition}
	\label{sec:twist-fluct-with}
	
	Let us apply the preceding construction to self-Morita equivalence,  that is
	\begin{equation}
		\B=\A \quad \text{ with }\quad 
		\mathcal{E}=\A=\overline{\mathcal{E}}.
		\label{eq:43}
	\end{equation}
	Any hermitian 
	$\Omega$-connection $\nabla$ on $\cal E$ and its conjugate $\bar\nabla$ on $\bar{\cal E}$
	are such that 
	\begin{align}
		\label{eq:23bis}
		\nabla(ea) &= e\otimes \delta(a) + e\otimes \omega a,\\
		\label{eq:23ter}
		\bar\nabla(a\bar e)&= \bar\nabla(\overline{ea^*}) =
		\delta^\circ(a)\otimes \bar e  +\epsilon' J\omega J^{-1}a^\circ\otimes\bar e\quad
		\text{ with }\; \omega=\omega^*\in\Omega
		%a^\circ\otimes,\bar 1 .
	\end{align}
	where $e$ is the unit of $\A$ and \eqref{eq:23ter} follows from
	\eqref{eq:23bis}, using $\epsilon' J\delta(a^*)J^{-1}=
	\delta^\circ(a)$ (see \eqref{eq:twistzero}). Modulo the identification
	$\HH_R\simeq \HH\simeq \HH_L$, one gets (e.g. \cite[Cor. 3.6,3.11]{landimart:twistgauge})
	\begin{align}
		\label{eq:42}
		D_R= (\Sigma\otimes\mathbb{I})\circ(1\otimes_{\nabla}D)&=D+\omega,\\
		\label{eq:42bist}
		D_L=	(\mathbb{I}\otimes\Sigma^{-1})\circ(D\otimes_{\bar\nabla}1)&=D+\epsilon'  J \omega J^{-1},
	\end{align}
	in agreement with the formula \eqref{eq:46bis} for $D_R$ and
	\eqref{eq:175} for $D_L$.
	The left and right Morita equivalent triples of Prop. 
	\ref{prop:rightMorita} \ref{eq:18}  thus  differ from the initial triple by the
	substitution of $D$ with $D+\omega$ and $D+\epsilon'
	J\omega J^{-1}$. The latter are called \emph{twisted
		fluctuations} of $D$ by $\A$ and $\A^\circ$.

	For a real spectral triple \eqref{eq:188}, the Hilbert space $\HH'$ 
	\eqref{eq:twisted_triple_Mequiv_general} coincides with the initial 
	one $\HH$ and $J'$, $\Gamma'$ in \eqref{eq:17bter},\eqref{eq:17bbis} with $J$, $\Gamma$.
	The Dirac operator \eqref{eq:23} is (see e.g. \cite[Prop. 3.13]{landimart:twistgauge})
	\begin{equation}
		\label{eq:TwistedDiracFlucAlternative}
		D'=D+\omega_{(1)}+\hat{\omega}_{(1)}
	\end{equation}
	where, to match the notations of \cite{chamconnessuijlekom:innerfluc},
	one  uses \eqref{eq:34} and denote
	\begin{equation}
		\omega_{(1)}:=\omega = \sum_{i}a_i\left[D,b_i\right]_{\sigma},\quad 
		\hat{\omega}_{(1)}=\sum_{i}\hat{a}_i[D,\hat{b}_i]_{\sigma^{\circ}}= \epsilon'\,J \omega_{(1)}J^{-1}.
		\label{eq:13} 
	\end{equation}
	Exporting the real twisted spectral triple $(\A,\HH,D),\sigma$ via self
	Morita equivalence thus amounts to substituting $D$ with $D'$.  
	This is a  \emph{twisted inner
		fluctuation}, first  introduced by imitation of the ordinary case in  \cite{landimart:twisting}, then
	rigorously derived by Morita equivalence in \cite{landimart:twistgauge}.
	
	What happens if one no longer assumes the twisted first-order condition ? This has been
	investigated in \cite{chamconnessuijlekom:innerfluc} for the
	non-twisted case and leads to Pati-Salam extensions of the Standard
	Model \cite{chamconnessuijlekom:beyond}. The process is similar in the
	twisted case, as described below. Some of the  physical consequences are
	investigated in \cite{M.-Filaci:2020aa}. 
	\begin{prop}
		\label{sec:twist-fluct-with-1}
		Consider a real twisted
		spectral triple \eqref{eq:188} that satisfies all the properties
		listed in \S \ref{sec:real-twist-spectr} but the
		twisted first-order condition \eqref{eq:TwistedFirstOrder}. Then an
		inner twisted-fluctuation amounts to substitute $D$ with  
		\begin{equation}
			\label{eq:222}
			D_\omega = D+\omega+ \hat\omega_{(1)}+ \hat\omega_{(2)}
		\end{equation}
		while $\omega_{(1)}$, $\hat\omega_{(1)}$ are defined in \eqref{eq:13}
		and\begin{equation}
			\label{eq:224}
			\omega_{(2)}:= \sum_j \hat a_j[\omega,\hat b_j]_{\sigma^\circ}.
		\end{equation}
	\end{prop}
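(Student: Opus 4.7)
The plan is to specialize the general Morita equivalence construction of Prop.~\ref{sec:morita-equiv-real} to the self-Morita case ${\cal B}=\A$, ${\cal E}=\overline{\cal E}=\A$ of \eqref{eq:43}, but this time dropping the twisted first-order condition wherever it was previously invoked. In self-Morita equivalence, any hermitian $\Omega$-connection on $\cal E$ is of the form $\nabla\xi=1\otimes(\delta(\xi)+\omega_{(1)}\xi)$ with $\omega_{(1)}=\omega_{(1)}^{*}\in\Omega$, and its conjugate $\bar\nabla$ on $\overline{\cal E}$ is given by~\eqref{eq:23ter}. Under the identification $\HH'\simeq\HH$, the fluctuated Dirac operator is encoded by \eqref{eq:125}, and the only summand that needs to be recomputed is the third one, $\Sigma\xi\otimes\bar\eta_{(-1)}\psi\otimes\overline{\Sigma\eta_{(0)}}$, since its derivation went through Lemma~\ref{sec:morita-equiv-real-1}.

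The first step is to revisit that lemma. The first-order condition was used only in \eqref{eq:196}--\eqref{eq:197}, to kill the cross term $\xi_{(0)}\otimes[\xi_{(1)},a^\circ]_{\sigma^\circ}\psi$ in $[\I\otimes_\nabla D,\tilde\pi^\circ(a^\circ)]_{\sigma^\circ}$. Dropping that assumption, this term survives, and the action of $\delta_R^\circ(a)$ on $\HH_R$ acquires an extra piece $\xi_{(0)}\otimes[\xi_{(1)},a^\circ]_{\sigma^\circ}\psi$. By the same linearity argument that makes $R$ a bimodule morphism, for any $\omega^\circ=\sum_j c_j^\circ\,\delta^\circ(d_j)\in\Omega^\circ$ one now obtains
\begin{equation*}
R(\omega^\circ)(\xi\otimes\psi)=\Sigma\xi\otimes\omega^\circ\psi+\Sigma\xi_{(0)}\otimes\Bigl(\textstyle\sum_j c_j^\circ[\xi_{(1)},d_j^\circ]_{\sigma^\circ}\Bigr)\psi.
\end{equation*}

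Next I would specialize to $\xi=\eta=1\in\A$, so that $\xi_{(0)}=\eta_{(0)}=1$ and $\xi_{(1)}=\omega_{(1)}$; by \eqref{eq:twistzero}, $\bar\eta_{(-1)}=\varepsilon'J\omega_{(1)}J^{-1}=\hat\omega_{(1)}$. Writing $\hat\omega_{(1)}=\sum_j\hat a_j[D,\hat b_j]_{\sigma^\circ}$ as in~\eqref{eq:13} and applying the displayed formula above to $\omega^\circ=\hat\omega_{(1)}$, the third summand of \eqref{eq:125} splits into the usual $\hat\omega_{(1)}\psi$ plus an extra contribution $\sum_j\hat a_j[\omega_{(1)},\hat b_j]_{\sigma^\circ}\psi$, which is precisely the non-linear term~\eqref{eq:224}. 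Collecting the three summands of~\eqref{eq:125} then yields the claimed formula for $D_\omega$.

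The main obstacle is the bookkeeping in the second step: one has to cleanly identify $\hat\omega_{(1)}$, defined through $J$-conjugation, with an element of $\Omega^\circ$ of the form $\sum_j c_j^\circ\,\delta^\circ(d_j)$ carrying the correct hatted coefficients, and then verify that the non-linear correction produced by the modified $R(\hat\omega_{(1)})$ matches the definition~\eqref{eq:224} of $\omega_{(2)}$ on the nose. This requires a consistent use of \eqref{eq:twistzero} for the interplay of $J$ with $\sigma,\sigma^\circ$, together with the regularity~\eqref{eq:RegularityAutomorph} to handle the adjoints implicit in the hat notation. A last sanity check is that the Grassmann part $\delta(\xi)$ of $\nabla$ contributes nothing extra, which is immediate from $\delta(1)=0$; this is what makes the whole correction genuinely non-linear in $\omega_{(1)}$ and absent when $\omega_{(1)}=0$.
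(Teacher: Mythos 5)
Your proposal is correct and follows essentially the same route as the paper's proof: you locate the unique use of the twisted first-order condition in the vanishing of the cross term \eqref{eq:197} of Lemma \ref{sec:morita-equiv-real-1}, retain that term to get the corrected action of $R(\omega^\circ)$ on $\HH_R$, and then specialize to $\xi=\eta=e$ with $\bar e_{(-1)}=\hat\omega_{(1)}=\sum_j\hat a_j[D,\hat b_j]_{\sigma^\circ}$ so that the surviving term is exactly $\omega_{(2)}$ of \eqref{eq:224}. The only cosmetic difference is that you keep the general $\Sigma\xi_{(0)}$ bookkeeping before specializing, whereas the paper sets $\xi=e$ immediately; both give \eqref{eq:221} and hence $D_\omega$.
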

	\begin{proof}
		Given an hermitian $\Omega$-connection $\nabla$ and an hermitian
		$\Omega^\circ_R$-connection $\nabla^\circ_R$, the definition
		\eqref{eq:23} of $D'$ still makes sense. However, only the first statement of lemma \ref{sec:morita-equiv-real-1}
		is true: $\delta^\circ_R(a)$ is bounded on $\HH_R$ but
		acts as
		\begin{equation}
			\label{eq:33}
			\delta^\circ_R(a)(\xi\otimes\psi)= \xi\otimes\delta^\circ(a)\psi +
			\xi_{(0)}\otimes[\xi_{(1)}, a^\circ]_{\sigma^\circ}\psi \quad
			\forall \xi\otimes\psi\in\HH_R,
		\end{equation}
		for \eqref{eq:197} has no  reason to vanish any more. In particular,
		for $\xi=e$, one has
		\begin{equation}
			\label{eq:40}
			\delta^\circ_R(a)(e\otimes\psi)= e\otimes\delta^\circ(a)\psi +
			e\otimes[\omega, a^\circ]_{\sigma^\circ}\psi.
		\end{equation}
		Thus, instead of \eqref{eq:208}, the action of $R(\omega^\circ)$ on
		$\HH_R$, for $\omega^\circ=\sum_j a_j^\circ\,\delta^\circ(b_j)$, is
		\begin{align}
			\label{eq:213}
			R(\omega^\circ)\Psi&= e\otimes\omega^\circ\psi +
			e\otimes \sum_j a_j^\circ[\omega,b_j^\circ]_{\sigma^\circ}\psi\\
			\label{eq:227}
			&= e\otimes\left(\omega^\circ+
			\sum_j a_j^\circ[\omega,b_j^\circ]_{\sigma^\circ}\right)\psi\qquad \forall \Psi=e\otimes\psi\in\HH_R.
		\end{align}
		In particular, for $\nabla^\circ_R=\bar\nabla_R$ as in
		\eqref{eq:225}, one has
		\begin{equation}
			\omega^\circ=\bar e_{(-1)}=\epsilon' J e_{(1)}  J^{-1}= \epsilon'
			J \omega J^{-1}=\hat\omega_{(1)}.\label{eq:223}
		\end{equation}
		In other terms,  $\omega^\circ= \sum_j \hat
		a_j [D, \hat b_j]_{\sigma^\circ}$,  meaning that the
		parenthesis of \eqref{eq:227} is $\hat\omega_{(1)}+\omega_{(2)}$.
		
		For $\Psi'=e\otimes \psi\otimes e$ a generic element of $\HH'$, the
		Dirac operator $D'$ in \eqref{eq:125} reads
		\begin{equation}
			\label{eq:221}
			D'\Psi' = e\otimes D\psi\otimes \bar e + e\otimes\omega_{(1)}\psi\otimes\bar
			e+ e\otimes \left(\hat\omega_{(1)}+ \hat\omega_{(2)}\right)\psi\otimes\bar e
		\end{equation}
		Identifying $e\otimes\psi\otimes\bar e\simeq\psi$ amounts to
		identifying the operator $D'$ with $D_\omega$.
	\end{proof}
	
	If the twisted-first condition holds, then $\omega_{(2)}$ vanishes and
	one finds back \eqref{eq:TwistedDiracFlucAlternative}. Therefore, as in the non-twisted case, the term $\omega_{(2)}$ breaks the
	linearity of the map $\omega\mapsto D + \omega_{(1)} +\hat{\omega}_{(1)}$
	between twisted $1$-forms and fluctuations. To 7truecm, one has the concluding

	\begin{prop}
		The triple $(\A, \HH, D_\omega)$ together with the automorphism
		$\sigma$ and the operators $\Gamma$, $J$  has all the properties of a real
		twisted spectral triple, but the twisted first order condition.
	\end{prop}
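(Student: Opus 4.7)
The strategy is to leverage the identification $D_\omega = D'$ established in the proof of Prop.\ref{sec:twist-fluct-with-1}: under the self-Morita equivalence $\B = \A$, $\mathcal{E} = \overline{\mathcal{E}} = \A$, the operator $D_\omega$ coincides with the Dirac operator $D'$ of Prop.\ref{sec:morita-equiv-real}. The task thus reduces to tracking which axioms of a real twisted spectral triple survive the removal of the first-order hypothesis on the original triple.

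First I would check boundedness of $[D_\omega,a]_\sigma$ and self-adjointness of $D_\omega$. Each summand $\omega_{(1)}$, $\hat\omega_{(1)}$, $\omega_{(2)}$ is a bounded operator on $\HH$, and $[D,a]_\sigma$ is bounded by hypothesis, so $[D_\omega,a]_\sigma$ is bounded. For self-adjointness, the hermiticity of $\nabla$ gives $\omega = \omega^*$; conjugation by $J$ preserves this for $\hat\omega_{(1)}$; and for $\omega_{(2)} = \sum_j \hat a_j [\omega,\hat b_j]_{\sigma^\circ}$ one uses \eqref{eq:147} together with the regularity \eqref{eq:84} of $\sigma^\circ$ and the involution rules \eqref{eq:132}--\eqref{eq:132bis}. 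The compact resolvent of $D_\omega$ follows as in the proof of Prop.\ref{prop:rightMorita}, since $D_\omega$ differs from $D$ by a bounded operator.

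Since $\pi$, $J$, $\Gamma$, $\HH$ are inherited unchanged, the order-zero condition and the relations $J^2 = \varepsilon\I$, $J\Gamma = \varepsilon''\Gamma J$ carry over directly. It remains to verify $\Gamma D_\omega = -D_\omega \Gamma$ and $JD_\omega = \varepsilon' D_\omega J$. For the grading, $\Gamma$ commutes with both $\A$ and $\A^\circ$ (hence with the $\hat a_j$ and $\hat b_j$) while anticommuting with $\omega$ (an odd one-form), and a short computation yields $\Gamma\omega_{(2)} = -\omega_{(2)}\Gamma$ and $\Gamma \hat\omega_{(1)} = -\hat\omega_{(1)}\Gamma$. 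For the real structure, using $JD = \varepsilon' DJ$, $J^{-1} = \varepsilon J$ and $\hat\omega_{(1)} = \varepsilon' J\omega J^{-1}$, the verification reduces to showing $J\omega_{(2)} = \varepsilon'\omega_{(2)}J$, which follows from the identification \eqref{eq:34} of $J$-conjugation with the opposite action together with the regularity \eqref{eq:84} of $\sigma^\circ$.

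The main obstacle I expect is precisely this last compatibility $J\omega_{(2)} = \varepsilon'\omega_{(2)}J$: the nested twisted commutator structure of $\omega_{(2)}$, combined with the non-involutivity of $\sigma$, requires one to track carefully the signs $\varepsilon, \varepsilon'$ and the distinction between $\sigma$ and $\sigma^{-1}$ when moving $J$ past both $\hat a_j$ and the inner twisted commutator. The failure of the twisted first-order condition is not an obstacle but the very content of the statement: the term $\omega_{(2)}$ was introduced precisely to absorb the non-vanishing of $[[D,a]_\sigma, \hat b]_{\sigma^\circ}$, so its presence obstructs the first-order condition for $D_\omega$ by design.
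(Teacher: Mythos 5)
Your strategy coincides with the paper's: identify $D_\omega$ with the operator $D'$ of Prop.~\ref{sec:morita-equiv-real} specialised to self-Morita equivalence, note that boundedness of twisted commutators, the compact resolvent, the grading, the order-zero condition and $J^2=\varepsilon\I$, $J\Gamma=\varepsilon''\Gamma J$ all carry over, and reduce the whole statement to the single new check $J\omega_{(2)}=\varepsilon'\omega_{(2)}J$. The gap is in how you propose to close that check. Conjugating $\omega_{(2)}=\sum_j\hat a_j[\omega_{(1)},\hat b_j]_{\sigma^\circ}$ by $J$, the ingredients you invoke (the identification \eqref{eq:34} of $J$-conjugation with the opposite action, the regularity \eqref{eq:84} in the form $\sigma^\circ(\hat b_j)=J\sigma(b_j)J^{-1}$ of \eqref{eq:228}, and $\hat\omega_{(1)}=\varepsilon'J\omega_{(1)}J^{-1}$) only yield
\begin{equation*}
J\omega_{(2)}J^{-1}=\varepsilon'\sum_j a_j[\hat\omega_{(1)},b_j]_{\sigma},
\end{equation*}
which is a priori a different operator from $\varepsilon'\omega_{(2)}$. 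The missing idea, which is the heart of the paper's argument, is the rewriting
\begin{equation*}
\omega_{(2)}=\sum_j\hat a_j[\omega_{(1)},\hat b_j]_{\sigma^\circ}=\sum_j a_j[\hat\omega_{(1)},b_j]_{\sigma},
\end{equation*}
obtained from the order-zero condition together with the exchange identity \eqref{eq:229}, $[[D,b]_\sigma,\hat a]_{\sigma^\circ}=[[D,\hat a]_{\sigma^\circ},b]_\sigma$ (itself a direct computation using only order zero). Without that identity the conjugation computation does not close, and it cannot be bypassed by the first-order condition, which is exactly what is absent here (and whose validity would make $\omega_{(2)}$ vanish altogether).

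A secondary point: your treatment of selfadjointness is not the paper's and, as sketched, is doubtful. The term $\omega_{(2)}$ depends on the chosen representative $\sum_j a_j\otimes b_j$ of $\omega$, and $\omega_{(1)}=\omega_{(1)}^*$ alone does not make $\sum_j\hat a_j[\omega_{(1)},\hat b_j]_{\sigma^\circ}$ manifestly selfadjoint via \eqref{eq:147} and \eqref{eq:132}--\eqref{eq:132bis}; already in the untwisted case of \cite{chamconnessuijlekom:innerfluc} one needs selfadjointness of the element of the enveloping algebra, not merely of the one-form. The paper instead inherits selfadjointness and the compact resolvent of $D_\omega=D'$ wholesale from Prop.~\ref{sec:morita-equiv-real} (whose relevant parts, as it points out, do not use the first-order condition), so if you want to argue directly on $\HH$ you should either impose the corresponding condition on the representative or fall back on that proposition as the paper does.
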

	\begin{proof}
		The proof of Prop.\ref{sec:morita-equiv-real} does not refer to the first order condition, except to
		show that the fluctuated triple satisfies the first order
		condition. So by applying this proposition to self Morita
		equivalence, one obtains that $(\A, \HH, D_\omega),\sigma$ together
		with $\Gamma$ is a graded twisted
		spectral triple. As in the proof of proposition
		\ref{sec:morita-equiv-real}, one checks that $J^2=\epsilon$,
		$J\Gamma=\epsilon''\Gamma J$. The only point is to check that
		$ JD_\omega=\epsilon' D_\omega J$.
		Actually Prop. \ref{sec:morita-equiv-real}  guarantees that
		$(D+\omega_{(1)}+ \hat\omega_{(1)}) J = \epsilon' (D+\omega_{(1)}+
		\hat\omega_{(1)}) J$, so one just needs to show that
		\begin{equation}
			\omega_{(2)}J =
			\epsilon' \omega_{(2)}J.
			\label{eq:219}
		\end{equation}
		Omitting the summation symbol, one has	
			\begin{equation*}
			\begin{split}
				J\omega_{(2)}J^{-1}&=J\hat{a}_i[\omega_{(1)},\hat{b}_i]_{\sigma^{\circ}}J^{-1}=a_iJ\left(\omega_{(1)}\hat{b}_i-\sigma^{\circ}(\hat{b}_i)\omega_{(1)}\right)J^{-1}=\\
				&=\varepsilon'a_i\left(\hat{\omega}_{(1)}b_i-\sigma(b_i)\hat{\omega}_{(1)}\right)=\varepsilon'a_i[\hat{\omega}_{(1)},b_i]_{\sigma},
			\end{split}
		\end{equation*}
		where we used $J^{-1}=\epsilon J$, then
		${\omega}_{(1)}=\varepsilon'J\hat\omega_{(1)}J^{-1}$
		together with $J^{2}=J^{-2}=\varepsilon\mathbb{I}$, as
		well as
		\begin{equation}
			\sigma^\circ(\hat
			b_i)=\sigma^\circ\left(\left(b_i^*\right)^\circ\right)=\left(\sigma^{-1}(b_i^*)\right)^\circ=(\sigma(b_i)^*)^\circ=J\sigma(b_i)J^{-1}.
			\label{eq:228}
		\end{equation}
		The result follows noticing that,
		\begin{align*}
			\omega_{(2)}&=\sum_{i,j}\
			\hat{a}_i[a_j[D,b_j]_{\sigma},\hat{b}_i]_{\sigma^{\circ}}
			=\sum_{i,j}a_j\hat{a_i}[[D,b_j]_{\sigma},\hat{b}_i]_{\sigma^{\circ}}=\sum_{i,j}a_j\hat{a}_i[[D,\hat{b}_i]_{\sigma^{\circ}},b_j]_{\sigma}=\sum_{j}a_j[\hat{\omega}_{(1)},b_j]_{\sigma},
		\end{align*}
		where the second equality follows from the order-zero condition, and
		the third from
		\begin{equation}
			\label{eq:229}
			[[D,b]_\sigma,\hat a]_{\sigma^\circ} = [[D, \hat a]_{\sigma^\circ},b]_\sigma
		\end{equation}
		that is checked by direct computation. %=\varepsilon'\omega^{(2)}_{\sigma}
	\end{proof}
  
	\begin{remark}
		In the non-twisted case,  a selfadjoint element of $\Omega$ 
		is the image of a selfadjoint universal $1$-form 
		(i.e. a selfadjoint element of the  differential algebra $\Omega(\A)$ 
		\cite{Landi1997}). In the twisted case this is no longer the
		case, for the representation
		\begin{equation}
			\begin{split}
				&\pi:\Omega(\A)\longrightarrow\mathcal{B}(\mathcal{H})\\
				\pi(a_0\delta a_1\dots\delta a_n)&\longmapsto\pi(a_0)[D,\pi(a_1)]_{\sigma}\dots[D,\pi(a_n)]_{\sigma}
			\end{split}
		\end{equation}
		is not in general a $*$-homomorphism.  
	\end{remark}

	\section{Gauge transformation}
	\label{sec:gauge-transformation}
	
	We investigate in \S\ref{sec:non-linear-gauge} how a twisted spectral triple that does not meet the twisted first-order
	condition behaves under a gauge transformation, still following the strategy of
	\cite{chamconnessuijlekom:innerfluc}. The loss of selfadjointness is
	discussed in \S \ref{sec:self-adjointness}.  
	We begin in \S \ref{sec:twist-gauge-transf} by recalling the definition of gauge transformations for
	twisted spectral triples \cite{landimart:twistgauge},
	which is  a straightforward adaptation of the non-twisted case 
	\cite{Connes:1996fu}. 
	
	\subsection{Twisted gauge transformation}
	\label{sec:twist-gauge-transf}
	
	A gauge transformation on a module $\cal E$  equipped with a connection
	$\nabla$  is a change of connection, obtained by acting on $\nabla$
	with a unitary
	endomorphism $u$ of $\cal E$ (see e.g. \cite{Landi1997} for more details),
	\begin{equation}
		\label{eq:29}
		\nabla\longrightarrow \nabla^u:= u\nabla u^*.
	\end{equation}
	For self-Morita equivalence,  ${\cal E}= \A$ and the
	set of unitary endomorphisms of $\cal E$ is the group  
	\begin{equation}
		\mathcal{U}(\A)=\left\{u\in \A,\,
		\pi(u)\pi(u^*)=\pi(u^*)\pi(u)=\I\right\}
		\label{eq:25}
	\end{equation}
	of unitary elements of $\A$.  
	
	Given the real twisted spectral triple
	$(\A, \HH, D'), \sigma$ of Prop. \ref{sec:twist-fluct-with-1}
	obtained by inner fluctuations, with $D'$ given in
	\eqref{eq:TwistedDiracFlucAlternative} and $\Gamma, J$ the grading and
	real structure of the initial triple, then a  gauge transformation amounts to
	substituting $\omega_{(1)}$ and and $\hat{\omega}_{(1)}$ in the twist-fluctuated operator $D'$ 
	with (details are given for instance in
	\cite[\S A.2]{landimart:twistgauge})
	\begin{align}
		\label{eq:32}
		&  \omega^{u}_{(1)} := \sigma(u)[D, u^*]_\sigma + \sigma(u)\,
		\omega_{(1)} u^*,\\ 
		&\hat{\omega}_{(1)}^u := \sigma^\circ(\hat u)[D, \hat u^*]_{\sigma^\circ} +
		\sigma^\circ(\hat u)\,
		\hat\omega_{(1)}\hat  u^*=\epsilon' J\omega^{u}_{(1)}J^{-1}.
		\label{eq:44}
	\end{align}
	This substitution turns out to be equivalent to the
	conjugate action (twisted on 
	the left) on $D'$~of \begin{equation}
		\label{eq:Ad}
		\mathrm{Ad}(u)\psi:=u\psi u^{*} =u\hat u \psi\qquad \forall \psi\in\HH.
	\end{equation}
	Namely one has \cite[Prop. 4.5]{landimart:twistgauge}
	\begin{equation}
		\label{eq:27}
		\mathrm{Ad}(\sigma(u))\,D_{\omega}\,\mathrm{Ad}(u)^{*}=D +
		\omega^u_{(1)} + \hat\omega^{u}_{(1)}.
	\end{equation}
	
	The twisted first-order condition is required for this result. If it
	does not hold,  then there is no reason for
	\eqref{eq:27}  to be true. Actually this is not a surprise nor a
	problem since, as discussed in the
	previous section, this is not the operator $D'$
	that is relevant, but the operator $D_\omega$
	in \eqref{eq:222}.
	By  relaxing the twisted first-order condition, we show below that  the twisted action of
	$\mathrm{Ad}(u)$ on $D_\omega$
	induces a transformation of the non-linear term
	$\omega_{(2)}$ to
	\begin{equation}
		\label{eq:TwistedNonLinearGauge}
		\sigma^{\circ}(\hat{u})\omega_{(2)}\widehat{u^{*}}+\sigma^{\circ}(\hat{u})[\sigma(u)[D,u^{*}]_{\sigma},\widehat{u^{*}}]_{\sigma^{\circ}}.
	\end{equation}
	This is a twisted version of the non-linear gauge transformation of
	\cite{chamconnessuijlekom:innerfluc} (formula for $A_{(2)}$ before 
	lemma 3), and gives a
	non-linear correction to the first-order, linear, fluctuation of $\omega_{(1)}+\hat{\omega}_{(1)}$.
	
	\subsection{Non linear gauge transformation}
	\label{sec:non-linear-gauge}
	
	The inner twisted fluctuation of proposition
	\ref{sec:morita-equiv-real} is a map 
	\begin{equation}
		\label{eq:54}
		\omega \to D_\omega
	\end{equation}
	that associates to any 
	$\omega=\sum_{j=1}^{n}a_j[D,b_j]_{\sigma}$
	the operator $D_\omega$ defined by
	\eqref{eq:222}, with $\omega_{(1)}$, $\hat\omega_{(1)}$ and
	$\omega_{(2)}$ the functions of the components $a_j, b_j$ of $\omega$
	given in \eqref{eq:13} and \eqref{eq:224}. A gauge transformation
	thus amounts to substituting $D_\omega$ with $D_{\omega^u}$
	for $\omega^u$ in \eqref{eq:32}. We show in proposition \ref{prop:TwistedGaugeTransformedNoFirstOrder}
	below that this is equivalent to the twisted adjoint action of
	$\text{Ad}(u)$ on $D_\omega$.
	
	To prove that, we need the preliminary 
	\begin{lem}
		\label{lem:TwistedDiracGaugeLaw}
		Let $\left(\A,\mathcal{H},D\right),{\sigma},J$ be a
		real, twisted, spectral triple that does not
		necessarily fulfils the
		first-order condition. Then, for any
		$u\in\mathcal{U}(\A)$,
		\begin{align}
			\mathrm{Ad}(\sigma(u))\, D\, \mathrm{Ad}(u)^{*}=&D+\sigma(u)[D,u^{*}]_{\sigma}+\sigma^{\circ}(\hat{u})[D,\widehat{u^{*}}]_{\sigma^{\circ}}
			\\&+\sigma^{\circ}(\hat{u})[\sigma(u)[D,u^{*}]_{\sigma},\widehat{u^{*}}]_{\sigma^{\circ}}.
		\end{align}
	\end{lem}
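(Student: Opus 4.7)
The plan is to compute $\mathrm{Ad}(\sigma(u))\,D\,\mathrm{Ad}(u)^*$ directly, by unpacking \eqref{eq:Ad} as a product of left and right multiplication operators and then moving $D$ past these factors with the twisted commutators $[D,\cdot]_\sigma$ and $[D,\cdot]_{\sigma^\circ}$. Using $\mathrm{Ad}(u)=u\hat u$ together with the identifications of \eqref{eq:34}, the left-hand side equals $\sigma(u)\,\widehat{\sigma(u)}\,D\,\widehat{u^*}\,u^*$. Writing $D\widehat{u^*}=\sigma^\circ(\widehat{u^*})D+[D,\widehat{u^*}]_{\sigma^\circ}$ splits the expression into two summands, which I shall call $A$ (containing the $\sigma^\circ(\widehat{u^*})D$ piece) and $B$ (containing the twisted commutator). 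I would evaluate each and recombine.

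For $A=\sigma(u)\widehat{\sigma(u)}\sigma^\circ(\widehat{u^*})\,D\,u^*$, regularity \eqref{eq:RegularityAutomorph} together with \eqref{eq:143} gives $\widehat{\sigma(u)}=\sigma^\circ(\hat u)$. Since $\sigma^\circ$ is an automorphism of $\A^\circ$ and $\hat u\,\widehat{u^*}=JuJ^{-1}Ju^*J^{-1}=1$ as operators on $\HH$, one has $\widehat{\sigma(u)}\,\sigma^\circ(\widehat{u^*})=\sigma^\circ(\hat u\,\widehat{u^*})=1$, whence $A=\sigma(u)Du^*$. Finally $Du^*=[D,u^*]_\sigma+\sigma(u^*)D$ and $\sigma(u)\sigma(u^*)=\sigma(uu^*)=1$ collapse $A$ to $D+\sigma(u)[D,u^*]_\sigma$.

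For $B=\sigma(u)\widehat{\sigma(u)}\,[D,\widehat{u^*}]_{\sigma^\circ}\,u^*$, the order-zero condition \eqref{eq:zero-order} lets me swap the algebra factor $\sigma(u)$ with the opposite-algebra factor $\widehat{\sigma(u)}=\sigma^\circ(\hat u)$, giving $\sigma^\circ(\hat u)\,\sigma(u)\,[D,\widehat{u^*}]_{\sigma^\circ}\,u^*$. The decisive step is then the identity \eqref{eq:229}, applied with $b=u^*$ and $\hat a=\widehat{u^*}$, which rewrites
\begin{equation*}
[D,\widehat{u^*}]_{\sigma^\circ}\,u^* \;=\; \sigma(u^*)\,[D,\widehat{u^*}]_{\sigma^\circ} \;+\; [[D,u^*]_\sigma,\widehat{u^*}]_{\sigma^\circ}.
\end{equation*}
Multiplying on the left by $\sigma(u)$ and using $\sigma(u)\sigma(u^*)=1$ reduces the first summand to $[D,\widehat{u^*}]_{\sigma^\circ}$. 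A further use of \eqref{eq:zero-order}, this time commuting $\sigma(u)$ with $\sigma^\circ(\widehat{u^*})$, allows me to pull $\sigma(u)$ inside the outer $\sigma^\circ$-bracket in the second summand, turning it into $[\sigma(u)[D,u^*]_\sigma,\widehat{u^*}]_{\sigma^\circ}$. Adding $A$ and $\sigma^\circ(\hat u)\cdot B$ reproduces the four-term expression stated in the lemma.

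The obstacle is not analytic but combinatorial: one must carefully track what lives in $\A$ versus $\A^\circ$, use regularity to identify $\widehat{\sigma(u)}$ with $\sigma^\circ(\hat u)$, and resist any swap that would tacitly invoke the twisted first-order condition \eqref{eq:TwistedFirstOrder}. The residual term $\sigma^\circ(\hat u)\,[\sigma(u)[D,u^*]_\sigma,\widehat{u^*}]_{\sigma^\circ}$ is precisely the obstruction that would vanish under \eqref{eq:TwistedFirstOrder}, so the non-trivial content of the lemma is to isolate it cleanly by relying only on the order-zero condition and on \eqref{eq:229}.
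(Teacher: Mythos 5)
Your computation is correct, and it arrives at the same four terms as the paper, but by the mirror-image decomposition: you peel off the opposite-algebra commutator first, writing $D\widehat{u^*}=\sigma^\circ(\widehat{u^*})D+[D,\widehat{u^*}]_{\sigma^\circ}$, whereas the paper first writes $Du^*=\sigma(u^*)D+[D,u^*]_\sigma$, cancels $\sigma(u)\sigma(u^*)=\I$, and only then expands the remaining right factor $\widehat{u^*}$ via $X\widehat{u^*}=\sigma^\circ(\widehat{u^*})X+[X,\widehat{u^*}]_{\sigma^\circ}$ applied to $X=\sigma(u)[D,u^*]_\sigma$. The price of your ordering is that the non-linear term first appears as $[[D,\widehat{u^*}]_{\sigma^\circ},u^*]_\sigma$ and you must invoke the exchange identity \eqref{eq:229} to convert it into $[[D,u^*]_\sigma,\widehat{u^*}]_{\sigma^\circ}$; the paper's ordering produces the term directly and its proof of this lemma never needs \eqref{eq:229} (that identity is only used later, e.g.\ in Prop.~\ref{prop:TwistedGaugeTransformedNoFirstOrder}). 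This is legitimate: \eqref{eq:229} holds using only the order-zero condition, so no hidden appeal to the first-order condition sneaks in. Two small points of hygiene. First, your closing sentence says ``adding $A$ and $\sigma^\circ(\hat u)\cdot B$'', but you defined $B$ to already contain the prefactor $\sigma(u)\widehat{\sigma(u)}=\sigma^\circ(\hat u)\sigma(u)$, so the sum you want is simply $A+B$; as written you would be double-counting a factor of $\sigma^\circ(\hat u)$. Second, when you pull $\sigma(u)$ inside the outer $\sigma^\circ$-bracket, it is worth stating explicitly that $[\sigma(u)X,\widehat{u^*}]_{\sigma^\circ}-\sigma(u)[X,\widehat{u^*}]_{\sigma^\circ}=[\sigma(u),\sigma^\circ(\widehat{u^*})]X=0$ by order-zero, which is exactly what you gesture at. With those cosmetic fixes the argument is complete.
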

	\begin{proof}
		Remembering
		(\ref{eq:36},\ref{eq:34}) the right action \eqref{eq:19}
		of $\sigma(u)^*$ is the left multiplication by 
		\begin{equation}
			\widehat{\sigma(u)}=J\sigma(u)J^{-1}=\sigma^\circ((u^*)^\circ)= \sigma^\circ(\hat u)
			\label{eq:22}
		\end{equation}
		so that the adjoint action
		\eqref{eq:Ad} of $\sigma(u)$ writes 
		\begin{equation}
			\label{eq:twist_Ad}
			\mathrm{Ad}(\sigma(u))=\sigma(u)\sigma^{\circ}(\hat{u})=\sigma^{\circ}(\hat{u})\sigma(u)
		\end{equation}
		(the second equality comes from the
		order-zero condition).  As well
		\begin{equation}
			\text{Ad}(u)^* = (u\hat u)^*= {\hat u}^* u^* =
			u^*\widehat{u^*},\label{eq:53}
		\end{equation}
		where we use the commutation of the involution with the conjugation by $J$,
		\begin{equation}
			{\hat u}^*=\widehat{u^*}.\label{eq:20}
		\end{equation}
		Therefore
		\begin{align}
			\begin{split}
				\mathrm{Ad}(\sigma(u)) D
				\mathrm{Ad}(u)^{*}\!\!&=\sigma^{\circ}(\hat{u})\sigma(u)\,D\,
				u^{*}\widehat{u^{*}}
				=\sigma^{\circ}(\hat{u})\sigma(u)\left(\sigma(u^{*})D+[D,u^{*}]_{\sigma}\right)\widehat{u^{*}},\\[4pt]
				&=\sigma^{\circ}(\hat{u})\left(D\widehat{u^{*}}+\sigma(u)[D,u^{*}]_{\sigma}\widehat{u^{*}}\right)
				=\sigma^{\circ}(\hat{u})\left(\sigma^\circ(\widehat{u^{*}})D+[D,\widehat{u^{*}}]_{\sigma^{\circ}}\right),\\[4pt]
				&+\sigma^{\circ}(\hat{u})\left(\sigma^{\circ}(\widehat{u^{*}})\sigma(u)[D,u^{*}]_{\sigma}+[\sigma(u)[D,u^{*}]_{\sigma},\widehat{u^{*}}]_{\sigma^{\circ}}\right),\\
				\nonumber
				&=D\!+\!\sigma^{\circ}(\hat{u})[D,\widehat{u^{*}}]_{\sigma^{\circ}}+\sigma(u)[D,u^{*}]_{\sigma}+\sigma^{\circ}(\hat{u})[\sigma(u)[D,u^{*}]_{\sigma},\widehat{u^{*}}]_{\sigma^{\circ}}\!.
			\end{split}
		\end{align}
		
		\vspace{-.75truecm}	\end{proof}

		We now come to the main result of this section, which shows that even
	if the condition of order one is not met, a
	twisted gauge transformation is equivalent to the adjoint action of
	$\text{Ad}(u)$ (twisted on the right) on the Dirac operator.
	\begin{prop}
		\label{prop:TwistedGaugeTransformedNoFirstOrder}
		Let $(\A,\mathcal{H},D),\sigma,J$ be a real twisted
		spectral triple that does not necessarily satisfy the twisted
		first-order condition, and 
		$\omega=\sum_{j=1}^{n}a_j[D,b_j]_{\sigma}$ a
		twisted  $1$-form. Then for any unitary
		$u$ in $\mathcal{U}(A)$ one has 
		\begin{equation*}
			\mathrm{Ad}(\sigma(u))D_{\omega}\mathrm{Ad}(u)^{*}=D_{\omega^{u}}
		\end{equation*}
		for
		\begin{equation}
			\omega^{u}=\sigma(u)\omega
			u^{*}+\sigma(u)[D,u^*]_\sigma,
			\label{eq:47}
		\end{equation} 
	\end{prop}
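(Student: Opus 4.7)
The plan is to decompose $D_\omega = D + \omega_{(1)} + \hat\omega_{(1)} + \omega_{(2)}$ and apply $\mathrm{Ad}(\sigma(u))\,(\cdot)\,\mathrm{Ad}(u)^*$ to each summand. For the bare Dirac part, Lemma \ref{lem:TwistedDiracGaugeLaw} furnishes the full expansion, with the principal linear corrections $\sigma(u)[D,u^*]_\sigma$ and $\sigma^\circ(\hat u)[D,\widehat{u^*}]_{\sigma^\circ}$, together with the anomalous non-linear piece $\sigma^\circ(\hat u)[\sigma(u)[D,u^*]_\sigma,\widehat{u^*}]_{\sigma^\circ}$ that appears already in \eqref{eq:TwistedNonLinearGauge}. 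The role of the remaining three conjugations is to provide the additional pieces that, combined with the Lemma's output, reassemble into $\omega^u$, $\hat\omega^u_{(1)}$, and $\omega^u_{(2)}$.

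Using \eqref{eq:twist_Ad} and \eqref{eq:53}, for any operator $X$ on $\HH$ one has $\mathrm{Ad}(\sigma(u))\,X\,\mathrm{Ad}(u)^* = \sigma^\circ(\hat u)\sigma(u)\,X\,u^*\widehat{u^*}$. For $X=\omega_{(1)}$, I would slide $u^*$ past $\widehat{u^*}$ by the order-zero condition and then expand $\omega_{(1)}\widehat{u^*} = \sigma^\circ(\widehat{u^*})\omega_{(1)}+[\omega_{(1)},\widehat{u^*}]_{\sigma^\circ}$, the commutator being no longer zero since the twisted first-order condition is dropped. Since $\sigma^\circ(\hat u)\sigma^\circ(\widehat{u^*})=\sigma^\circ(\hat u\,\widehat{u^*})=1$ and $[\sigma(u),\sigma^\circ(\hat u)]=0$ by order zero, the conjugation reduces to $\sigma(u)\omega_{(1)} u^* + \sigma^\circ(\hat u)[\sigma(u)\omega u^*,\widehat{u^*}]_{\sigma^\circ}$. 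A symmetric calculation for $X=\hat\omega_{(1)}$ (pushing $\sigma(u)$ past $\hat\omega_{(1)}$ via the appropriate twisted commutator) produces the principal $\sigma^\circ(\hat u)\hat\omega_{(1)}\widehat{u^*}$ together with a residual commutator, and $X=\omega_{(2)}$ yields $\sigma^\circ(\hat u)\omega_{(2)}\widehat{u^*}$ plus analogous corrections.

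Collecting principal parts from the Lemma and the conjugation of $\omega_{(1)}$ produces $\sigma(u)[D,u^*]_\sigma + \sigma(u)\omega u^* = \omega^u$, while those from the Lemma and the conjugation of $\hat\omega_{(1)}$ give $\sigma^\circ(\hat u)[D,\widehat{u^*}]_{\sigma^\circ} + \sigma^\circ(\hat u)\hat\omega_{(1)}\widehat{u^*}$, which equals $\epsilon' J\omega^u J^{-1}=\hat\omega^u_{(1)}$ by \eqref{eq:twistzero}. The collected non-linear corrections consolidate, by bilinearity of the twisted commutator $[\,\cdot\,,\widehat{u^*}]_{\sigma^\circ}$, into $\sigma^\circ(\hat u)[\omega^u,\widehat{u^*}]_{\sigma^\circ} + \sigma^\circ(\hat u)\omega_{(2)}\widehat{u^*}$, which is precisely \eqref{eq:TwistedNonLinearGauge} and matches $\omega^u_{(2)}$ for the canonical decomposition of $\omega^u$. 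The main obstacle is the bookkeeping of the auxiliary commutators generated along the way: the conjugation of $\hat\omega_{(1)}$ produces terms of the form $[\hat\omega_{(1)},u]_\sigma$, and $\omega_{(2)}$ produces nested twisted commutators that require identity \eqref{eq:229} to shuttle between $\sigma$- and $\sigma^\circ$-twists, together with the twisted Leibniz rules \eqref{eq:TwistedLeibniz} and \eqref{eq:TwistedLeibniz2}, so that every piece collapses into the common form $\sum_k \hat c_k[\omega^u,\hat d_k]_{\sigma^\circ}$ demanded by the definition of $\omega^u_{(2)}$.
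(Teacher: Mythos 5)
Your overall strategy---conjugating each summand of $D_\omega=D+\omega_{(1)}+\hat\omega_{(1)}+\omega_{(2)}$ and reassembling---is viable, and is essentially the paper's computation run in reverse (the paper instead writes $\omega^u=\sum_{j\geq 0}a'_j[D,b'_j]_\sigma$ with the explicit components \eqref{eq:49}--\eqref{eq:50}, expands $D_{\omega^u}$ from them, and only at the very end invokes Lemma \ref{lem:TwistedDiracGaugeLaw}). The gap is in your decisive consolidation step. Conjugating $\hat\omega_{(1)}$ leaves the residual
\begin{equation*}
\mathrm{Ad}(\sigma(u))\,\hat\omega_{(1)}\,\mathrm{Ad}(u)^*-\sigma^\circ(\hat u)\,\hat\omega_{(1)}\widehat{u^*}
=\sigma(u)\,\sigma^\circ(\hat u)\,[\hat\omega_{(1)},u^*]_\sigma\,\widehat{u^*},
\end{equation*}
and conjugating $\omega_{(2)}$ gives $\sigma^\circ(\hat u)\,\omega_{(2)}\widehat{u^*}+\sigma(u)\,\sigma^\circ(\hat u)\,[\omega_{(2)},u^*]_\sigma\,\widehat{u^*}$. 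Neither residual vanishes once the twisted first-order condition is dropped, yet both are absent from your claimed consolidated expression $\sigma^\circ(\hat u)[\omega^u,\widehat{u^*}]_{\sigma^\circ}+\sigma^\circ(\hat u)\omega_{(2)}\widehat{u^*}$. That expression is also not ``precisely \eqref{eq:TwistedNonLinearGauge}'': since $\omega^u=\sigma(u)\omega u^*+\sigma(u)[D,u^*]_\sigma$, it differs from \eqref{eq:TwistedNonLinearGauge} by $\sigma^\circ(\hat u)[\sigma(u)\omega u^*,\widehat{u^*}]_{\sigma^\circ}$, which is nonzero exactly in the regime you are considering. So the identity you assert at the crucial moment is false as stated, which indicates the bookkeeping you defer was not actually carried out.

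What the proof really requires, and what is missing, is the identification of the leftover with $\omega^u_{(2)}$ \emph{for a specified decomposition of $\omega^u$}: the map $\omega\mapsto D_\omega$ acts on the components $\sum_j a_j\otimes b_j^\circ$, not on the $1$-form alone, so you must exhibit $\omega^u=\sum_{j\geq 0}a'_j[D,b'_j]_\sigma$ (the paper takes $a'_0=\sigma(u)(\I-\sum_j a_j\sigma(b_j))$, $b'_0=u^*$, $a'_j=\sigma(u)a_j$, $b'_j=b_ju^*$) and then compute $\omega^u_{(2)}=\sum_{j\geq0}a'_j[\hat\omega^u_{(1)},b'_j]_\sigma$ --- note it is $\hat\omega^u_{(1)}$ from \eqref{eq:44}, not $\hat\omega_{(1)}$, that enters these commutators, and the $j=0$ term is essential. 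Expanding this (the paper's \eqref{eq:51}--\eqref{eq:58}, using the order-zero condition, \eqref{eq:229} and the twisted Leibniz rule) yields
\begin{equation*}
\omega^u_{(2)}=\mathrm{Ad}(\sigma(u))\left(\omega_{(1)}+\hat\omega_{(1)}+\omega_{(2)}\right)\mathrm{Ad}(u)^*-\sigma(u)\omega_{(1)}u^*-\sigma^\circ(\hat u)\hat\omega_{(1)}\widehat{u^*}+\sigma^\circ(\hat u)[\sigma(u)[D,u^*]_\sigma,\widehat{u^*}]_{\sigma^\circ},
\end{equation*}
which is precisely where your stray residuals are absorbed, and from which the proposition follows by Lemma \ref{lem:TwistedDiracGaugeLaw}. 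Your proposal gestures at this computation but replaces it with an incorrect closed form, so the main work of the proof is not done.
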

	\begin{proof}
		We adapt the method of \cite{chamconnessuijlekom:innerfluc} to  compute 
		\begin{equation}
			D_{\omega^u} = \omega_{(1)}^u + \hat\omega_{(1)}^u + \omega_{(2)}^u
			\label{eq:59}
		\end{equation}
		as the image of
		$\omega^u$  under the map  \eqref{eq:54}.
		The terms $\omega_{(1)}^u$ and $\hat\omega_{(1)}^u$ are given by
		\eqref{eq:32}, \eqref{eq:44}. 
		
		To compute $ \omega_{(2)}^u$, it convenient to
		rewrite $\omega_{(1)}^u$ as 
		\begin{equation*}
			\begin{split}
				\omega_{(1)}^u&=\sigma(u)\left(\sum_{j=1}^{n}a_{j}[D,b_{j}]_{\sigma}\right)u^{*}+\sigma(u)[D,u^{*}]_{\sigma},\\
				&=\sigma(u)\sum_{j=1}^{n}a_{j}\left(\left[D,b_{j}u^{*}\right]_{\sigma}-\sigma(b_{j})\left[D,u^{*}\right]_{\sigma}\right)+\sigma(u)[D,u^{*}]_{\sigma}=\\
				&=\sigma(u)\left(\I-\sum_{j=1}^{n}a_{j}\sigma(b_{j})\right)[D,u^{*}]_{\sigma}+\sum_{j=1}^{n}\sigma(u)a_{j}[D,b_{j}u^{*}]_{\sigma}=\sum_{j=0}^{n}a_{j}'[D,b_{j}']_{\sigma}
			\end{split}
		\end{equation*}
		where we defined
		\begin{align}
			\label{eq:49}
			&a_{0}'=\sigma(u)\left(\I-\sum_{j=1}^{n}a_{j}\sigma(b_{j})\right) & b_{0}'=u^{*},\\
			\label{eq:50}
			&a_{j}'=\sigma(u)a_{j} & b_{j}'=b_{j}u^{*} & \quad\forall j\geq 1.
		\end{align}
		Notice that the same relation holds for any operator
		$T\in\mathcal{L}(\mathcal{H})$, namely
		\begin{equation}
			\sum_{j=0}^{n}a_{j}'[T,b_{j}']_{\sigma}=\sigma(u)\left(\sum_{j=1}^{n}a_{j}[T,b_{j}]_{\sigma}\right)u^{*}+\sigma(u)[T,u^{*}]_{\sigma}.
		\end{equation}
		Thus $\omega_{(2)}^u$ is given by \eqref{eq:224} with $a'_j,
		b'_j$ instead of $a_j, b_j$:
		\begin{align}
			\label{eq:51}
			\omega_{(2)}^u&=\sum_{j=0}^{n}a'_{j}[\hat{\omega}_{(1)}^u,b'_{j}]_{\sigma}
			=\sigma(u)\left(\sum_{j=1}^{n}a_{j}[\hat{\omega}_{(1)}^u,b_{j}]_{\sigma}\right)u^{*}+\sigma(u)[\hat{\omega}_{(1)}^u,u^{*}]_{\sigma},\\
			\label{eq:52}
			&=\sum_{j=1}^{n}\sigma(u)a_{j}[\sigma^{\circ}(\hat{u})\,\hat{\omega}_{(1)}\widehat{u^{*}},b_{j}]_{\sigma}u^{*}+\sum_{j=1}^{n}\sigma(u)a_{j}[\sigma^{\circ}(\hat{u})[D,\widehat{u^{*}}]_{\sigma^{\circ}},b_{j}]_{\sigma}u^{*}+\\
			&+\sigma(u)[\sigma^{\circ}(\hat{u})\,\hat{\omega}_{(1)}\widehat{u^{*}},u^{*}]_{\sigma}+\sigma(u)[\sigma^{\circ}(\hat{u})[D,\widehat{u^{*}}]_{\sigma^{\circ}},u^{*}]_{\sigma}
		\end{align}
		where \eqref{eq:51} comes from \eqref{eq:49},~\eqref{eq:50}, while \eqref{eq:52} is obtained substituting $\hat{\omega}_{(1)}^u$
		with its explicit form \eqref{eq:44}, using also \eqref{eq:20}. 
		Let us compute these four terms separately, dropping the summation
		index.
		\begin{itemize}
			\item  The first one is
			\begin{align}
				&
				&\sigma(u)\,a\,[\sigma^{\circ}(\hat{u})\,\hat{\omega}_{(1)}\,\widehat{u^{*}},\,b]_{\sigma}u^{*}&=\sigma(u)\,
				a\,\sigma^{\circ}(\hat{u})[\hat{\omega}_{(1)},b]_{\sigma}\widehat{u^{*}}u^{*}\\
				\label{eq:55}
				&	&&=\sigma(u)\sigma^{\circ}(\hat{u})a[\hat{\omega}_{(1)},b]_{\sigma}\widehat{u^{*}}u^{*}
				=\mathrm{Ad}(\sigma(u))\omega_{(2)}\mathrm{Ad}(u)^{*}
			\end{align}
			where the first equalities are obtained by explicit computation, using
			that $b$ commutes with $\widehat{u^{*}}$ and $\sigma^\circ(\hat u)$
			with $\sigma(b)$ (then with $a$) by the order zero condition. The last one follows from the
			definition \eqref{eq:224} of $\omega_{(2)}$ and  the adjoint
			actions~(\ref{eq:twist_Ad},\ref{eq:53}). 
			
			\item The second term is
			\begin{align}
				&&	\sigma(u)a[\sigma^{\circ}(\hat{u})[D,\widehat{u^{*}}]_{\sigma^{\circ}},b]_{\sigma}u^{*}&=\sigma(u)a[\sigma^{\circ}(\hat{u})D\widehat{u^{*}}-D,b]_{\sigma}u^{*},\\
				&	&=&\sigma(u)\sigma^{\circ}(\hat{u})a[D,b]_{\sigma}\widehat{u^{*}}u^{*}-\sigma(u)a[D,b]_{\sigma}u^{*}\\
				\label{eq:56}
				&	&=&\mathrm{Ad}(\sigma(u))\,\omega_{(1)}\,\mathrm{Ad}(u)^{*}-\sigma(u)\,\omega_{(1)}u^{*}.
			\end{align}
			\item The third and fourth terms give	
			\begin{align}
				&	&		\sigma(u)[\sigma^{\circ}(\hat{u})\,\hat{\omega}_{(1)}\widehat{u^{*}},u^{*}]_{\sigma}&=\sigma(u)\sigma^{\circ}(\hat{u})\,\hat{\omega}_{(1)}\widehat{u^{*}}u^{*}-\sigma^{\circ}(\hat{u})\,\hat{\omega}_{(1)}\widehat{u^{*}},\\
				\label{eq:57}					&
				&=&\mathrm{Ad}(\sigma(u))\,\hat{\omega}_{(1)}\,\mathrm{Ad}(u)^{*}-\sigma^{\circ}(\hat{u})\,\hat{\omega}_{(1)}\widehat{u^{*}},\\
				\label{eq:230}
				&	&		\sigma(u)[\sigma^{\circ}(\hat{u})[D,\widehat{u^{*}}]_{\sigma^{\circ}},u^{*}]_{\sigma}=&\sigma(u)\sigma^{\circ}(\hat{u})[[D,\widehat{u^{*}}]_{\sigma^{\circ}},u^{*}]_{\sigma},\\
				\label{eq:58}					&	& =&\sigma^{\circ}(\hat{u})\sigma(u)[[D,u^{*}]_{\sigma},\widehat{u^{*}}]_{\sigma^{\circ}},
				%				&	&=&\sigma^{\circ}(\hat{u})[\sigma(u)[D,u^{*}]_{\sigma},\widehat{u^{*}}]_{\sigma^{\circ}}.
			\end{align}
			where \eqref{eq:230} is proven using
			$[ab,c]_\sigma=a[b,c]_\sigma+ [a,\sigma(c)]b$, and
			\eqref{eq:58} using \eqref{eq:229}. 
		\end{itemize} Collecting \eqref{eq:55}~\eqref{eq:56}~\eqref{eq:57} and
		\eqref{eq:58} one obtains
		\begin{align*}
			\omega_{(2)}^u=\mathrm{Ad}(\sigma(u))\left(\omega_{(1)}+\hat{\omega}_{(1)}+\omega_{(2)}\right)\mathrm{Ad}(u)^{*}
			-\sigma(u)\omega_{(1)}u^{*}-\sigma^{\circ}(\hat{u})\hat{\omega}_{(1)}\widehat{u^{*}}+\sigma^{\circ}(\hat{u})[\sigma(u)[D,u^{*}]_{\sigma},\widehat{u^{*}}]_{\sigma^{\circ}}
		\end{align*}
		
		Adding $\omega_{(1)}^u$ and $\hat\omega_{(1)}^u$ in
		\eqref{eq:32} and 
		\eqref{eq:44}, one obtains
		\begin{align}
			\nonumber
			D_{\omega^u}\!=\!	D\!+
			\omega_{(1)}^u+\hat{\omega}_{(1)}^u+\omega_{(2)}^u&=D
			+\mathrm{Ad}(\sigma(u))\left(\omega_{(1)}+\hat{\omega}_{(1)}+\omega_{(2)}\right)\mathrm{Ad}(u)^{*}\\
			\nonumber
			&\qquad+\sigma(u)[D,u^{*}]_{\sigma}+\sigma^{\circ}(\hat{u})[D,\widehat{u^{*}}]_{\sigma^{\circ}}+\sigma^{\circ}(\hat{u})[\sigma(u)[D,u^{*}]_{\sigma},\widehat{u^{*}}]_{\sigma^{\circ}}.
			\end{align}
		The result then follows by  lemma \ref{lem:TwistedDiracGaugeLaw}.
	\end{proof}
	
	\begin{remark}
		A gauge transformation for the twisted covariant Dirac
		operator $D_\omega$ is  implemented by the twisted
		conjugate action of $\text{Ad}(u)$. This is the same law \ref{eq:27} as when the twisted first-order
		condition holds. As a consequence, the gauge invariance of
		the fermionic action defined in
		\cite{devfarnlizmart:lorentziantwisted,devfilmartsingh:actionstwisted}
		still holds, even if the twisted first-order condition is
		violated.
	\end{remark}

	\subsection{Self-Adjointness}
	\label{sec:self-adjointness}
	
	A twisted gauge transformation does not preserve selfadjointness:
	starting with a selfadjoint operator $D_\omega$, one has that
	$\mathrm{Ad}(\sigma(u))\,D_\omega\,\mathrm{Ad}(u)^{*}$ is selfadjoint if and only
	if \cite[Prop.5.2]{landimart:twistgauge}
	\begin{equation}
		\label{eq:231}
		\left[D_\omega, \frak u \,\widehat{\frak u}\right]_\sigma=0 \quad \text{
			with }\quad\frak u := \sigma(u)^*u\;\text{ and }\; \sigma(\frak u
		\widehat{\frak u}) := \sigma(\frak u)\widehat{\sigma(\frak u)}.
	\end{equation}
	This relation is trivially
	satisfied if the unitary $u$ is
	twist-invariant, that is $\sigma(u)=u$. But this is not the only
	solution, as shown below.
	\begin{prop}
		\label{prop:TwistedSelfAd}
		Let $\left(\A,\mathcal{H},D\right),\sigma,J$ be a
		twisted real spectral triple not necessarily
		fulfilling the
		twisted first-order condition, and $D_\omega$ a
		selfadjoint twisted inner-fluctuation~\eqref{eq:22}. Then the
		gauge-transformed Dirac operator $D_{\omega^{u}}$ is self-adjoint if and only if
		\begin{equation}
			\label{eq:TwistedSelfAdNoFirstOrder}
			\gamma(\frak u)+\epsilon'J\gamma(\frak
			u)J^{-1}+[[D_\omega,\frak u]_{\sigma},\widehat{\frak u}]_{\sigma^{\circ}}=0
		\end{equation}
		where
		\begin{equation}
			\label{eq:232}
			\gamma(\frak u) := 	\sigma^\circ(\hat{\frak u})[D_\omega,{\frak u}]_{\sigma}.
		\end{equation}
	\end{prop}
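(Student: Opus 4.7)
The plan is to combine two inputs: first, Proposition~\ref{prop:TwistedGaugeTransformedNoFirstOrder}, which writes $D_{\omega^u}=\mathrm{Ad}(\sigma(u))\,D_\omega\,\mathrm{Ad}(u)^*$; and second, the already-recalled criterion \eqref{eq:231}, which asserts that this operator is selfadjoint if and only if the twisted commutator $[D_\omega,\frak u\widehat{\frak u}]_\sigma$ vanishes (an identity whose derivation, starting from $D_{\omega^u}^*=\mathrm{Ad}(u)\,D_\omega\,\mathrm{Ad}(\sigma(u))^*$ and using $\mathrm{Ad}(a)^*=\mathrm{Ad}(a^*)$ together with $\sigma(\frak u)=\frak u^*$ from regularity, does not invoke the first-order condition). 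The content of the proposition then reduces to the identity
\begin{equation*}
[D_\omega,\frak u\widehat{\frak u}]_\sigma = \gamma(\frak u)+\epsilon' J\gamma(\frak u) J^{-1}+[[D_\omega,\frak u]_\sigma,\widehat{\frak u}]_{\sigma^\circ},
\end{equation*}
which I will verify by direct expansion.

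The twisted Leibniz rule \eqref{eq:TwistedLeibniz} applied to the mixed product $\frak u\widehat{\frak u}$, with the convention $\sigma(\frak u\widehat{\frak u}):=\sigma(\frak u)\widehat{\sigma(\frak u)}=\sigma(\frak u)\sigma^\circ(\widehat{\frak u})$ of \eqref{eq:231}, yields $[D_\omega,\frak u\widehat{\frak u}]_\sigma=[D_\omega,\frak u]_\sigma\widehat{\frak u}+\sigma(\frak u)[D_\omega,\widehat{\frak u}]_{\sigma^\circ}$. For the first summand I pull $\widehat{\frak u}$ to the left of $[D_\omega,\frak u]_\sigma$ using the relation $X\widehat{\frak u}=\sigma^\circ(\widehat{\frak u})X+[X,\widehat{\frak u}]_{\sigma^\circ}$ together with $\sigma^\circ(\widehat{\frak u})=\widehat{\sigma(\frak u)}=\sigma^\circ(\hat{\frak u})$; this produces $\gamma(\frak u)+[[D_\omega,\frak u]_\sigma,\widehat{\frak u}]_{\sigma^\circ}$. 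For the second summand, the $J$-conjugation identity \eqref{eq:twistzero} gives $[D_\omega,\widehat{\frak u}]_{\sigma^\circ}=\epsilon' J[D_\omega,\frak u]_\sigma J^{-1}$, while $\sigma(\frak u)=J\sigma^\circ(\hat{\frak u})J^{-1}$ (a consequence of $J\hat a J^{-1}=a$ combined with $\sigma^\circ(\hat a)=\widehat{\sigma(a)}$); inserting both identities produces $\sigma(\frak u)[D_\omega,\widehat{\frak u}]_{\sigma^\circ}=\epsilon' J\gamma(\frak u)J^{-1}$. Summing the two contributions reproduces the right-hand side above.

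The proof is therefore an algebraic rearrangement; the only delicate ingredient is the consistent handling of the identifications $\sigma^\circ(\hat a)=\widehat{\sigma(a)}$ and $J\hat a J^{-1}=a$ and of the commutation of left and right actions, and I foresee no significant obstacle. As a consistency check, under the twisted first-order condition \eqref{eq:TwistedFirstOrder} the last summand $[[D_\omega,\frak u]_\sigma,\widehat{\frak u}]_{\sigma^\circ}$ vanishes identically, so \eqref{eq:TwistedSelfAdNoFirstOrder} collapses to $\gamma(\frak u)+\epsilon' J\gamma(\frak u)J^{-1}=0$, in agreement with the earlier selfadjointness criterion of \cite[Prop.~5.2]{landimart:twistgauge}.
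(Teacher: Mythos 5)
Your argument is correct and follows essentially the same route as the paper: combine Prop.~\ref{prop:TwistedGaugeTransformedNoFirstOrder} with the criterion \eqref{eq:231}, expand $[D_\omega,\frak u\,\widehat{\frak u}]_\sigma=[D_\omega,\frak u]_\sigma\widehat{\frak u}+\sigma(\frak u)[D_\omega,\widehat{\frak u}]_{\sigma^\circ}$, and rewrite the two terms as $\gamma(\frak u)+[[D_\omega,\frak u]_\sigma,\widehat{\frak u}]_{\sigma^\circ}$ and $\epsilon' J\gamma(\frak u)J^{-1}$, exactly as in the paper's proof. Your added observations (that \eqref{eq:231} does not rely on the first-order condition, and the collapse to \cite[Prop.~5.2]{landimart:twistgauge} when it holds) match the paper's surrounding discussion.
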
 
	\begin{proof}
		By \eqref{eq:231} and \eqref{eq:22}, the gauge-transformed Dirac operator is self-adjoint iff
		\begin{eqnarray}
			[D_\omega,\frak u \widehat{\frak
				u}]_{\sigma}=[D_\omega,\frak
			u]_{\sigma}\hat{\frak u} + \sigma(\frak u)[D_\omega,\hat{\frak u}]_{\sigma^\circ }=0.
		\end{eqnarray}
		The result follows from
		\begin{align}
			[D_\omega,\frak u]_{\sigma}\hat{\frak u}&=[[D_\omega,\frak u]_\sigma,\widehat{\frak u}]_{\sigma^{\circ}}+\sigma^\circ(\hat{\frak u})[D_\omega,\frak u]_{\sigma},\\
			\nonumber				\sigma(\frak u)[D_\omega, \hat{\frak  u}]_{\sigma^\circ}&=\epsilon'J\left(\sigma^\circ(\hat{\frak
				u})[D_\omega,\frak u]_{\sigma}\right)J^{-1}.
		\end{align}
		
		\vspace{-.5truecm}\end{proof}

		In case the twisted first-order condition holds,  one finds back the result of \cite{landimart:twistgauge} noticing that
	\begin{equation}
		\sigma^\circ(\hat{\frak u})=\sigma^\circ(({\frak u}^*)^\circ
		)=\sigma^{-1}({\frak u}^*)^\circ
		=\sigma^{-1}(u^*\sigma(u))^\circ=(\sigma(u)^*u)^\circ=
		Ju^*\sigma(u)J^{-1},
		\label{eq:233}
	\end{equation}
	so that \eqref{eq:22} yields
	\begin{align}
		\sigma^\circ(\hat{\frak u})[D_\omega,{\frak
			u}]_{\sigma}&=Ju^{*}\sigma(u)J^{-1}[D,\frak u]_{\sigma}
		=u^\circ \sigma^\circ(\hat u) [D,\frak u]_{\sigma}\\
		&=u^\circ[D,\frak u]_{\sigma}\widehat{u}-u^\circ[[D,\frak
		u]_{\sigma},\widehat{u}]_{\sigma^{\circ}}.
	\end{align}
	By the condition of order one, the right term in
	the equation above vanishes, as well as the last term in the
	r.h.s. of \eqref{eq:TwistedSelfAdNoFirstOrder}. One is left with
	\begin{equation}
		\gamma(u)+\epsilon'J\gamma(u) J^{-1}=0 \quad\text{  with }\quad
		\gamma(u)= u^\circ[D, \sigma(u)^*u]_{\sigma}\widehat{u}
	\end{equation}
	which is precisely Prop. 5.2 of \cite{landimart:twistgauge}.

	\section{Twisted semi-group of inner perturbations}
	\label{sec:self-adjointness-1}
	
	In this section we adapt to the twisted case the semi-group of inner
	perturbations of \cite{chamconnessuijlekom:innerfluc}. The normalisation
	condition is twisted in \S\ref{sec:twist-norm-cond} and
	the structure of semi-group for twisted $1$-forms is worked
	out in  \ref{sec:from-semi-group}. Its interpretation in terms
	of
	twisted inner fluctuation is
	the object of \S\ref{sec:twist-fluct-acti-2}.
	
	There are two notable
	differences with the non-twisted case: the semi-group structure depends on the twisting
	automorphism, and we do not restrict its definition to
	selfadjoint elements, for reasons explained below.

	In all this section, we consider  a real twisted spectral triple
	\begin{equation}
		(\A, \HH, D), \sigma, J\label{eq:79}
	\end{equation}
	that does not necessarily satisfy the twisted first-order
	condition. The unit of $\A$ is $e$.
	
	\subsection{Twisted normalised condition}
	\label{sec:twist-norm-cond}
	
	Let  $\A^{e}:=\A\otimes_{\mathbb{C}}\A^{\circ}$ denote the
	\textit{enveloping algebra} of $\A$ \cite{Landi1997}, with product
	\begin{equation}
		\label{eq:product_env}
		\left(a_1\otimes b^{\circ}_1\right)\cdot\left(a_2\otimes b^{\circ}_2\right):=a_1a_2\otimes b^{\circ}_1b^{\circ}_2.
	\end{equation}
	The normalisation condition 
	imposed   in \cite{chamconnessuijlekom:innerfluc} easily generalises to the
	twisted case.
	\begin{defn}
		\label{defn:TwistedNormalised}
		A combination $\sum_{j}a_{j}\otimes b^{\circ}_{j}\in \A^e$ 	is \emph{twisted normalised} iff
		\begin{equation*}
			\sum_{j}a_{j}\sigma(b_{j})=e.
		\end{equation*}
	\end{defn}
	In \cite{chamconnessuijlekom:innerfluc}, the semi-group of inner
	fluctuations is defined as the set of  self-adjoint
	normalised elements of $\A^e$. The definition we propose in
	the twisted case is similar, except that we do not restrict
	to selfadjoint elements. Indeed, as explained in  \S \ref{sec:self-adjointness}, the
	twisted gauge transformations do not preserve
	selfadjointness of $1$-forms, so there is no reason to
	consider only selfadjoint normalised elements of $\A^e$. 
	\begin{prop}		The set of twisted normalised elements of $\A^e$,
		\begin{equation}
			\label{eq:103}
			\mathrm{Pert}(\A,\sigma):=\bigg\{\sum_ja_j\otimes
			b^{\circ}_j\in \A^e\text{ such that } \sum_ja_j\sigma(b_j)=e\bigg\},
		\end{equation}
		is a semi-group for the product of the enveloping algebra.
	\end{prop}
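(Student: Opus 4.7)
The plan is to verify two things: closure under the enveloping product, and associativity. Associativity is inherited from $\A^e$, so the real content is closure: I need to show that if $X=\sum_j a_j\otimes b_j^\circ$ and $Y=\sum_k c_k\otimes d_k^\circ$ both satisfy the twisted normalisation condition of Definition \ref{defn:TwistedNormalised}, then so does $X\cdot Y$.

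The first step is to compute the product explicitly using \eqref{eq:product_env}, remembering that the opposite product reverses order, $b_j^\circ d_k^\circ = (d_k b_j)^\circ$, so that
\begin{equation*}
X\cdot Y = \sum_{j,k} a_j c_k \otimes (d_k b_j)^\circ.
\end{equation*}
Then I apply the twisted normalisation condition on $X\cdot Y$: I must show $\sum_{j,k} a_j c_k\,\sigma(d_k b_j) = e$. Since $\sigma$ is an algebra automorphism, $\sigma(d_k b_j)=\sigma(d_k)\sigma(b_j)$, and after reorganising I get
\begin{equation*}
\sum_{j,k} a_j c_k\,\sigma(d_k)\,\sigma(b_j)= \sum_j a_j\Bigl(\sum_k c_k\sigma(d_k)\Bigr)\sigma(b_j) = \sum_j a_j\,e\,\sigma(b_j) = \sum_j a_j\sigma(b_j) = e,
\end{equation*}
using the normalisation of $Y$ in the middle and that of $X$ at the end.

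It is also worth observing that the semi-group is unital: the element $e\otimes e^\circ$ belongs to $\mathrm{Pert}(\A,\sigma)$ since $\sigma(e)=e$ (automorphisms preserve the unit), and it acts as identity for the product \eqref{eq:product_env}. There is no obstacle here of any real depth; the only point requiring care is the order-reversal in the opposite algebra, which is precisely what makes the cancellation work when $\sigma$ is pushed through. No selfadjointness hypothesis is needed, in line with the remark preceding the statement.
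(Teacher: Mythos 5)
Your proof is correct and follows essentially the same route as the paper: compute the enveloping product (with the order reversal in $\A^\circ$), push $\sigma$ through as an automorphism, and collapse the double sum using the normalisation of the second factor, then the first. The observation about the unit $e\otimes e^\circ$ matches the paper's remark that $\mathrm{Pert}(\A,\sigma)$ is in fact a monoid.
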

	\begin{proof}
		Let  $\sum_ja_j\otimes b^{\circ}_j$ and
		$\sum_ia'_i\otimes b'^{\circ}_i$ be normalised elements of
		$\A^e$. Their product $\sum_{j,i} a_ja'_i \otimes
		b_j^\circ {b'_i}^\circ =  \sum_{j,i} a_ja'_i \otimes
		(b'_i b_j)^\circ$ is normalised since 
		\begin{equation*}
			\sum_{j,i}a_ja'_i\sigma(b'_i b_j)=\sum_ja_j\left(\sum_ia'_i\sigma(b'_i)\right)\sigma(b_j)=\sum_ja_j\sigma(b_j)=e.
		\end{equation*}
		Hence  $	\mathrm{Pert}(\A,\sigma)$ is stable by the product of
		the enveloping algebra.       \end{proof}
	\begin{remark}
		Since we assume the algebra are unital,
		$\mathrm{Pert}(\A,\sigma)$ is actually a monoid, with unit
		$e\otimes e$.
	\end{remark}
	We show  below that the action of this
	semi-group on  the Dirac operator $D$
	coincides with the twisted fluctuations, and the
	multiplication by unitaries gives back the gauge
	transformation,  as in the non-twisted case. This  justifies
	to call $\mathrm{Pert}(\A,\sigma)$ the \emph{semi-group of
		twisted inner fluctuations} of the twisted spectral triple
	\eqref{eq:79}. To show this, we begin with working out the
	relation between the semi-group and twisted $1$-forms.

	\subsection{From the semi-group to twisted one-forms}
	\label{sec:from-semi-group}
	
	One defines  a map from the semi-group to the twisted one-forms,
	\begin{eqnarray}
		\eta:\mathrm{Pert}(\A,\sigma)\longrightarrow\Omega_{D}^{1}(\A,\sigma), & &\eta\left(\sum_ja_j\otimes b^{\circ}_j\right):=\sum_ja_j[D,b_j]_{\sigma},
	\end{eqnarray}
	which has similar properties as in the non-twisted case (the following
	lemma extends to the twisted case lemma $4$ of \cite{chamconnessuijlekom:innerfluc}).
	\begin{lem}
		\label{lem:TwistedOneFormsMap}
		i) The map $\eta$ is surjective. ii) The adjoint is
		given by
		\begin{equation}
			\label{eq:81}
			{\left(\eta\left(\sum_ja_j\otimes
				b^{\circ}_j\right)\right)}^*=\eta\left(\sum_j
			b_j^*\otimes (a_j^*)^\circ\right).
		\end{equation}
		iii) The gauge transformed  
		(\ref{eq:32}) of
		$\omega=\eta\left(\sum_ja_j\otimes
		b_j^\circ\right)$ is 
		\begin{equation}
			\omega^{u}= \eta\left(\sum_j\sigma(u)a_j\otimes
			(b_ju^{*})^{\circ}\right)
			\quad \forall  u\in\mathcal{U}(\A).
			\label{eq:61}
		\end{equation}
	\end{lem}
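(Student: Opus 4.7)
I would treat the three items in turn, relying on the twisted normalisation $\sum_j a_j\sigma(b_j)=e$ in (ii) and (iii). For surjectivity, given $\omega=\sum_j a_j[D,b_j]_\sigma$ set $k:=\sum_j a_j\sigma(b_j)\in\A$ and consider the lifted element
\[
\tilde X := \sum_j a_j\otimes b_j^\circ + (e-k)\otimes e^\circ \in \A^e.
\]
It is twisted normalised since $\sum_j a_j\sigma(b_j)+(e-k)\sigma(e)=k+(e-k)=e$, and because $\sigma(e)=e$ one has $[D,e]_\sigma=De-\sigma(e)D=0$, so the added piece lies in the kernel of $\eta$ and $\eta(\tilde X)=\omega$.

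For the adjoint, I would first use the normalisation to rewrite
\[
\eta\Bigl(\sum_j a_j\otimes b_j^\circ\Bigr)=\sum_j a_j D b_j - \Bigl(\sum_j a_j\sigma(b_j)\Bigr)D = \sum_j a_j D b_j - D,
\]
whose Hilbert-space adjoint is $\sum_j b_j^* D a_j^* - D$. The candidate right-hand member expands as
\[
\eta\Bigl(\sum_j b_j^*\otimes (a_j^*)^\circ\Bigr)=\sum_j b_j^* D a_j^* - \Bigl(\sum_j b_j^*\sigma(a_j^*)\Bigr)D,
\]
so the identity reduces to $\sum_j b_j^*\sigma(a_j^*)=e$. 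Applying regularity $\sigma(a_j^*)=\sigma^{-1}(a_j)^*$ gives $\sum_j b_j^*\sigma(a_j^*)=\bigl(\sum_j\sigma^{-1}(a_j)b_j\bigr)^*$, while applying $\sigma^{-1}$ to the normalisation yields $\sum_j\sigma^{-1}(a_j)b_j=\sigma^{-1}(e)=e$; so the expression equals $e^*=e$. The same computation shows that the preimage is itself normalised, hence lies in $\mathrm{Pert}(\A,\sigma)$.

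For the gauge-transformed form, I would substitute the definition $\omega^u=\sigma(u)[D,u^*]_\sigma+\sigma(u)\omega u^*$ of \eqref{eq:32} and apply the twisted Leibniz rule \eqref{eq:TwistedLeibniz} to rewrite $[D,b_j]_\sigma u^*=[D,b_ju^*]_\sigma-\sigma(b_j)[D,u^*]_\sigma$. Pulling $\sigma(u)$ through gives
\[
\omega^u=\sigma(u)[D,u^*]_\sigma+\sum_j\sigma(u)a_j[D,b_ju^*]_\sigma-\sigma(u)\Bigl(\sum_j a_j\sigma(b_j)\Bigr)[D,u^*]_\sigma,
\]
and the normalisation collapses the first and last summands, leaving exactly $\eta\bigl(\sum_j\sigma(u)a_j\otimes (b_ju^*)^\circ\bigr)$. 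That the lifted element lies in $\mathrm{Pert}(\A,\sigma)$ follows because $\sigma$ is an algebra homomorphism: $\sum_j\sigma(u)a_j\sigma(b_ju^*)=\sigma(u)\bigl(\sum_j a_j\sigma(b_j)\bigr)\sigma(u^*)=\sigma(uu^*)=\sigma(e)=e$.

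The main subtlety is (ii): the twist forbids a direct $a_j\leftrightarrow b_j$ swap under the involution, and one is forced to invoke the normalisation twice -- directly on $\sum_j a_j\sigma(b_j)$ and on its image under $\sigma^{-1}$ -- linked by the regularity identity \eqref{eq:RegularityAutomorph}. Items (i) and (iii) are essentially bookkeeping with the Leibniz rule and with the fact that $\sigma$ is a unital $*$-compatible automorphism.
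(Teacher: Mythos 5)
Your proof is correct, and for items (i) and (iii) it is essentially the paper's argument: (i) uses exactly the same trick of appending $a_0=e-\sum_j a_j\sigma(b_j)$, $b_0=e$ and noting $\delta(e)=0$, and (iii) is the same twisted-Leibniz computation, merely read from $\omega^u$ towards $\eta\bigl(\sum_j\sigma(u)a_j\otimes(b_ju^*)^\circ\bigr)$ rather than the other way, with the identical normalisation check $\sigma(u)\bigl(\sum_j a_j\sigma(b_j)\bigr)\sigma(u^*)=e$. The only genuine divergence is in (ii): the paper stays inside the bounded twisted one-forms, first deriving $\sum_j a_j\delta(b_j)=-\sum_j\delta(\sigma^{-1}(a_j))b_j$ from the twisted Leibniz rule applied to $\delta\bigl(\sigma^{-1}(\sum_j a_j\sigma(b_j))\bigr)=\delta(e)=0$, and then invoking the adjoint rule \eqref{eq:82bis} for the derivation, whereas you expand $\eta\bigl(\sum_j a_j\otimes b_j^\circ\bigr)=\sum_j a_jDb_j-D$ and take adjoints of the unbounded pieces $a_jDb_j$ and $D$ termwise. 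Your route uses the same three ingredients (the normalisation, the regularity \eqref{eq:RegularityAutomorph}, and $\sigma^{-1}(e)=e$) and lands on the same verification $\sum_j b_j^*\sigma(a_j^*)=\bigl(\sum_j\sigma^{-1}(a_j)b_j\bigr)^*=e$, which doubles as the proof that the preimage is normalised, so nothing is missing; the only caveat is that splitting a bounded one-form into unbounded summands and taking adjoints termwise glosses over domain questions, but this is precisely the level of formality at which \eqref{eq:82bis} itself is established in the paper, so it is not a gap relative to the paper's own standard. The paper's detour through \eqref{eq:82bis} buys a computation involving only adjoints of bounded operators times algebra elements; yours buys a shorter, more self-contained calculation.
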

	\begin{proof} \emph {This is a straightforward adaptation of the
			proof of \cite[Lemma. 4]{chamconnessuijlekom:innerfluc}.}
		
		\noindent $i)$	 Any twisted 1-form is a finite
		sum  $\sum_{j=1}^na_{j}\delta(b_j)$ with $a_j,
		b_j$ arbitrary elements of $\A$. The point is
		to write it as a sum such that 
		$\sum_j a_j\sigma(b_j)$ is twisted
		normalised. This is obtained adding to the
		sum 
		\begin{eqnarray}
			a_{0}:=e-\sum_{j=1}^{n}a_{j}\sigma(b_{j}), & b_{0}=e.\nonumber 
		\end{eqnarray}
		Indeed, since $\delta(e)=0$, one has
		\begin{equation*}
			\sum_{j=1}^{n}a_{j}\delta(b_{j})=\sum_{j=1}^{n}a_{j}\delta(b_{j})+\left(e-\sum_{j=1}
			^{n}a_{j}\sigma(b_{j})\right)\delta(e)=\sum_{j=0}^{n}a_{j}\delta(b_{j})
		\end{equation*}
		where, by construction,
		$\sum_{j=0}^{n}a_{j}\sigma(b_{j})$ is twisted
		normalised.

		ii) 
		The Leibniz rule  (\ref{eq:TwistedLeibniz}) for
		$\delta(\sigma^{-1}(a_j)b_j)=\delta(\sigma^{-1}(a_j\sigma(b_j))=
		\delta(\sigma^{-1}(e))~=~0$ 
		(we omitted the symbol of summation) reads
		\begin{equation}
			\label{eq:83}
			\sum_j  a_j\delta(b_j) = -\sum_j \delta(\sigma^{-1}(a_j))b_j.
		\end{equation}
		Therefore, for $\sum_j a_j\otimes b_j^\circ$ in $\text{Pert}(\A,
		\sigma)$, one has (using \eqref{eq:82bis})
		\begin{align}
			\label{eq:80}
			{\left(\eta\left(\sum_ja_j\otimes
				b^{\circ}_j\right)\right)}^*&=\left(\sum_j a_j\delta(b_j)\right)^*=
			-\left(\sum_j \delta(\sigma^{-1}(a_j) )b_j\right)^*\\
			&=\sum_j b_j^*\delta(a_j^*)=\eta\left(\sum_j b_j^*\otimes (a_j^*)^\circ\right).
		\end{align}
		The result follows noticing that $\sum_j b_j^*\otimes (a_j^*)^\circ$
		is normalised, for
		\begin{equation*}
			\sum_j b_j^*\sigma(a_j^*)=\sum_j \left(\sigma^{-1}(a_j)
			b_j\right)^*=\sum_j \sigma^{-1}(a_j\sigma(
			b_j))^*= \sigma^{-1}\left(\sum_ja_j\sigma(
			b_j)\right)^*=e.\label{eq:85}
		\end{equation*}
		iii) We first check that $\sum_j
		\sigma(u)a_j\otimes (b_ju^*)^\circ$ is twisted normalised:
		\begin{equation*}
			\sum_j \sigma(u)a_j\sigma(b_ju^*)= \sigma(u)\left(\sum_j
			a_j\sigma(b_j)\right)\sigma(u^*)=  \sigma(u)\sigma(u^*)=\sigma(uu^*)= e.
		\end{equation*}
		Then, by the Leibniz rule and the  normalisation condition one obtains
		\begin{align*}
			\eta\left(\sum_{j}\sigma(u)a_{j}\otimes(b_{j}u^{*})^{\circ}\right)&=\sum_{j}\sigma(u)a_{j}\delta(b_{j}u^{*})
			=\sum_{j}\sigma(u)a_{j}\delta(b_{j})u^{*}+\sum_{j}\sigma(u)a_{j}\sigma(b_{j})\delta(u^{*}),\\
			&=\sigma(u)\left(\sum_{j}a_{j}\delta(b_{j})\right)u^{*}+\sigma(u)\delta(u^{*})=
			\sigma(u)\omega u^{*}+\sigma(u)\delta(u^{*}),
		\end{align*}
		which is precisely the gauge transform (\ref{eq:32})  of $\omega$.	\end{proof}

	The group ${\mathcal U}(\A)$ of unitaries of $\A$ maps to $\mathrm{Pert}(\A,\sigma)$
	via the semi-group homomorphism
	\begin{equation}
		\label{eq:unitary_in_pert}
		u\longmapsto p(u):=\sigma(u)\otimes (u^*)^\circ.
	\end{equation}  The gauge transformed 
	\eqref{eq:61} corresponds to the product by $p(u)$ in the semi-group:
	\begin{equation}
		\omega^u =\eta\left(p(u)\omega\right).
		\label{eq:93}
	\end{equation}
	\medskip

	A similar construction holds for the opposite algebra. The subset 
	\begin{equation}
		\label{eq:semi_group_hat}
		\mathrm{Pert}(\A^\circ,\sigma^{\circ}):=\bigg\{\sum_j
		a^\circ_j\otimes b_j \in\A^\circ\otimes_{\mathbb C} \A\, \text{ such
			that } \sum_j a_j^\circ\sigma^{\circ}(b_j^\circ)=e\bigg\}
	\end{equation}
	of the enveloping algebra of $\A^\circ$ forms a semi-group, for  the product 
	\begin{equation}
		\label{eq:95}
		\sum_{ij} a_j^\circ {a'}_i^\circ\otimes b_j {b'}_i=  \sum_{ij} ({a'}_ia_j)^\circ\otimes b_j {b'}_i
	\end{equation}
	of two of its elements $\sum_j a_j^\circ\otimes b_j$ and $\sum_i {a'_i}^\circ\otimes {b'}_i$ is in
	$\mathrm{Pert}(\A^\circ,\sigma^{\circ})$, since
	\begin{equation}
		\label{eq:98}
		({a'}_ia_j)^\circ\sigma^\circ((b_j {b'}_i)^\circ)=a_j^\circ\left({a'}_i^\circ\sigma^\circ({b'}_i)^\circ )\right)\sigma^\circ(b_j^\circ)=a_j^\circ\sigma^\circ(b_j^\circ)=1.
	\end{equation}
	Moreover,
	the surjective map 
	\begin{equation}
		\eta^\circ:\mathrm{Pert}(\A^\circ,\sigma^{\circ})\longrightarrow\Omega_{D}^{1}(\A^\circ,\sigma^{\circ}),
	\end{equation}
	defined as
	\begin{equation}
		\label{eq:86}
		\eta^{\circ}\left(\sum_j a_j^\circ\otimes
		b_j\right):=\sum_j a_j^\circ [D, b_j^\circ]_{\sigma^{\circ}}\end{equation}
	satisfies similar properties as the map $\eta$  in
	lemma \ref{lem:TwistedOneFormsMap} (see the proof in appendix~\ref{sec:semi-group-opposite}).
	In particular,
	the unitary group $\mathcal{U}(\A)$ maps to this semi-group
	via
	\begin{equation}
		\label{eq:unitary_in_pert_hat}
		u\longmapsto p^\circ(u):=\sigma^{\circ}(\hat{u})\otimes\hat{u}^{*};
	\end{equation}
	and the image \eqref{eq:44} of the opposite $1$-form $\hat\omega\in  \mathrm{Pert}(\A^\circ,\sigma^{\circ})$ under a gauge transformation is
	\begin{equation}
		\hat\omega^u=\eta^\circ(p^\circ(u)\hat\omega).
		\label{eq:94}
	\end{equation}
	
	\subsection{Twisted fluctuations by action of the semi-group}
	\label{sec:twist-fluct-acti-2}
	
	The action of the semi-group of
	perturbations on
	the Dirac operator of a twisted spectral triple~\eqref{eq:79}  yields
	the twisted fluctuation (hence justifying the name of the semi-group), similarly to what happens in the non-twisted case.
	This is shown below by  adapting propositions 5 and 6 of 
	\cite{chamconnessuijlekom:innerfluc}.
	
	One defines the action of $\mathrm{Pert}(\A,\sigma)$ and  $\mathrm{Pert}(\A^\circ,\sigma^\circ)$ on $\mathcal{L}(\mathcal{H})$ as
	\begin{align}
		\label{eq:action_pert}
		&\left(A,T\right):=\sum_{j}a_jTb_j \qquad
		\forall \; A=\sum_{j}a_j\otimes b^{\circ}_j\in
		\mathrm{Pert}(\A,\sigma),\\
		\label{eq:action_pertopp}
		&\left(A^\circ,T\right):=\sum_j a^\circ_i\, T\, b_j^\circ \qquad
		\forall \; A^\circ=\sum_j a^\circ_j\otimes b_j\in
		\mathrm{Pert}(\A^
		\circ,\sigma^\circ),\; T\in\mathcal{L}(\mathcal{H})
	\end{align} 
	(omitting the representation symbol on the r.h.s.). 
	\begin{lem}
		\label{sec:twist-fluct-acti-1}
		These actions are transitive, namely
		\begin{equation}
			\label{eq:110}
			\left(A',\left(A,T\right)\right)= \left(A'A, T\right),\quad
			\left({A'}^\circ,\left(A^\circ,T\right)\right)= \left({A'}^\circ A^\circ, T\right)
		\end{equation}
		for any $A, A'\in \mathrm{Pert}(\A,\sigma)$ and  $A^\circ, {A'}^\circ \in \mathrm{Pert}(\A^\circ,\sigma^\circ)$.
	\end{lem}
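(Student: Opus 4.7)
The proof is a direct computation: each side just expands as a double sum, and the two sums coincide once one is careful about the multiplication convention in the enveloping algebra and in its opposite variant. There is no conceptual obstacle, only the need to track orderings.

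For the first identity, I would write $A=\sum_j a_j\otimes b_j^\circ$ and $A'=\sum_i a'_i\otimes {b'_i}^\circ$, both normalised. Applying definition \eqref{eq:action_pert} twice and using linearity in $T$,
\begin{equation*}
\bigl(A',(A,T)\bigr)=\sum_i a'_i\Bigl(\sum_j a_j\,T\,b_j\Bigr)b'_i=\sum_{i,j} a'_i a_j\,T\,b_j b'_i .
\end{equation*}
On the other hand, by the product \eqref{eq:product_env} of the enveloping algebra,
\begin{equation*}
A'A=\sum_{i,j}(a'_i a_j)\otimes({b'_i}^\circ b_j^\circ)=\sum_{i,j}(a'_i a_j)\otimes(b_j b'_i)^\circ,
\end{equation*}
so $(A'A,T)=\sum_{i,j} a'_i a_j\,T\,b_j b'_i$, which matches. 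One should also observe in passing that $A'A$ is itself in $\mathrm{Pert}(\A,\sigma)$ — this is exactly the stability proved just before the lemma — so the right-hand side makes sense.

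For the second identity, the only subtlety is the reversal of order in the opposite algebra: writing $A^\circ=\sum_j a_j^\circ\otimes b_j$ and ${A'}^\circ=\sum_i {a'_i}^\circ\otimes b'_i$, the definition \eqref{eq:action_pertopp} gives
\begin{equation*}
\bigl({A'}^\circ,(A^\circ,T)\bigr)=\sum_{i,j}{a'_i}^\circ a_j^\circ\,T\,b_j^\circ{b'_i}^\circ .
\end{equation*}
Meanwhile, according to \eqref{eq:95}, the product in $\mathrm{Pert}(\A^\circ,\sigma^\circ)$ reads ${A'}^\circ A^\circ=\sum_{i,j}{a'_i}^\circ a_j^\circ\otimes b'_i b_j$, and applying \eqref{eq:action_pertopp} — remembering that the second component is acted upon via $\pi^\circ$, hence $(b'_i b_j)^\circ=b_j^\circ {b'_i}^\circ$ — yields the same expression. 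The normalisation of ${A'}^\circ A^\circ$ is provided by \eqref{eq:98}.

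The whole verification boils down to unfolding definitions; what one has to \emph{check}, and what I would highlight in the write-up, is precisely the matching of orderings: the enveloping product of \eqref{eq:product_env} was designed so that its action on $\mathcal{L}(\HH)$ is associative, and the same holds for its opposite variant \eqref{eq:95} because $(bc)^\circ = c^\circ b^\circ$ absorbs the reversal coming from the second slot of the tensor. No use of the automorphism $\sigma$ or of the twisted first-order condition is needed at this stage; normalisation is invoked only to ensure that the products land back in the semi-groups.
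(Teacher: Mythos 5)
Your proposal is correct and follows essentially the same route as the paper: expand $\bigl(A',(A,T)\bigr)$ as the double sum $\sum_{i,j} a'_i a_j\,T\,b_j b'_i$ and recognise it as $(A'A,T)$ via the enveloping-algebra product \eqref{eq:product_env}, with normalisation invoked only so that $A'A$ lies back in the semi-group. The only difference is that you write out the opposite-algebra case explicitly (correctly using $(b'_i b_j)^\circ=b_j^\circ{b'_i}^\circ$), where the paper simply states it is shown in a similar way.
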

	\begin{proof}
		For $A=\sum_j a_j^\circ\otimes b_j$, $A'=\sum_i {a'_i}^\circ\otimes
		b'_i$ in $\mathrm{Pert}(\A,\sigma)$, one has
		\begin{align}
			\label{eq:111}
			(A',(A,T))&= (A', \sum_j a_j T b_j)=\sum_{i,j} a'_ia_j T
			b_jb'_i=\left(\sum_{i,j} a'_ia_j \otimes (b_jb'_i)^\circ, T\right) ,\\
			& =\left(\left(\sum_i a'_i\otimes
			{b'_i}^\circ\right)\left( \sum_j a_j\otimes b_j^\circ,T\right)\right)= (A'A, T).
		\end{align}
		The transitivity of the action of
		$\mathrm{Pert}(\A^\circ,\sigma^\circ)$ is shown in a similar way.
	\end{proof}
	
	For $T$ the Dirac operator $D$, one has
	\begin{align*}
		(A, D)&=	\sum_{j}a_{j}Db_{j}=\sum_{j}a_{j}\sigma(b_{j})D+\sum_{j}a_{j}[D,b_{j}]_{\sigma}=D+\sum_{j}a_{j}[D,b_{j}]_{\sigma}
		=D+\eta(A),\\
		(A^\circ, D)&=	\sum_i a^\circ_i D b^\circ_i=\sum_i a_i^\circ
		\sigma^\circ(b_i^\circ)D+\sum_i a_i [D,b_i^\circ]_{\sigma^\circ}=D+\sum_{i}a_i^\circ[D,b_i]_{\sigma^\circ}
		=D+\eta^\circ(A^\circ).
	\end{align*}
	In other terms, the action of $A\in\mathrm{Pert}(\A,\sigma)$ 
	on $D$ yields the twisted fluctuation \eqref{eq:42} of $D$ by $A$,
	with $\omega=\eta(A)$; while the 
	action of $A^\circ\in\mathrm{Pert}(\A^\circ,\sigma^\circ)$ 
	on $D$ yields the twisted fluctuation \eqref{eq:42bist} of $D$ by $\A^\circ$,
	with $\epsilon J\omega J^{-1}=\eta^\circ(A^\circ)$.
	The transitivity of these actions means that twisted
	fluctuations of twisted fluctuations are twisted fluctuations.
	\medskip
	
	All these results extend to the twisted fluctuations
	\eqref{eq:222} of real
	spectral triples.  To take into account the real structure, one introduces
	\begin{align*}
		\mathrm{Pert}(\A\otimes_\C\A^\circ, \sigma):=&\left\{\sum_k A_k
		\otimes B_k
		\; \text{ such that }\,
		A_k\in \mathrm{Pert}(\A, \sigma),\; B_k\in \mathrm{Pert}(\A^\circ, \sigma^\circ)
		\right\}\label{eq:102}
	\end{align*}
	which is a semi group for the natural product $(A\otimes
	B)(A'\otimes {B'})=AA'\otimes B{B'}$. 
	\newpage
	
	\begin{lem}
		\label{sec:twist-fluct-acti}
		For  $A=\sum_j a_j\otimes
		b_j^\circ\in\mathrm{Pert}(\A,\sigma)$, denote $\hat A:=\sum_j\hat a_j \otimes\widehat{b_j^\circ}$
		. Then the map   \begin{align}
			\mu:\quad\mathrm{Pert}(\A,{\sigma})&\longrightarrow\mathrm{Pert}(\A\otimes_{\mathbb{C}}\A^\circ,\sigma)\\
			A&\longmapsto A\otimes \hat A,
		\end{align}
		is a semi-group
		homomorphism where, 
	\end{lem}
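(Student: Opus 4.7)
The plan is to verify two things: that $\mu$ is well-defined (i.e.\ $\mu(A)\in\mathrm{Pert}(\A\otimes_{\mathbb{C}}\A^{\circ},\sigma)$) and that $\mu$ preserves products. By the definition of the target semi-group, the first task reduces to showing that $\hat A$ lies in $\mathrm{Pert}(\A^{\circ},\sigma^{\circ})$, i.e.\ satisfies the twisted normalisation \eqref{eq:semi_group_hat}; the second, after a direct expansion of both sides, reduces to the identity $\widehat{AA'}=\hat A\,\hat{A'}$ in $\A^{\circ}\otimes\A$.

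For well-definedness, I would first unfold $\hat A=\sum_j \hat a_j\otimes\widehat{b_j^{\circ}}=\sum_j(a_j^{*})^{\circ}\otimes b_j^{*}$, using $\hat a=(a^{*})^{\circ}$ from \eqref{eq:34} together with $\widehat{b^{\circ}}=Jb^{\circ}J^{-1}=J^{2}b^{*}J^{-2}=b^{*}$ (since $J^{2}=\varepsilon\I$ with $\varepsilon^{2}=1$). Setting $c_j:=(a_j^{*})^{\circ}$ and $d_j:=b_j^{*}$, the normalisation $\sum_j c_j\,\sigma^{\circ}(d_j^{\circ})=e$ follows by computing $\sigma^{\circ}((b_j^{*})^{\circ})=(\sigma^{-1}(b_j^{*}))^{\circ}=(\sigma(b_j)^{*})^{\circ}$ via the definition \eqref{eq:143} and regularity \eqref{eq:RegularityAutomorph}, so that the order-reversing product in $\A^{\circ}$ yields $c_j\sigma^{\circ}(d_j^{\circ})=(a_j^{*})^{\circ}(\sigma(b_j)^{*})^{\circ}=((a_j\sigma(b_j))^{*})^{\circ}$. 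Summing over $j$ and invoking the normalisation $\sum_j a_j\sigma(b_j)=e$ of $A$ gives $(e^{*})^{\circ}=e$, as required.

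For the homomorphism property, I would expand
\[
\mu(A)\mu(A')=(A\otimes\hat A)(A'\otimes\hat{A'})=AA'\otimes\hat A\,\hat{A'},\qquad \mu(AA')=AA'\otimes\widehat{AA'},
\]
so that everything reduces to $\widehat{AA'}=\hat A\,\hat{A'}$. Writing $AA'=\sum_{j,i}a_ja_i'\otimes(b_i'b_j)^{\circ}$ (by the product \eqref{eq:product_env} in $\A^{e}$) and applying $\hat{\cdot}$ componentwise, the first tensor factor transforms as $((a_ja_i')^{*})^{\circ}=((a_i')^{*}a_j^{*})^{\circ}=(a_j^{*})^{\circ}((a_i')^{*})^{\circ}=\hat a_j\,\hat{a'_i}$, using again the order-reversing product in $\A^{\circ}$; the second as $(b_i'b_j)^{*}=b_j^{*}(b_i')^{*}=\widehat{b_j^{\circ}}\,\widehat{(b_i')^{\circ}}$. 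Matching these with the expansion of $\hat A\,\hat{A'}$ in $\A^{\circ}\otimes\A$ term by term concludes the verification.

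The only delicate point is careful bookkeeping of the order-reversal in $\A^{\circ}$ each time $\sigma^{\circ}$ or the involution is pushed through a product; regularity \eqref{eq:RegularityAutomorph} and the commutation of $\circ$ with $*$ recorded in \eqref{eq:34} do all the work, and no new idea beyond these is needed.
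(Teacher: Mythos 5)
Your proposal is correct and follows essentially the same route as the paper: you verify that $\hat A=\sum_j(a_j^*)^\circ\otimes b_j^*$ satisfies the twisted normalisation \eqref{eq:semi_group_hat} via $\sigma^\circ((b_j^*)^\circ)=(\sigma(b_j)^*)^\circ$ and the order-reversal in $\A^\circ$, and you reduce the homomorphism property to $\widehat{AA'}=\hat A\,\hat{A'}$, checked componentwise, exactly as in the paper. The only cosmetic difference is that you make the identity $\widehat{b^\circ}=b^*$ (via $J^2=\varepsilon\I$) explicit, which the paper states without unwinding.
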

	\begin{proof}
		One first checks that $A\otimes\hat A$ is in
		$\mathrm{Pert}(\A\otimes_{\mathbb{C}}\A^\circ,\sigma)$, which is
		equivalent to show that $\hat A$
		belongs to $\mathrm{Pert}(\A^\circ,\sigma^\circ)$. That $\hat A$ belongs to
		$\A^\circ\otimes\A$ comes from $\hat a_j= (a^*_j)^\circ\in \A^\circ$
		and $\widehat{b_j^\circ}=b_j^*\in\A$. The normalisation
		\eqref{eq:semi_group_hat}  follows from   $\sigma^\circ((b_j^*)^\circ)=\left(\sigma^{-1}(b_j^*)\right)^\circ=\left(\sigma(b_j)^*\right)^\circ$:
		\begin{equation*}
			\label{eq:100}
			\sum_{j} (a_j^*)^\circ \,\sigma^\circ((b^*_j)^\circ)= \sum_j
			(a_j^*)^\circ \,  \left(\sigma(b_j)^*\right)^\circ=
			\sum_j (\sigma(b_j)^* a_j^*)^\circ= \left(\left(\sum_j
			a_j\sigma(b_j)\right)^*\right)^\circ = {e^*}^\circ =e.
		\end{equation*}
		
		To show that $\mu$ preserves the product of semi-group, notice that
		with the notations of lemma \ref{sec:twist-fluct-acti-1} and omitting
		the summation index, one has
		\begin{align*}
			\hat A\hat{A'}=(\hat a\otimes \widehat{b^\circ}) 
			(\hat{a'}\otimes\widehat{{b'}^\circ})&=
			\hat a \hat{a'}\otimes \widehat{b^\circ}\widehat{{b'}^\circ}
			=\widehat{a a'}\otimes \widehat{b^\circ{b'}^\circ}= \widehat{AA'}\quad \forall A, A' \in
			\mathrm{Pert}(\A,\sigma),
			\label{eq:114}
		\end{align*}
		where we used $\hat a \hat{a'}=\widehat{aa'}$ and
		$ 
		\widehat{b}^\circ\widehat{{b'}^\circ}=(\hat{b})^\circ(\widehat{b'})^\circ=(\widehat{b'}\widehat{b})^\circ=(\widehat{b'b})^\circ=
		\widehat{(b'b)^\circ}=\widehat{b^\circ {b'}^\circ}$. Hence
		\begin{equation*}
			\mu(AA')=AA'\otimes \widehat{AA'}= (A\otimes \hat A)(A'\otimes \hat{A'})=\mu(A)\mu(A'). 
			\label{eq:107}
		\end{equation*}
		
		\vspace{-.7truecm}\end{proof}
	
	\begin{remark}
		The semi-group defined in \cite{chamconnessuijlekom:innerfluc} is 
		$\mathrm{Pert}(\A\otimes_\C\hat \A)$ where
		$\hat \A$ denotes the image of $\A$ under the conjugation by $J$. This notation is somehow more
		coherent with the map $\mu$ defined above. However, here we prefer to
		define $\mathrm{Pert}(\A\otimes_\C \A^\circ)$ for it is more coherent
		with the mapping to opposite twisted
		$1$-forms. One should be careful that the ``natural map'' between
		$\mathrm{Pert}(\A,\sigma)$
		and $\mathrm{Pert}(\A\otimes_\C A^\circ)$ 
		\begin{equation}
			\label{eq:106}
			A\longmapsto A\otimes A^\circ
		\end{equation}
		is not a semi-group homomorphism for $A\in \mathrm{Pert}(\A,\sigma)$
		does not imply $A^\circ\in \mathrm{Pert}(\A^\circ,\sigma^\circ)$. This
		is because the normalisation condition defining
		$\mathrm{Pert}(\A^\circ,\sigma^\circ)$ is not equivalent to the one
		defining $\mathrm{Pert}(\A^\circ,\sigma^\circ)$ (see~\eqref{eq:89}). 
	\end{remark}
	
	The action of $\mathrm{Pert}(\A\otimes_{\mathbb{C}}\hat{\A}, \sigma)$
	on ${\cal L}(\HH)$ is defined by combining the actions
	\eqref{eq:action_pert} and \eqref{eq:action_pertopp}
	\begin{equation}
		\label{eq:104}
		\left(\sum_k  A_k\otimes B^\circ_k, T \right):=\sum_k (A_k,
		(B_k^\circ, T))  \qquad \forall\;  \sum_k  A_k\otimes B^\circ_k\in \mathrm{Pert}(\A\otimes_{\mathbb{C}}\hat{\A}, \sigma).
	\end{equation}
	By the order zero condition, this
	action is equal to 
	\begin{equation}
		\left(\sum_k  A_k\otimes B^\circ_k, T \right) =\sum_k (B_k^\circ, (A_k,T)).
	\end{equation}
	%\newpage
	
	\begin{prop} 
		\label{prop:final}
		The action \eqref{eq:104} of
		$\mu(\mathrm{Pert}(A,{\sigma}))$ on the Dirac operator $D$ of a real
		twisted spectral triple \eqref{eq:79} yields     the twisted fluctuation~(\ref{eq:222}):
		\begin{equation}
			\label{eq:115}
			\left(\mu(A), D\right) =D_\omega \quad \text{ for } \;\omega=\eta(A).
		\end{equation}
		Moreover, the product (\ref{eq:product_env}) in
		$\mathrm{Pert}(A,\sigma)$ encodes the transitivity of the 
		fluctuations (a twisted fluctuation of a twisted fluctuation is a
		twisted fluctuation).
	\end{prop}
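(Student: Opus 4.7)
The plan is to compute $(\mu(A), D)$ directly from the definition \eqref{eq:104}. Writing $\mu(A)=A\otimes\hat{A}$ with $A=\sum_j a_j\otimes b_j^\circ$, only one term appears, and the order-zero condition permits the two actions to be carried out in either order. I would use the form $(\mu(A), D) = (\hat{A}, (A, D))$, performing the $\A$-action first so that the non-linear correction is produced by the subsequent action of $\hat{A}$ on $D+\omega_{(1)}$.

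The inner computation $(A, D) = D + \omega_{(1)}$ is exactly the identity displayed just before Lemma~\ref{sec:twist-fluct-acti}: expand $a_j D b_j = a_j\sigma(b_j) D + a_j[D, b_j]_\sigma$ and apply the normalisation $\sum_j a_j\sigma(b_j)=e$. For the outer action I would split
\begin{equation*}
(\hat{A},\,D+\omega_{(1)}) = (\hat{A}, D) + \sum_j \hat{a}_j\,\omega_{(1)}\,\hat{b}_j.
\end{equation*}
The first piece gives $D+\hat{\omega}_{(1)}$ by the opposite-algebra version of the same argument; this requires the auxiliary normalisation $\sum_j \hat{a}_j\,\sigma^\circ(\hat{b}_j)=e$, which is checked using $\sigma^\circ((b^*)^\circ)=(\sigma(b)^*)^\circ$ from \eqref{eq:84} together with the original normalisation of $A$. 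For the second piece I would expand $\omega_{(1)}\hat{b}_j = \sigma^\circ(\hat{b}_j)\,\omega_{(1)} + [\omega_{(1)},\hat{b}_j]_{\sigma^\circ}$ and re-use the same normalisation, obtaining $\omega_{(1)} + \omega_{(2)}$ with $\omega_{(2)}$ as in \eqref{eq:224}. Summing the four contributions yields $D_\omega$ of \eqref{eq:222}.

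For the transitivity statement, the result follows by combining Lemma~\ref{sec:twist-fluct-acti-1}, extended in the obvious way from $\mathrm{Pert}(\A,\sigma)$ and $\mathrm{Pert}(\A^\circ,\sigma^\circ)$ to the enveloping semi-group $\mathrm{Pert}(\A\otimes_\C\A^\circ,\sigma)$ (the two partial actions commute by order-zero, so their transitivities combine), with the fact that $\mu$ is a semi-group homomorphism (Lemma~\ref{sec:twist-fluct-acti}). For any $A,A'\in\mathrm{Pert}(\A,\sigma)$ one then gets
\begin{equation*}
\bigl(\mu(A'),\,(\mu(A), D)\bigr) = \bigl(\mu(A')\mu(A), D\bigr) = \bigl(\mu(A'A), D\bigr) = D_{\eta(A'A)},
\end{equation*}
expressing a twisted fluctuation of a twisted fluctuation as a single twisted fluctuation.

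The main obstacle, though modest, is the bookkeeping in the outer action: without the twisted first-order condition one cannot move $\hat{b}_j$ past $\omega_{(1)}$ by a $\sigma^\circ$-twisted commutation, and it is precisely this failure that produces $\omega_{(2)}$ rather than $0$. Care is needed to distinguish $\sigma$-twisted from $\sigma^\circ$-twisted commutators and to apply the normalisation in its correct opposite-algebra form; both points are however already handled in the computations surrounding Proposition~\ref{sec:twist-fluct-with-1}, so no genuinely new ingredient is expected.
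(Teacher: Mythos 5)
Your proposal is correct and takes essentially the same approach as the paper: expand the action using the twisted normalisation of $A$ and of $\hat A$ (Lemma \ref{sec:twist-fluct-acti}), then get transitivity from Lemma \ref{sec:twist-fluct-acti-1} combined with the homomorphism property of $\mu$. The only (harmless) difference is that you apply the partial actions in the order $(\hat A,(A,D))$ --- legitimate by the order-zero identity stated right after \eqref{eq:104} --- so that $\omega_{(2)}$ appears directly in its defining form \eqref{eq:224}, whereas the paper computes $(A,(\hat A,D))$ and must then identify $\sum_j a_j[\hat\omega_{(1)},b_j]_\sigma$ with $\omega_{(2)}$ using the order-zero condition and \eqref{eq:229}.
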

	\begin{proof}
		Let $A=\sum_j a_j\otimes b_j^\circ$ in $\mathrm{Pert}(\A, \sigma)$. Then
		\begin{equation}
			\mu(A)=A\otimes \hat A =
			\sum_{j,i} a_j\otimes b_j^\circ\otimes \hat
			a_i\otimes {\hat b_i}^\circ
			\label{eq:109}
		\end{equation}
		so that (using the twisted normalisation conditions for $A$ and
		$\hat A$)
		\begin{align*}
			\left(\mu(A), D\right) =	&=\sum_{j,i}a_{j}\hat{a_{i}}D\hat{b_{i}}b_{j}=
			\sum_{j,i}a_{j}\hat{a_{i}}\left(\sigma^{\circ}(\hat{b_{i}})D+[D,\hat{b_{i}}]_{\sigma^{\circ}}\right)b_{j},\\
			&=\sum_{j}a_{j}Db_{j}+\sum_{j,i}a_{j}\hat{a_{i}}[D,\hat{b_{i}}]_{\sigma^{\circ}}b_{j},\\
			&=D+\sum_{j}a_{j}[D,b_{j}]_{\sigma}+\sum_{j,i}a_{j}\left(\sigma(b_{j})\hat{a_{i}}[D,\hat{b_{i}}]_{\sigma^{\circ}}+[\hat{a_{i}}[D,\hat{b_{i}}]_{\sigma^{\circ}},b_{j}]_{\sigma}\right),\\
			&=D+\sum_{j}a_{j}[D,b_{j}]_{\sigma}+\sum_{i}\hat{a_{i}}[D,\hat{b_{i}}]_{\sigma^{\circ}}+\sum_{j,i}a_{j}[\hat{a_{i}}[D,\hat{b_{i}}]_{\sigma^{\circ}},b_{j}]_{\sigma},\\
			&=D+\omega_{(1)}+\hat{\omega}_{(1)}+\omega_{(2)}=D_\omega,
		\end{align*}
		where the last equation follows from the formula of $\omega_{(2)}$
		below \eqref{eq:228} 
		
		For the second statement, we need to show that the
		action \eqref{eq:104} is transitive. For $M=A\otimes
		B^\circ$, $M'=A'\otimes
		{B'}^\circ$ in
		$\mathrm{Pert}(\A\otimes_{\mathbb{C}}\hat{\A},
		\sigma)$, one has
		\begin{align}
			\label{eq:112}
			\left(M,(M',T)\right)&= \left(M, (A',({B'}^\circ,T))\right)=\left(B^\circ, \left(A,  (A',({B'}^\circ,T))\right)\right),\\
			&= \left(B^\circ, \left(AA', ({B'}^\circ,T)\right)\right)=\left(B^\circ, \left({B'}^\circ, (AA',T)\right)\right),\\
			&= \left(B^\circ {B'}^\circ, (AA',T)\right)=\left(AA', (B^\circ {B'}^\circ,T)\right),\\ 
			&=\left(AA'\otimes
			B^\circ {B'}^\circ,T\right) = \left((A\otimes B^\circ)(A'\otimes B'^\circ),T\right)=(MM',T).
		\end{align}
		Together with lemma \ref{sec:twist-fluct-acti} this yields
		\begin{align*}
			\left(\mu(A'),(\mu(A),D)\right)=\left(\mu(A')\mu(A),D\right) = \left(\mu(A'A),D\right).
		\end{align*}
	
			\vspace{-.75truecm}
	\end{proof}
	
	\smallskip\noindent This proposition \ref{prop:final} is a straightforward generalization to
	the twisted case of  proposition 5 of
	\cite{chamconnessuijlekom:innerfluc}. This shows that in the twisted
	case as well,  the transition
	from ordinary to real spectral triples is encoded by the homomorphism
	$\mu$.
	\medskip
	
	The group
	of unitary ${\mathcal U}(\A)$ maps to
	$\mathrm{Pert}(\A\otimes_{\mathbb{C}}\hat{\A}, \sigma)$ composing the
	inclusion  (\ref{eq:unitary_in_pert}) of ${\mathcal
		U}(\A)$ in
	$\mathrm{Pert}(A,{\sigma}))$ with the
	homomorphism $\mu$, that is
	\begin{equation*}
		\mu(p(u))=	\mu\left(\sigma(u)\otimes u^{*\circ}\right)=
		\sigma(u) \otimes
		\, u^{*\circ} \otimes \sigma^{\circ}(\hat{u})\otimes u
	\end{equation*}
	where we used
	$\widehat{\sigma(u)}=(\sigma(u)^{*})^{\circ}=(\sigma^{-1}(u^{*}))^{\circ}=\sigma^{\circ}(u^{*\circ})=\sigma^{\circ}(\hat{u})$
	and $\widehat{u^{*\circ}}=\widehat{\hat u}=u$.              
	Its action on action ${\mathcal L}(\HH)$ is -  remembering
	\eqref{eq:twist_Ad} and \eqref{eq:53} -
	\begin{equation}
		(\mu(p(u)),T)=\sigma(u)\sigma^{\circ}(\hat{u})\, T\, u^\circ
		u^{*}=\mathrm{Ad}(\sigma(u))\, T
		\mathrm{Ad}(u)^{*}\quad \forall T\in{\mathcal L}(\HH).	\end{equation}
	Thus  the gauge transformation of
	Prop.\ref{prop:TwistedGaugeTransformedNoFirstOrder} is given by the
	action of $\mu(p(u))$ on the twisted fluctuate Dirac operator
	$D_\omega$. The latter being obtained by the action of $\mu(A)$ for $\omega=\eta(A)$, one
	has
	\begin{equation*}
		D\xrightarrow{\mu(A)}D_{\omega}\xrightarrow{\mu(p(u))}D_{\omega^{u}}.
	\end{equation*}

	\section{Example: the twisted $U(1)\times U(2)$ model} 
	\label{sec:twisted-u1times-u2}
	
	To illustrate ours results, we work out in this finale 
	part the twisting of the  $U(1)\times U(2)$ model  presented in 
	\cite{chamconnessuijlekom:innerfluc}, applying the minimal twist
	introduced in~\cite{landimart:twisting}. 
	Just to mention it, the minimal twist of the canonical triple associated to a closed spin$^c$ Riemannian manifold satisfies the twisted first-order condition, so it is a trivial example in the present context. 	
	
	The model starts with
	the general classification of irreducible finite
	geometries of $KO$-dimen\-sion~$6$ done in
	\cite{chamconnes:why}. In the simplest interesting
	case,  the algebra and Hilbert space are
	\begin{eqnarray}
		\label{eq:algebra_no_grading_hilbert_space}
		\A=M_2(\mathbb{C})\oplus M_2(\mathbb{C}), & & \HH=(\mathbb{C}^2\otimes\overline{\mathbb{C}}^2)\oplus(\mathbb{C}^2\otimes\overline{\mathbb{C}}^2).
	\end{eqnarray} 
	The
	elements of $\HH$ are labelled by two multi-indices $A=\alpha I$, $A'=\alpha' I'$ where $\alpha=1,2$ and
	$I=1,2$ label the first summand, while $\alpha'=1,2$, $I'=1,2$ label the
	second one. Any $\psi\in\HH$ thus writes
	\begin{equation}
		\label{eq:48}
		\psi =
		\begin{pmatrix}
			\psi_A\\\psi_{A'}
		\end{pmatrix}
	\end{equation}
	where $\psi_{A'}$ is the conjugate spinor to $\psi_A$. The Dirac
	operator is
	\begin{equation}
		\label{eq:236}
		D=\left(\begin{matrix}
			D_A^B & D_A^{B'}\\ D_{A'}^B & D_{A'}^{B'}
		\end{matrix}\right)
		\qquad \text{ with }\qquad
		\left\{\begin{array}{l}
			D_A^B=	D^{\beta J}_{\alpha I}=
			\begin{pmatrix}
				0 & k_x\\ \bar{k_x}&0
			\end{pmatrix}_\alpha^\beta
			\delta^{J}_{I}=\overline{D_{A'}^{B'}}\;,\vspace{.05truecm}\\
			D_{A'}^{B}=	D^{\beta J}_{\alpha' I'}=
			\begin{pmatrix}
				k_y & 0\\0& 0
			\end{pmatrix}_{\alpha'}^\beta
			\,\delta^{1}_{I'}\delta^{J}_{1}=\overline{D_A^{B'}},
		\end{array}\right.
	\end{equation}
	where $k_x, k_y$ are two complex constants.
	The grading and real structure are 
	\begin{align}
		\Gamma&=\left(\begin{matrix}
			\Gamma_A^B& 0\\
			0 & -\Gamma_{A'}^{B'}
		\end{matrix}\right) \quad 
		\text{ where }\; \Gamma_A^B=
		\left(\begin{matrix}
			1 & 0\\
			0 & -1
		\end{matrix}\right)^{\beta}_{\alpha}\delta^{J}_{I}, \text{ and similarly for}\; \Gamma_{A'}^{B'};\\
		\label{eq:real_structure_U(1)xU(2)}
		J&=\left(\begin{matrix}
			0 & J_A^{B'}\\
			J_{A'}^B& 0
		\end{matrix}\right)cc \quad \text{ where }
		J_A^{B'}=\delta^{\beta'}_{\alpha}\delta^{J'}_{I},\text{
			and similarly for } J_{A'}^B.
	\end{align}
	They satisfy $J^{2}=\mathbb{I}$, $JD=DJ$ and $J\Gamma=-\Gamma J$.  The
	first $M_2(\mathbb{C})$ in $\A$ acts on the indices $\alpha$. Its
	commuting with $\Gamma$ makes it break into two copies of $\C$, thus
	defining the
	even part of $\A$, 
	\begin{equation}
		\label{eq:even_algebra}
		\A_{\text{ev}}=\mathbb{C}_R\oplus\mathbb{C}_L\oplus M_2(\mathbb{C}),
	\end{equation}
	whose element $a=(\lambda_R, \lambda_L,m)$ with  $\lambda_R\in\C^R$, $\lambda_L\in \C^L$ and
	$m\in M_2(\C)$  act on $\HH$ as 
	\begin{eqnarray*}
		\pi_0(a)=\left(\begin{matrix}
			a_A^B& 0\\
			0 & a_{A'}^{B'}
		\end{matrix}\right) &\quad\text{ where }\;
		a_A^B=\left(\begin{matrix}
			\lambda_{R} & 0\\
			0 & \lambda_L
		\end{matrix}\right)_\alpha^\beta\delta^{J}_{I}\;  \text{ and } \; a_{A'}^{B'}=\delta^{\beta'}_{\alpha'}\, m^{J'}_{I'}.
	\end{eqnarray*}
	The spectral triple $(\A_{\text{ev}}, \HH, D)$ does not satisfy
	the first order condition \cite[Prop.7]{chamconnessuijlekom:innerfluc}.
	
	\smallskip The twisting by grading, formalised in \cite{landimart:twisting},
	consists in  letting two copies of $\A_{\text{ev}}$ act independently
	on the eigenspaces of $\Gamma$. The latter are the image of $\HH$
	under the projections
	\begin{equation}
		\label{eq:234}
		p_+=\frac{1}{2}(\mathbb{I}+\Gamma)=\left(\begin{matrix}
			\delta^{1}_{\alpha}\delta^{\beta}_{1}\delta^{J}_{I} & 0\\
			0 & \delta^{2}_{\alpha'}\delta^{\beta'}_{2}\delta^{J'}_{I'}
		\end{matrix}\right) \quad\text{ and }\quad  p_-=\frac{1}{2}(\mathbb{I}-\Gamma)=\left(\begin{matrix}
			\delta^{2}_{\alpha}\delta^{\beta}_{2}\delta^{J}_{I} & 0\\
			0 & \delta^{1}_{\alpha'}\delta^{\beta'}_{1}\delta^{J'}_{I'}
		\end{matrix}\right).
	\end{equation}
	For $a^r= (\lambda_R^r, \lambda_L^r,m^r)$, $a^l= (\lambda_R^l,
	\lambda_L^l, m^l)$ in $\A_{\text{ev}}$, one thus defines the representation $\pi$ of~$\A_{\text{ev}}\oplus~\A_{\text{ev}}$,
	\begin{equation}
		\label{eq:new_presentation_A}
		\pi((a^r, a^l))=p_+\pi_0(a^r)+p_-\pi_0(a^l)
		=\left(\begin{matrix}
			Z_A^B & 0\\
			0 & M_{A'}^{B'}
		\end{matrix}\right)
	\end{equation}
	where
	\begin{align*}
		Z_A^B&		
		%Z^{\beta}_{\alpha}:=z_R\delta^{1}_{\alpha}\delta^{\beta}_{1}+w_L\delta^{2}_{\alpha}\delta^{\beta}_{2}
		=\left(\begin{matrix}
			\lambda^r_R & 0\\
			0 & \lambda_L^l
		\end{matrix}\right)^{\beta}_{\alpha}\delta^{J}_{I} 
		%,\\
		\;\text{ and }\quad M_{A'}^{B'}=
		\begin{pmatrix}
			0&0\\0&1
		\end{pmatrix}_{\alpha'}^{\beta'}(m^r)^{J'}_{I'}+ \begin{pmatrix}
			1&0\\0&0
		\end{pmatrix}_{\alpha'}^{\beta'}(m^l)^{J'}_{I'}=
		\begin{pmatrix}
			(m^l)^{J'}_{I'}&0\\ 0& (m^r)^{J'}_{I'}
		\end{pmatrix}_{\alpha'}^{\beta'}.
		%\delta^{2}_{\alpha'}\delta^{\beta'}_{2}m^{J'}_{I'}+\delta^{1}_{\alpha'}\delta^{\beta'}_{1}n^{J'}_{I'}.
	\end{align*}
	
	The twisting automorphism is the flip $\sigma((a^r,
	a^l))=(a^l, a^r)$, so that
	\begin{equation}
		\label{eq:repr_sigma}
		\pi(\sigma((a^r, a^l)))=\left(\begin{matrix}
			W_A^B& 0\\
			0 & N_{A'}^{B'}
		\end{matrix}\right)\end{equation}
	where 
	\begin{equation*}
		W_A^B=\left(\begin{matrix}
			\lambda^l_R & 0\\
			0 & \lambda^r_L
		\end{matrix}\right)^{\beta}_{\alpha} \delta^{J}_{I} \;\text{ and }\;
		N_{A'}^{B'}=
		\begin{pmatrix}
			0&0\\0&1
		\end{pmatrix}
		_{\alpha'}^{\beta'}(m^l)^{J'}_{I'}+ \begin{pmatrix}
			1&0\\0&0
		\end{pmatrix}_{\alpha'}^{\beta'} (m^r)^{J'}_{I'}=  \begin{pmatrix}
			(m^r)^{J'}_{I'}&0\\0&(m^l)^{J'}_{I'}
		\end{pmatrix}
		_{\alpha'}^{\beta'}.
	\end{equation*}    
	The resulting triple $(\A_{\text{ev}}\oplus
	\A_{\text{ev}},\HH,D),\sigma, J,\Gamma$ is a real, graded, twisted spectral triple with the same $KO$-dimension of
	$(A,\HH,D),J,\Gamma$.
	However it does not satisfy the twisted first-order condition, as can
	be checked computing the inner fluctuation of $D$. 
	
	\begin{prop} An inner twisted fluctuation of $(\A_{\text{ev}}\oplus
		\A_{\text{ev}},\HH,D),\sigma$ is parametrised by
		$6$ complex parameters $\phi, \phi'$, $\sigma_1,
		\sigma_2,\sigma^1, \sigma^2$ that enter the components of
		$D_\omega$ \eqref{eq:222} as
		\begin{align}
			\label{eq:242}
			(D_{\omega})^{\beta J}_{\alpha
				I}&=\left(\delta^{1}_{\alpha}\delta^{\beta}_{2}\,k_x(1+\phi)+\delta^{2}_{\alpha}\delta^{\beta}_{1}\,\bar
			k_{x}(1+\phi')\right)\delta^{J}_{I},\\
			\label{eq:243}	(D_{\omega})^{\beta' J'}_{\alpha'
				I'}&=\left(\delta^{1}_{\alpha'}\delta^{\beta'}_{2}\,\bar
			k_x(1+\bar\phi)+\delta^{2}_{\alpha'}\delta^{\beta'}_{1}\,
			k_{x}(1+\bar\phi')\right)\delta^{J'}_{I'},\\
			\label{eq:244}	(D_{\omega})^{\beta J}_{\alpha' I'}&=k_y\delta^{1}_{\alpha'}\delta^{\beta}_{1}(\sigma_{I'}+\delta^{1}_{I'})(\bar\sigma^{J}+\delta^{J}_{1}),\\
			\label{eq:245}	(D_{\omega})^{\beta' J'}_{\alpha I}&=\bar k_y\delta^{1}_{\alpha}\delta^{\beta'}_{1}(\bar{\sigma_{I}}+\delta^{1}_{I})(\sigma^{J'}+\delta^{J'}_{1})
		\end{align}
		Imposing the twisted $1$-form $\omega_{(1)}$ to be selfadjoint implies
		$\phi'=\bar\phi$ and $\sigma^I=\bar{\sigma_I}$ so that the number of
		free parameters reduces to $3$.        \end{prop}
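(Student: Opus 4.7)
The plan is to compute $D_\omega=D+\omega_{(1)}+\hat\omega_{(1)}+\omega_{(2)}$ block by block, using the explicit form of $D$ in \eqref{eq:236}, of $\pi(b)$ in \eqref{eq:new_presentation_A} and of $\pi(\sigma(b))$ in \eqref{eq:repr_sigma}. For a generic $b=(b^r,b^l)$ with $b^r=(\mu^r_R,\mu^r_L,n^r)$ and similarly for $b^l$, I would first compute the twisted commutator $[D,\pi(b)]_\sigma=D\pi(b)-\pi(\sigma(b))D$. Because $\pi(b),\pi(\sigma(b))$ are block-diagonal on the $(A,A')$ decomposition while $D$ is not, the commutator splits naturally into an $A\!\to\! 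B$ block proportional to $k_x,\bar k_x$ and involving the differences $\mu^r_R-\mu^l_L$ and $\mu^l_R-\mu^r_L$, a primed block $A'\!\to\! B'$ related by complex conjugation, and mixing blocks $A'\!\to\! B$, $A\!\to\! B'$ proportional to $k_y,\bar k_y$ and localised at the $(\alpha',\beta)=(1,1)$ and $(I',I)=(1,1)$ positions by the Kronecker deltas $\delta^1_{\alpha'}\delta^\beta_1\delta^{J}_{1}\delta^1_{I'}$ appearing in $D$.

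Next, I would sum over $j$ to form $\omega_{(1)}=\sum_j a_j[D,b_j]_\sigma$. On the $A\!\to\!B$ block the most general combination reduces to two complex parameters, which I name $\phi$ (for the $\alpha=1,\beta=2$ entry) and $\phi'$ (for the $\alpha=2,\beta=1$ entry), yielding the $k_x(1+\phi)$, $\bar k_x(1+\phi')$ structure of \eqref{eq:242} once $D$ itself is added in. On the mixing blocks the product of $n^r_{I'}{}^{\!J}$ with left multipliers is a rank-one object of the form $k_y\,\delta^1_{\alpha'}\delta^\beta_1\,\sigma_{I'}\bar\sigma^J$ for two complex vectors $\sigma_{I'}$ and $\bar\sigma^J$ (with $I',J=1,2$). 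The conjugate term $\hat\omega_{(1)}=\varepsilon'J\omega_{(1)}J^{-1}$ is then read off from \eqref{eq:real_structure_U(1)xU(2)}: it simply swaps primed $\leftrightarrow$ unprimed indices and conjugates, producing \eqref{eq:243} and the $\bar k_y$ contribution to \eqref{eq:245}.

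For the non-linear term $\omega_{(2)}=\sum_j\hat a_j[\omega_{(1)},\hat b_j]_{\sigma^\circ}$, I would use the order-zero condition (which lets $\hat a_j$ be pushed past $\omega_{(1)}$) together with the explicit form of $\pi^\circ$ obtained from \eqref{eq:opposite_rep} and \eqref{eq:real_structure_U(1)xU(2)}. The crucial point is that $\hat b_j$ acts on the primed indices while $\omega_{(1)}$'s mixing block involves the unprimed ones, so $[\omega_{(1)},\hat b_j]_{\sigma^\circ}$ only produces contributions in the mixing blocks and only renormalises the rank-one tensors $\sigma_{I'}\bar\sigma^J$ already present. Assembling $D+\omega_{(1)}+\hat\omega_{(1)}+\omega_{(2)}$ and absorbing the $D$ terms into the $+\delta^1_{I'}$, $+\delta^J_1$ shifts gives exactly \eqref{eq:244}, \eqref{eq:245}; the $A\!\to\!B$ and $A'\!\to\!B'$ blocks are unaffected by $\omega_{(2)}$ because there the mixing is trivial, so the $1+\phi, 1+\phi'$ structure of \eqref{eq:242}, \eqref{eq:243} persists. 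One thereby identifies six independent complex parameters $\phi,\phi',\sigma_1,\sigma_2,\sigma^1,\sigma^2$.

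Finally, for selfadjointness of $\omega_{(1)}$, I would apply lemma \ref{lem:TwistedOneFormsMap}.ii: $\omega_{(1)}^*=\eta(\sum_j b_j^*\otimes(a_j^*)^\circ)$. Comparing blockwise with $\omega_{(1)}$, hermiticity of the $k_x$ block forces $\phi'=\bar\phi$, and hermiticity of the rank-one $k_y$ mixing block (which under $*$ exchanges the two factors up to complex conjugation) forces $\sigma^I=\bar{\sigma_I}$, leaving $\phi\in\C$ and $\sigma_1,\sigma_2\in\C$, i.e.\ three independent complex parameters. The main obstacle is not conceptual but book-keeping: one has to keep track of four multi-indices simultaneously and verify that no additional structural terms survive beyond the rank-one ansatz, in particular that $\omega_{(2)}$ indeed factorises through the same $\delta^1_{\alpha'}\delta^\beta_1$ projection as $\omega_{(1)}$'s mixing block.
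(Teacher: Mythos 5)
Your overall strategy is the same as the paper's (a direct blockwise computation of $D+\omega_{(1)}+\hat\omega_{(1)}+\omega_{(2)}$ using the explicit matrices for $D$, $\pi(b)$, $\pi(\sigma(b))$ and $J$), but two of your structural claims about the intermediate objects are wrong, and with them the bookkeeping would not reproduce \eqref{eq:243}--\eqref{eq:245}. First, the primed diagonal block of the twisted commutator is \emph{not} ``related by complex conjugation'' to the unprimed one: one finds $D_{A'}^{B'}M_{A'}^{B'}-N_{A'}^{B'}D_{A'}^{B'}=0$ identically, because $D_{A'}^{B'}$ acts on the $\alpha'$ indices while $M,N$ act on the $I'$ indices and the flip $\sigma$ is exactly what makes the twisted commutator vanish. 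Hence $(\omega_{(1)})_{A'}^{B'}=0$ and the whole of \eqref{eq:243} comes from $\hat\omega_{(1)}=J\omega_{(1)}J^{-1}$ alone; if the commutator's primed block were nonzero as you describe, \eqref{eq:243} would be double-counted.

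Second, the mixing block of $\omega_{(1)}$ is \emph{not} the rank-one object $k_y\,\delta^1_{\alpha'}\delta^\beta_1\,\sigma_{I'}\bar\sigma^J$: since $D_{A'}^{B}$ carries the factor $\delta^1_{I'}\delta^J_1$ and the left multiplier $\pi(a_j)$ only acts on the $(\alpha',I')$ indices, one gets $(\omega_{(1)})^{\beta J}_{\alpha' I'}=k_y\,\delta^1_{\alpha'}\delta^\beta_1\,\delta^J_1\,\sigma_{I'}$, with the $J$ index still pinned to $1$. Likewise $\hat\omega_{(1)}$ contributes $k_y\,\delta^1_{\alpha'}\delta^\beta_1\,\delta^1_{I'}\,\bar\sigma^J$, and the genuine rank-one product $\sigma_{I'}\bar\sigma^J$ appears \emph{only} in the quadratic term $\omega_{(2)}$ (computed by replacing the components of $D$ by those of $\hat\omega_{(1)}$ in the formula for $\omega_{(1)}$). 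The factorisation $(\sigma_{I'}+\delta^1_{I'})(\bar\sigma^J+\delta^J_1)$ in \eqref{eq:244} is precisely the sum of the four cross terms coming from $D$, $\omega_{(1)}$, $\hat\omega_{(1)}$ and $\omega_{(2)}$; your version, in which $\omega_{(1)}$ already carries $\sigma_{I'}\bar\sigma^J$ and $D$ is merely ``absorbed into the shifts'', misses the two mixed terms $\sigma_{I'}\delta^J_1$ and $\delta^1_{I'}\bar\sigma^J$ and therefore does not give the stated formula. The selfadjointness discussion at the end is fine.
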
	    
	\begin{proof}
		We first compute the twisted one form
		$\omega_{(1)}=\sum_{j}{\pi(\bf a)}_j[D,\pi({\bf b}_j)]_\sigma$.  From
		(\ref{eq:236}-\ref{eq:new_presentation_A}-\ref{eq:repr_sigma}),
		one has for
		${\bf a}=(({\lambda '}^{r}_{R},{\lambda '}^{r}_{L},{m'}^r),({\lambda '}^{l}_{R},{\lambda '}^{l}_{L},{m'}^l))$ and
		${\bf b}=((\lambda^r_{R}, \lambda^r_{L},m^r),(z^l_{R},\lambda^l_{L},m^l))$
		\begin{equation}
			\label{eq:235}
			[D,\pi({\bf b})]_\sigma=
			\begin{pmatrix}
				D_A^B Z_A^B   -W_A^B D_A^B  & D_A^{B'}
				M_{A'}^{B'}- W_A^B  D_A^{B'}\\
				D_{A'}^BZ_A^B -N_{A'}^{B'}D_{A'}^B &
				D_{A'}^{B'}M_{A'}^{B'}- N_{A'}^{B'}D_{A'}^{B'}
			\end{pmatrix},
		\end{equation}
		where one computes   $D_{A'}^{B'}M_{A'}^{B'}- N_{A'}^{B'}D_{A'}^{B'}=0$,
		\begin{align}
			\label{eq:237}
			D_A^B Z_A^B   -W_A^B D_A^B &=
			\begin{pmatrix}
				0& k_x(\lambda^l_L-\lambda^l_R)\\ \bar{k_x}(\lambda^r_R-\lambda^r_L)
			\end{pmatrix}_\alpha^\beta \delta_I^J,\\
			D_{A'}^BZ_A^B -N_{A'}^{B'}D_{A'}^B &=
			\begin{pmatrix}
				k_y(\lambda^r_R\delta_{I'}^1 -
				(m^r)_{I'}^1) &0\\ 0&0
			\end{pmatrix}_{\alpha'}^\beta\delta^J_1,\\
			D_{A}^{B'}M_{A'}^{B'} -W_{A}^{B}D_{A}^{B'} &=
			\begin{pmatrix}
				\bar k_y((m^l)_{1}^{J'}-\lambda^l_R\delta_1^{J'})&0\\ 0&0
			\end{pmatrix}_{\alpha}^{\beta'}
			\delta_I^1.\end{align}
		Hence $\omega_{(1)}$ has
		diagonal components $(\omega_{(1)})_{A'}^{B'}=0$, 
		$(\omega_{(1)})_A^B={Z'}_A^B( D_A^B Z_A^B   -W_A^B D_A^B)$ (we omit the summation index $j$), that is
		\begin{eqnarray}
			\label{eq:20bisb}
			(\omega_{(1)})^{\beta' J'}_{\alpha' I'}=0,\quad&  (\omega_{(1)})^{\beta  J}_{\alpha  I}=k_x\,\delta^{1}_{\alpha}\delta^{\beta}_{2}\delta^{J}_{I}\,
			\phi    +\bar{k_{x}}\,   \delta^{2}_{\alpha}\delta^{\beta}_{1}\delta^{J}_{I}\, \phi'
		\end{eqnarray}
		where one defines the complex parameters
		\begin{eqnarray}
			\label{eq:101} 	\phi:=\sum_{j}{\lambda '}^r_{R}(\lambda^l_{L}-\lambda^l_{R}) \quad& & \phi':=\sum_{j}{\lambda '}^l_{L}(\lambda^r_{R}-\lambda^r_{L}).
		\end{eqnarray}
		The off-diagonal components of $\omega_{(1)}$ are ${Z'}_A^B(D_{A}^{B'}M_{A'}^{B'} -W_{A}^{B}D_{A}^{B'})$,  ${M'}_{A'}^{B'}
		( D_{A'}^BZ_A^B -N_{A'}^{B'}D_{A'}^B)$, that is 
		\begin{equation}
			\label{eq:241}
			(\omega_{(1)})^{\beta' J'}_{\alpha I}=\bar
			k_{y}\delta^{1}_{\alpha}\delta^{\beta'}_{1}\delta^{1}_{I}\sigma^{J'},\quad
			(\omega_{(1)})^{\beta J}_{\alpha'
				I'}=k_y\delta^{1}_{\alpha'}\delta^{\beta}_{1}\delta^{J}_{1}\,
			\sigma_{I'}, 
		\end{equation}
		where
		\begin{equation}
			\label{eq:240}
			\sigma^{J'}:=\sum_{j}{\lambda '}^r_{R}((m^l)^{J'}_{1}-\lambda^l_{R}\delta^{J'}_{1})\quad	\sigma_{I'}:=\sum_{j}({m'}^l)^{1}_{I'}\lambda^r_{R}-({m'}^l)^{K'}_{I'}\,(m^r)^{1}_{K'} .
		\end{equation}
		
		The next term is
		$\hat{\omega}_{(1)}=J\omega_{(1)}J^{-1}=J\omega_{(1)}J$. By
		(\ref{eq:real_structure_U(1)xU(2)}) one easily obtains
		\begin{align}
			\label{eq:239}
			(\hat{\omega}_{(1)})^{\beta J}_{\alpha I}&=0, \qquad
			(\hat{\omega}_{(1)})^{\beta' J'}_{\alpha' I'}=\overline{(\omega_{(1)})^{\beta J}_{\alpha I}}=\bar k_x\,\delta^{1}_{\alpha'}\delta^{\beta'}_{2}\delta^{J'}_{I'}\,
			\bar\phi
			+k_{x}\,
			\delta^{2}_{\alpha}\delta^{\beta}_{1}\delta^{J}_{I}\,
			\bar\phi'\\[4pt]
			\label{eq:238}
			(\hat{\omega}_{(1)})^{\beta J}_{\alpha'
				I'}&=\overline{({\omega}_{(1)})^{\beta' J'}_{\alpha
					I}}=k_y\delta_{\alpha'}^1\delta_1^{\beta}\delta_{I'}^1\, \bar{\sigma^{J}},  \quad(\hat{\omega}_{(1)})^{\beta' J'}_{\alpha I}=\overline{({\omega}_{(1)})^{\beta J}_{\alpha' I'}}=\bar{k_y}\delta_\alpha^1\delta_1^{\beta'}\delta_1^{J'}\bar{\sigma_I}.
		\end{align}
		
		The quadratic term
		$\omega_{(2)}=\sum_{i}a_i[\hat{\omega}_{(1)},b_i]_{\sigma}$
		is computed as $\omega_{(1)}$, substituting the components
		of $D$ with those of $\hat\omega_{(1)}$. The latter have
		the same indices structure as the components of $D$ so the
		computation is similar and one obtains
		\begin{align}
			\label{eq:105}
			(\omega_{(2)})^{\beta J}_{\alpha I}&=(\omega_{(2)})^{\beta' J'}_{\alpha' I'}=0,\\
			\label{eq:211}            	(\omega_{(2)})^{\beta J}_{\alpha'
				I'}&=k_y\delta^{1}_{\alpha'}\delta^{\beta}_{1}\sigma_{I'}\bar{\sigma^{J}},
			(\omega_{(2)})^{\beta' J'}_{\alpha I}=k^{*}_{y}\delta^{1}_{\alpha}\delta^{\beta'}_{1}\bar{\sigma_{I}}\sigma^{J'}.
		\end{align}
		The result follows summing up    \eqref{eq:105}-\eqref{eq:211},
		\eqref{eq:239}-\eqref{eq:238}, \eqref{eq:20bisb}-\eqref{eq:241}   with
		\eqref{eq:236}. \end{proof}
	
	In case $\omega_{(1)}$ is
	selfadjoint (i.e. $\phi'=\bar\phi$
	and $\sigma^{I} = \bar{\sigma_{I}}$), the components
	\eqref{eq:242}-~\eqref{eq:245} of
	$D_\omega$ are the same as in the non-twisted case. The only
	difference is in the relation \eqref{eq:101}, \eqref{eq:240} between the complex parameters
	$\phi,\sigma^I$ and the algebra elements. One finds back the
	formula of  the non-twisted case \cite{chamconnessuijlekom:innerfluc}
	identifying the $l$ and $r$ indices.

	\section{Outlook}
	
	There is a way to twist the  spectral triple of the Standard Model in
	order to 
	generate an extra scalar field, as investigated in
	\cite{M.-Filaci:2020aa}.
	In the $U(1)\times U(2)$model above there is no such extra field, for the part $D_{A'}^{B'}$
	of the Dirac
	operator that commutes with the algebra also twist-commutes with
	it. Instead, the part of the Dirac operator of the Standard Model that
	commutes with the algebra (namely the one containing the Majorana mass
	of the neutrino) no longer twist-commutes with the algebra. A
	systematic study of the minimal twisting of almost commutative geometries,
	in relation with the generation of extra scalar fields and the twisted first-order
	condition is on its way \cite{Manuel-Filaci:2020aa}.
	
	In the non-twisted case, the first order condition is retrieved
	dynamically by minimising the spectral action. A similar thing occurs
	with the partial twist of the Standard Model performed in
	\cite{buckley}. Whether this happens in other examples of
	minimal twist (as for the $U(1)\times U(2)$ model above should be
	investigated in a systematic way; but this requires first to stabilise a
	definition for the spectral action in a twisted context. Some ideas
	have been proposed in\cite{devfilmartsingh:actionstwisted} but deserve
	more study, especially in the light of a possible signature transition
	towards the lorentzian \cite{devfarnlizmart:lorentziantwisted}\cite{Martinetti:2019aa}.
	
	Finally let us mention another alternative to
	\cite{chamconnessuijlekom:innerfluc} proposed in
	\cite{DabrowskiMagee:gaugetwist} based on spectral triples in which
	the real structure is twisted
	\cite{T.-Brzezinski:2016aa} (see recent developments
	in \cite{DabrowskiDandreMagee:twistedreality} and \cite{Dabrowski:2019aa}). A link with the
	twisted spectral triples used in this paper is worked out in
	\cite{Brzezinski:2018aa}
	(see also \cite{Goffeng:2019aa} for another approach on untwisting a twisted spectral triple).

	\newpage
	
	\appendix
	\section*{Appendices}
	
	\section{Morita equivalence by left module}
	\label{sec:morita-equiv-left}
	
	As announced at the beginning of \S\ref{sec:twist-fluct-real}, we show that
	a twisted spectral triple
	$(\A, \HH, D),\sigma$
	with real structure
	$J$ can be exported to a
	Morita equivalent algebra $\B$ - as a twisted spectral triple, but not as a real
	one - when the module implementing the
	equivalence is $\A$-left finite
	projective,
	\begin{equation}
		{\mathcal F}\simeq \A^ne,
		\label{eq:1400bis}
	\end{equation}
	whose generic element
	is a raw vector $\zeta=(\zeta^1,\ldots, \zeta^n)$ with
	components{\footnote{We use a similar convention as in relativity,
			changing the component index from low to up when passing
			from a right to a left module.}}
	\begin{equation}
		\zeta^i=\zeta^j
		e_j^i\in\A
		.\label{eq:156}
	\end{equation}
	It is hermitian for the product 
	\begin{equation}
		\label{eq:139}
		\{\zeta', \zeta\}=\sum_i {\zeta'}^i\;{\zeta^i}^*,
	\end{equation}
	which satisfies the left version of \eqref{eq:12}
	\begin{equation}
		\label{eq:11}
		\{a\,\zeta', \zeta\}= a\{\zeta', \zeta\},\quad \{\zeta', a\zeta\}=
		\{\zeta, \zeta'\}a^*\quad a\in\A.
	\end{equation} 
	
	\subsection{Twisted hermitian connection for left modules}
	\label{sec:twist-herm-conn-4}
	
	We adapt the results of sections \ref{subsec:lift} and
	\ref{sec:twist-herm-conn} to left-modules. 
	The lift $\Sigma^\circ$ of
	$\sigma$ to $\F$,
	\begin{equation}
		\label{eq:5}
		\Sigma^\circ((\zeta^1, \ldots, \zeta^n)):= (\sigma(\zeta^1),\ldots ,\sigma(\zeta^n))\,
		e
	\end{equation}
	is invertible if and only if conditions \eqref{eq:70} holds (the
	proof is as for lemma \ref{lem:lift-autom-its}). It lifts to $
	\B=\text{End}_\A(\cal F)\simeq M_n(\A)$ is defined as in \eqref{eq:17}, 
	\begin{equation}
		\label{eq:152}
		\Sigma^\circ(b)=e\sigma(b)e,
	\end{equation}
	and 
	satisfies the regularity condition \eqref{eq:142} (proof as in Prop. \ref{sec:lift-autom-its}).
	
	To define a hermitian $\Omega^1_D(\A^\circ, \sigma^\circ)$-value
	connection
	$\nabla^\circ$ on $\F$, recall that the derivation $\delta^\circ$ is
	not anti-hermitian but satisfies \eqref{eq:147}, and that the
	involution of $\Omega^1_D(\A^\circ, \sigma^\circ)$ which follows
	from its action on $\HH$ -
	$\left(a^\circ\delta^\circ(b^\circ)\right)^*=\delta^\circ(b^\circ)^*(a^\circ)^*$
	- is such that 
	\begin{align}
		\label{eq:78}
		(\omega^\circ.a)^*&=(\sigma^\circ(a^\circ)\,\omega^\circ)^*
		={\omega^\circ}^*\,\sigma^\circ(a^\circ)^*={\omega^\circ}^*\,\left(\sigma(a^*)\right)^\circ=\sigma(a^*)\cdot
		{\omega^\circ}^*,\\
		\label{eq:78bis}
		(a\cdot \omega^\circ)^*&=(\omega^\circ\, a^\circ)^* =a^{\circ *}\,
		{\omega^\circ}^*=\sigma^\circ\left({\sigma^\circ}^{-1}\left(a^{\circ
			*}\right)\right)\, {\omega^\circ}^*={\omega^\circ}^*\cdot \sigma(a^*)
	\end{align}
	where we use the module law \eqref{eq:38}, then (from \eqref{eq:143} and \eqref{eq:RegularityAutomorph})
	\begin{align}
		\sigma^\circ(a^\circ)^* &=
		\left(\left(\sigma^{-1}(a)\right)^\circ\right)^*=\left(\left(\sigma^{-1}(a)\right)^*\right)^\circ=\left(\sigma(a^*)\right)^\circ,\\
		\label{eq:144}
		{\sigma^\circ}^{-1}\left(a^{\circ *}\right)&={\sigma^\circ}^{-1}\left(a^{*\circ}\right)=\sigma(a^*)^\circ.
	\end{align}
	These laws are
	compatible, for
	$\sigma\left(\sigma(a^*)^*\right)=\sigma^{-1}\left(\sigma(a^*)\right)^*=(a^*)^*=a$.
	
	This motivates to adapt  Definition~\ref{sec:twist-herm-conn-1} posing
	that a $\Omega^1_D(\A^\circ, \sigma^\circ)$-connection $\nabla^\circ$ on $\F$ is hermitian if it satisfies 
	\begin{equation}
		\label{eq:148}
		- \left\{\zeta', \nabla^\circ (\Sigma^\circ\zeta)\right\} +  \left\{\nabla^\circ\zeta',
		\zeta\right\}=\delta^\circ(\left\{\zeta',\zeta\right\})\quad \forall  \zeta, \zeta' \in \cal F,
	\end{equation}
	where one defines  (notice the difference with \eqref{eq:130}
	\begin{equation}
		\label{eq:8}
		\left\{\nabla^\circ\zeta',\zeta\right\} =\zeta'_{(-1)}\cdot \left\{\zeta'_{(0)},\zeta\right\}, \qquad  \left\{\zeta',\nabla^\circ\zeta\right\} =\left\{\zeta',\zeta_{(0)}\right\}\cdot\zeta^*_{(-1)}.
	\end{equation}
	\newpage
	Indeed, \eqref{eq:148} is precisely the
	compatibility condition with the inner product \eqref{eq:139} which is
	satisfied by
	the $\Omega^1_D(\A^\circ, \sigma^\circ)$-value Grassmann connection on
	$\F$ 
	\begin{equation}
		\label{eq:146}
		\nabla^\circ_0 \zeta:=\delta^\circ(\zeta^j)\otimes (e_j^1,\ldots,
		e_j^n) \simeq (\delta^\circ(\zeta^1),\ldots,
		\delta^\circ(\zeta^n))\cdot e\quad \forall \zeta=\zeta e\in\F.
	\end{equation}
	
	\begin{lem}
		\label{sec:twist-herm-conn-3}
		Assuming the idempotent $e$ satisfies \eqref{eq:119}, then the
		Grassmann connection \eqref{eq:146} is hermitian.  Furthermore, any hermitian connection
		$\nabla^\circ$ on  $\cal F$ is the sum of $\nabla_0^\circ$  with a selfadjoint
		element $N^\circ$ of $M_n\left(\Omega^1_D(\A^\circ, \sigma^\circ)\right)$.
	\end{lem}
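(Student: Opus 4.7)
The plan is to mirror the proof of lemma~\ref{sec:twist-herm-conn-2}, substituting the left-module data throughout: the inner product $\{\cdot,\cdot\}$ of \eqref{eq:139}, the lift $\Sigma^\circ$ of \eqref{eq:5}, the twisted anti-hermicity identity \eqref{eq:147} for $\delta^\circ$, and the involution laws \eqref{eq:78}--\eqref{eq:78bis} governing the bimodule $\Omega^\circ$. For the first statement, I would compute $\Sigma^\circ\zeta$ componentwise via \eqref{eq:5} and then $\nabla_0^\circ(\Sigma^\circ\zeta)$ using \eqref{eq:146}. As in the right-module proof, the two cases of the alternative \eqref{eq:119} collapse to the same intermediate formula: either $\sigma(e)=e$ lets one absorb the twist directly, or $\delta^\circ(e_j^k)=0$ kills the spurious term arising from the Leibniz rule \eqref{eq:TwistedLeibniz2}. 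Plugging the result into the pairings \eqref{eq:8}, expanding $\delta^\circ(\{\zeta',\zeta\})$ via \eqref{eq:TwistedLeibniz2}, and using \eqref{eq:147} to flip the sign on one of the two terms, the hermiticity condition \eqref{eq:148} would fall out by a short computation parallel to the one following equation \eqref{eq:186} in the right-module case.

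For the second statement, I would set $\tilde\nabla^\circ:=\nabla^\circ-\nabla_0^\circ$; since both connections satisfy the Leibniz rule \eqref{eq:TwistedLeibnizConn}, $\tilde\nabla^\circ$ is $\A$-linear. It is therefore represented by a matrix $N^\circ\in M_n(\Omega^\circ)$ invariant under the conjugate action of $e$, acting as the left-module analogue of \eqref{eq:135}. Imposing \eqref{eq:148} on $\tilde\nabla^\circ$ and moving the involutions through the bimodule products using \eqref{eq:78}--\eqref{eq:78bis} produces a bilinear identity in the components of $\zeta,\zeta'$, valid for all such vectors, which forces $(N^\circ)_j^k=((N^\circ)_k^j)^*$, i.e.\ selfadjointness. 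The main obstacle I anticipate is the bookkeeping of this last step: the left-module involution \eqref{eq:78bis} twists $a$ into $\sigma(a^*)$ on the opposite side of $(\omega^\circ)^*$, a different pattern from its right-module counterpart \eqref{eq:132bis}, and one must verify that these twists cancel with those introduced by $\Sigma^\circ$ in \eqref{eq:148} so that genuine selfadjointness of $N^\circ$, rather than some twisted variant, emerges.
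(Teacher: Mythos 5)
Your proposal is correct and follows essentially the same route as the paper, which itself proves this lemma by transcribing the argument of Lemma~\ref{sec:twist-herm-conn-2} with the left-module data \eqref{eq:139}, \eqref{eq:5}, \eqref{eq:147}, \eqref{eq:78}--\eqref{eq:78bis}; in particular the final cancellation you flag is exactly how the paper gets genuine (untwisted) selfadjointness of $N^\circ$, via $(\sigma(\zeta^k)\cdot {n_k^j}^{\circ})^*={n_k^j}^{\circ*}\cdot\sigma(\sigma(\zeta^k)^*)={n_k^j}^{\circ*}\cdot{\zeta^k}^*$. The only detail worth making explicit is that in the second alternative of \eqref{eq:119} one passes from $\delta(e_i^j)=0$ to $\delta^\circ(e_i^j)=0$ using \eqref{eq:twistzero} and the selfadjointness of $e$.
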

	\begin{proof} The proof is similar as in lemma
		\ref{sec:twist-herm-conn-2}. $\Sigma^\circ\zeta' $ has components
		${S'}^j=\sigma(\zeta^i)e_i^j$, so by \eqref{eq:146}
		$\nabla_0^\circ(\Sigma^\circ\zeta' )=\delta^\circ(S')\cdot e$.
		If $e=\sigma(e)$, then
		$\delta^\circ(S'^j)=\delta^\circ(\sigma(\zeta^j))$. Otherwise
		$e$ twist-commuting with $D$ implies
		$\delta^\circ(e)=\epsilon'\delta(e)=0$ and
		$\delta^\circ(S'^j)=\delta^\circ(\sigma(\zeta^j))\cdot e$
		% $\nabla^\circ_0\zeta'=\delta^\circ\left(\zeta'^j\right)\otimes(e_j^i)$. 
		In any case 
		\begin{equation}
			\nabla^\circ_0(\Sigma^\circ\zeta')
			= \delta^\circ\left(\sigma(\zeta)\right)\cdot e =
			\delta^\circ\left(\sigma(\zeta^j)\right)\otimes (e_j^i).
			\label{eq:178}
		\end{equation}
		By \eqref{eq:147} one gets
		\begin{align*}
			& \{\zeta', \nabla^\circ_0(\Sigma^{\circ}\zeta)\} =\left\{\zeta',  (e_j^i))\right\}\cdot\delta^\circ(\sigma(\zeta^j))^*
			=-\left\{\zeta',  (e_j^i))\right\}\cdot \delta^\circ({\zeta^j}^*)= -\sum_j {\zeta'}^j\cdot\delta^\circ({\zeta^j}^*),\\
			&\left\{\nabla^\circ_0\zeta', \zeta\right\} =  \delta^\circ({\zeta'}^j)\cdot\left\{(e_j^i),\zeta\right\}=\delta^\circ({\zeta'_j})\cdot\zeta^{j*},
		\end{align*}
		where we compute
		$\left\{\zeta',  (e_j^i))\right\}=\sum_i {\zeta'}^i
		{e^i_j}^*={\zeta'}^j$
		and $\left\{(e_j^i),\zeta\right\}=\zeta^{j*}$.
		Hence the l. h. s. of \eqref{eq:148} is
		${\zeta'}^j\cdot  \delta^\circ({\zeta^j}^*)+\delta^\circ({\zeta'}^j)\cdot {\zeta^j}^*= \delta^\circ(\left\{{\zeta'}^j\zeta^j\right\})$,
		meaning  $\nabla^\circ_0$ is hermitian.
		
		By Leibniz rule \ref{eq:TwistedLeibnizConn}, the
		difference $\tilde\nabla^\circ:=\nabla^\circ-\nabla^\circ_0$ is
		$\A$-linear - 
		$\tilde\nabla^\circ (a\zeta)  =a\tilde\nabla^\circ(\zeta)$~- meaning that 
		\begin{equation}
			\label{eq:135left}
			\tilde\nabla^\circ \zeta = (\zeta^k\cdot {n_k^j}^\circ)\otimes (e_j^i)\simeq\zeta\cdot N^\circ \qquad \forall \zeta \in\mathcal F
		\end{equation}
		with $N^\circ$ a matrix in  $M_n(\Omega^\circ)$
		with components  ${n_i^j}^\circ\in\Omega^\circ$  such that
		$N^\circ =e\cdot N^\circ\cdot e$. % Thus
		Thus $\tilde\nabla^\circ(\zeta')=({\zeta'}^k\cdot {n_k^j}^\circ)\otimes
		(e_j^i)$ and  
		$\tilde\nabla^\circ(\Sigma^{\circ}\zeta)=(\sigma(\zeta^k)\cdot
		{n_k^j}^\circ)\otimes (e_j^i)$.~
		So the hermicity implies 
		\begin{align}
			0&=  \left\{\zeta',\tilde\nabla^\circ(\Sigma^\circ\zeta)\right\}
			-\left\{\tilde\nabla^\circ\zeta',\zeta\right\},\\ 
			\label{eq:136bbis}
			&= \sum_j {\zeta'}^j \cdot\left(\sigma(\zeta^k) \cdot {n_k^j}^\circ\right)^*-
			({\zeta'}^k\cdot {n_k^j}^\circ)\cdot {\zeta^j}^*
			=\sum_{k} {\zeta'}^k \cdot\left( {n_j^k}^{\circ *}- {n_k^j}^\circ\right)\cdot {\zeta^j}^*
		\end{align}
		where, for the last equality, we use \ref{eq:78bis} as
		$(\sigma(\zeta^k) \cdot {n_k^j}^\circ)^*= {n_k^j}^{\circ *}\cdot \sigma(\sigma(\zeta^k)^*)
		={n_k^j}^{\circ *}\cdot {\zeta^k}^*$
		then exchange $k$ with $j$. This should be true for any $\zeta$, hence the matrix $N^\circ$
		is selfadjoint.
	\end{proof}
	
	\subsection{Twisted fluctuation by left module}
	\label{sec:twist-fluct-left}
	
	This is an adaptation of \S \ref{sec:morita-equivalence} to left
	$\A$-module. Given a twisted spectral triple $(\A, \HH, D),  \sigma$  with real
	structure $J$ and the left
	$\A$-module \eqref{eq:1400bis}, 
	then 
	\begin{equation}
		\label{eq:127}
		\HH_L:= \mathcal{H}\otimes_{\A}{\mathcal{F}}
	\end{equation}
	is a
	(pre)-Hilbert space for the inner product
	\begin{equation}
		\label{eq:151}
		\langle\psi'\otimes\zeta',\psi\otimes\zeta\rangle:=\langle\psi'\{\zeta',\zeta\},\psi\rangle_{\mathcal{H}}
	\end{equation}   
	where the right action of $\A$ on $\HH$ is given in \eqref{eq:19}.
	This 
	carries a representation of $\B$,
	\begin{equation}
		\label{eq:leftrep}
		\pi_L(b)(\psi\otimes\zeta):=\psi\otimes\zeta\,
		b\qquad \forall b\in {\mathcal B},\, \zeta\in\F,\,\psi\in\mathcal{H},
	\end{equation}
	an the action of $D$
	\begin{equation}
		\label{eq:153}
		(D\otimes_{\nabla^{\circ}}\I)(\psi\otimes\zeta):=D\psi\otimes\zeta+\nabla^{\circ}(\zeta)\psi
	\end{equation}  
	where 
	$\nabla^{\circ}$ is an
	$\Omega^\circ$-connection and
	(remembering \eqref{eq:7})
	\begin{equation}
		\label{eq:155}
		(\nabla^{\circ}\zeta) \psi=\zeta_{(-1)}\psi\otimes\zeta_{(0)}.
	\end{equation}
	where the action of $\zeta_{(-1)}$ on $\HH$
	comes from the representation of
	$\Omega^{\circ}$
	as bounded operator on $\HH$. 
	Denoting $\Sigma^\circ$ the lift \eqref{eq:5} of $\sigma$ on
	$\F$, then the operator 
	\begin{equation}
		\label{eq:left_Dirac}
		{D}_L:=	(\I\otimes {\Sigma}^{\circ-1})\circ(D\otimes_{\nabla^{\circ}}\I)
	\end{equation}
	is well defined on $\HH_L$ \cite[Prop.3.9]{landimart:twistgauge}.\footnote{
		One twists with the inverse
		of ${\Sigma}^{\circ}$
		so that the incompatibility of the operator $D\otimes
		1$ with the tensor product is captured by an $\Omega^1_D(A^\circ,
		\sigma^\circ)$ connection \cite[remark
		3.12]{landimart:twistgauge}.}
	
	\begin{prop} 
		\label{sec:twist-fluct-left-1}
		Assume the lift $\Sigma^\circ$ is invertible (that is
		\eqref{eq:70} holds) and the idempotent satisfies \eqref{eq:119}.  Then 
		the twisted commutator
		$\left[{D}_L, b\right]_{\sigma^{'-1}}$ is bounded for any
		$b\in{\cal B}$ acting on $\HH_L$ according to \eqref{eq:leftrep}, with
		$\sigma'$ the lift of $\sigma$ to $\cal B$ defined in Prop. \ref{sec:lift-autom-its}.
	\end{prop}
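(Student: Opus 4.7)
The plan is to mirror the proof of Proposition \ref{sec:morita-equiv-twist}, adapting every step to the left-module setting where $\F$, $\Omega^\circ$, and the inverse twist $\sigma'^{-1}$ replace $\cal E$, $\Omega$, and $\sigma'$ respectively.

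First I would write a generic element $\Psi\in\HH_L$ in components. Using $\zeta_p=(\zeta_p^j e_j^i)\in\F$ and the right action of $\A$ on $\HH$ via \eqref{eq:19}, any $\Psi=\psi^p\otimes\zeta_p$ rewrites as $\Psi=\psi_j\otimes(e_j^i)$ with $\psi_j:=\psi^p\zeta_p^j\in\HH$, identifying $\HH_L$ with the row vectors in $\HH^n$ stabilised by right multiplication by $e$.

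Second, I would apply $D\otimes_{\nabla^\circ}\I$ as in \eqref{eq:153}, splitting $\nabla^\circ=\nabla^\circ_0+\tilde\nabla^\circ$ via Lemma \ref{sec:twist-herm-conn-3}. The Grassmann contribution $\nabla^\circ_0((e_k^i))\psi_k$ must be shown to vanish under \eqref{eq:119} by an analog of the argument inside Prop.~\ref{sec:morita-equiv-twist}: if $\delta(e_i^j)=0$ componentwise, then by \eqref{eq:twistzero} so does $\delta^\circ(e_j^i)$, and the term is zero on the nose; if instead $\sigma(e)=e$, then Leibniz \eqref{eq:TwistedLeibniz2} applied to $e_i^j=e_i^ke_k^j$, combined with the identity $e^2=e$ and the invariance $\sigma^\circ(e^\circ)=e^\circ$, yields the matrix relation that forces $e\cdot\delta^\circ(e)\cdot e=0$ in the opposite-bimodule sense, killing the contribution. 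The correction term is encoded by the selfadjoint $N^\circ\in M_n(\Omega^\circ)$ of Lemma \ref{sec:twist-herm-conn-3}; after applying $\I\otimes\Sigma^{\circ-1}$ and using \eqref{eq:70} to simplify the resulting factor of $\sigma^{-1}(e)\,e$, I expect
\[
D_L\Psi \simeq \Psi\,(\D+N^\circ),
\]
the left-module analogue of \eqref{eq:46bis}, with $\D$ acting componentwise by $D$ and $N^\circ$ acting on $\HH^n$ through the opposite representation of $\Omega^\circ$.

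Third, for $b=ebe\in\B$ acting via \eqref{eq:leftrep} as $\pi_L(b)\Psi\simeq\Psi\,b$ (right matrix multiplication on components through the right $\A$-action on $\HH$), and with $\sigma'^{-1}(b)=e\sigma^{-1}(b)e$ from Prop.~\ref{sec:lift-autom-its}, I would compute
\[
[D_L,\pi_L(b)]_{\sigma'^{-1}}\Psi \simeq \Psi\,\bigl[\D+N^\circ,\,b\bigr]_{\sigma^{-1}},
\]
the projectors $e$ on the right being absorbed by \eqref{eq:70} together with $b=ebe$. Entrywise, $[D,b_i^j]_{\sigma^{-1}}$ is bounded because by the regularity \eqref{eq:RegularityAutomorph} one has $[D,b_i^j]_{\sigma^{-1}}=-[D,(b_i^j)^*]_\sigma^{\,*}$, bounded by hypothesis on the initial triple, while $[N^\circ,b]_{\sigma^{-1}}$ is bounded since both $N^\circ\in M_n(\Omega^\circ)$ and $b$ act boundedly.

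The main bookkeeping obstacle lies in tracking on which side the twist enters at each step: unlike the right-module proof, here $\sigma^{-1}$ rather than $\sigma$ appears in the final commutator, so the regularity \eqref{eq:RegularityAutomorph} must be invoked systematically to pass between $[D,\cdot]_\sigma$ and $[D,\cdot]_{\sigma^{-1}}$, and the opposite bimodule rules \eqref{eq:78}, \eqref{eq:78bis} must be used when simplifying the $N^\circ$-term. Modulo this, boundedness of $[D_L,\pi_L(b)]_{\sigma'^{-1}}$ follows from the standing assumptions on $(\A,\HH,D),\sigma$.
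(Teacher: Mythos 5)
Your overall plan is the same as the paper's: write $\Phi\in\HH_L$ in components as a row vector in $\HH^n e$, split $\nabla^\circ=\nabla^\circ_0+\tilde\nabla^\circ$ with $\tilde\nabla^\circ$ given by a selfadjoint $N^\circ\in M_n(\Omega^\circ)$, show the Grassmann contribution vanishes under \eqref{eq:119}, reduce $D_L$ to a componentwise action of $D$ plus $N$, and then compute $[D_L,\pi_L(b)]_{\sigma'^{-1}}$ entrywise. Up to that point your sketch matches the paper. The gap is in the decisive final step, where you identify the entries of the commutator as $[D,b_i^j]_{\sigma^{-1}}$ in the \emph{left} representation and claim boundedness from regularity via the identity $[D,b]_{\sigma^{-1}}=-[D,b^*]_\sigma^{\,*}$. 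That identity is false: by \eqref{eq:RegularityAutomorph} one has $-[D,b^*]_\sigma^{\,*}=D\sigma^{-1}(b)-bD=[D,\sigma^{-1}(b)]_\sigma$, which is \eqref{eq:82bis} and is a different operator from $[D,b]_{\sigma^{-1}}=Db-\sigma^{-1}(b)D$; indeed $[D,b]_{\sigma^{-1}}-[D,b]_\sigma=(\sigma(b)-\sigma^{-1}(b))D$ is unbounded in general, so $[D,b]_{\sigma^{-1}}$ is not controlled by the axioms of the twisted triple.

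What actually appears is not a commutator in the left representation at all. Since $\pi_L(b)$ acts through the right action \eqref{eq:19}, the components of $b$ act as $\psi^k b_k^j=(b_k^j)^\circ\psi^k=J(b_k^j)^*J^{-1}\psi^k$, so the entries of $[D_L,\pi_L(b)]_{\sigma'^{-1}}$ are the \emph{opposite} twisted commutators $[D,(b_i^j)^\circ]_{\sigma^\circ}$ (this is also why the exported twist is $\sigma'^{-1}$). Their boundedness does not follow from regularity of $\sigma$ but from the real structure, via \eqref{eq:twistzero}: $[D,(b_i^j)^\circ]_{\sigma^\circ}=\epsilon' J[D,(b_i^j)^*]_\sigma J^{-1}$. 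Your argument never invokes $J$ at this point, and without it the boundedness claim does not go through --- which is exactly why the statement assumes a real structure. A secondary imprecision of the same nature: reducing $(\D\Phi+\Phi N)\sigma^{-1}(e)e$ to $(\D\Phi+\Phi N)e$ in the case $\delta(e)=0$ is not a consequence of \eqref{eq:70} alone; one must move $\sigma^{-1}(e)^\circ$ past $D$, which again uses the twist-commutation of $e$ together with conjugation by $J$, as in \eqref{eq:182}.
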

	\begin{proof}
		This is a straightforward adaptation of the proof of Prop. \ref{sec:morita-equiv-twist}).
		A generic element of $\HH_L= \HH\otimes_{\A}\F\simeq
		\HH^n e$ is
		\begin{align}
			\label{eq:66}
			\Phi:=\psi^p\otimes\zeta_p &= 
			\psi^p\otimes (\zeta_p^i)
			=
			\psi^p\otimes (\zeta^j_pe^i_j)
			= \psi^j\otimes (e_j^i)\simeq 
			(\psi^1,\ldots,\psi^n)e
		\end{align}
		where $\psi^p$ is  generic element of $\HH$ and 
		$\psi^j:=\psi^p \zeta_p^j\in\HH$.
		Denoting $\tilde\nabla^\circ=\nabla^\circ-\nabla^\circ_0$, one gets
		\begin{equation}
			\label{eq:45}
			(D\otimes_{\nabla^\circ_0}\I)\Phi=D\psi^j\otimes (e_j^i) +
			\nabla^\circ_0(e_j^i)\,\psi^j + \tilde\nabla^\circ_0(e_j^i)\,\psi^j
		\end{equation}
		for any $\Phi$ with components $\psi_j$ in the domain of $D$. The
		second term vanishes. This is obvious  in case $e$ twist commutes with
		$D$, for by \eqref{eq:146} one as 
		$\nabla^\circ_0((e_j^i))=\delta^\circ(e_j^k)\otimes(e_k^i)$. 
		Otherwise, \eqref{eq:155} and \eqref{eq:156} yield
		(remembering \eqref{eq:19})
		\begin{align*}
			\label{eq:69}
			\nabla^\circ_0((e_j^i))\psi^j
			&=\delta^\circ(e_j^k)\psi^j\otimes(e_k^i)=\delta^\circ(e_j^k)(\psi^p\zeta^l_pe^j_l)\otimes
			e^m_k (e_m^i),\\
			&=\left(\delta^\circ(e_j^k)({e^j_l}^\circ\psi^l)\right)e^m_k\otimes(e_m^i)=
			{e^m_k}^\circ \delta^\circ(e_j^k){e^j_l}^\circ\psi^l\otimes(e_m^i).
		\end{align*}
		On the other side, multiplying on the right
		\begin{equation}
			\label{eq:157}
			\delta^\circ(e_j^m)= \delta^\circ(e_j^k e_k^m)= 
			\sigma^\circ({e_k^m}^\circ)\, \delta^\circ(e_j^k) + \delta^\circ(e_k^m){e_j^k}^\circ
		\end{equation}
		by ${e_l^j}^\circ$ and summing on $j$ yields
		$   0=\sigma^\circ({e_k^m}^\circ)\, \delta^\circ(e_j^k) \,
		{e_l^j}^\circ={e_k^m}^\circ\, \delta^\circ(e_j^k) \,
		{e_l^j}^\circ$ for all $m,l$.
		So in any case the second term in the r.h.s of \eqref{eq:45}
		vanishes. The third term reads, by \eqref{eq:135left},\eqref{eq:15}
		and denoting $N$ the matrix with components $n_i^j\in\Omega$,
		\begin{equation}
			\label{eq:145}
			(\tilde\nabla^\circ_0(e^i_j))\psi^j= (e_j^k\cdot {n_k^l}^\circ) \psi^j\otimes
			(e^i_l)={n_j^l}^\circ \psi^j\otimes
			(e^i_l)= \psi^j n_j^l\otimes
			(e^i_l)\simeq  \Phi N e
		\end{equation}
		Applying $\I\otimes\Sigma^{\circ-1}$ on \eqref{eq:45} yields
		\begin{equation}
			\label{eq:159}
			{D}_L\Phi= D\psi^j\otimes(\sigma^{-1}(e_j^i))e + \psi^j {n_j^l}\otimes
			(\sigma^{-1}(e^i_l))e\simeq
			(D \Phi + \Phi N)\sigma^{-1}(e)e\end{equation}
		where $\D\Phi$ denotes the operator $D$ acting on each components
		$\psi^j$ of $\Phi$.
		If $e$ is twist-invariant, then $\sigma^{-1}(e)=e$ and the above reads
		\begin{equation}
			\label{eq:175}
			{D}_L\Phi= \left(D\psi^l + \psi^jn_j^l\right)\otimes
			(e^i_l)\simeq
			(\D\Phi +  \Phi N)e.
		\end{equation}
		The same is true of $e$ twist-commutes with $D$, since
		\begin{align}
			\label{eq:185}
			(D \psi^j)\sigma^{-1}(e^i_j)&=\sigma^{-1}(e^i_j)^\circ D \psi^j=D {e^i_j}^\circ\psi^j=D(\psi^je^i_j) =D\psi^i,\\
			({n_j^l}^\circ\psi^j)\sigma^{-1}(e^i_l)&= \sigma^{-1}(e^i_l)^\circ{n_j^l}^\circ\psi^j=\sigma^\circ(e^i_l)^\circ)
			{n_j^l}^\circ\psi^j={n_j^l}^\circ\cdot e^i_l={n_j^i}^\circ\psi^j
		\end{align}
		where we use $N^\circ\cdot e=N^\circ$ as well as
		\begin{align}
			\label{eq:182}
			\sigma^{-1}(e^i_j)^\circ D&=J\sigma^{-1}(e^i_j)^*J^{-1}D=\epsilon''
			J\sigma(e_i^j) DJ^{-1} = DJe_i^j J^{-1}=D (e_j^i)^\circ.
		\end{align}
		
		Consider now $b=ebe$ in $\B$ with components
		$b_i^j\in \A$.  Repeating the computation
		of \eqref{eq:66} with \eqref{eq:leftrep} yields
		\begin{equation}
			\label{eq:191}
			\pi_L(b)\Phi=\psi^p\otimes \zeta_pb =  \psi^p\otimes
			(\zeta^k_pb_k^je_j^i)= \psi^k b^j_k\otimes (e^i_j)\simeq \Phi b.
		\end{equation}
		Denoting ${D}_0$ the operator ${D}_L$ when
		$\nabla^\circ=\nabla^\circ_0$ is the Grassmann connection
		(i.e. $N=0$), one has
		\begin{align*}
			{D}_0\pi_L(b)\,\Phi
			&=	D\left(\psi^i b_i^1,
			\ldots,
			\psi^i b_i^n
			\right) e  =\left(
			D{b_i^1}^\circ\psi^i,
			\ldots,
			D{b_i^n}^\circ\psi^i
			\right)e,\\
			\pi_L(\sigma^{'-1}(b))	{D}_0\psi
			&=	\left(	D	\psi^1,
			\ldots,
			D	\psi^n\right)\sigma^{-1}(e) e \left(e\sigma^{-1}(b)e\right)
			=	\left(	D	\psi^1,
			\ldots,
			D	\psi^n\right)\sigma^{-1}(b) e ,\\
			&=	\left(	(D	\psi^i) \sigma^{-1}(b_i^1),
			\ldots,
			(D	\psi^i)\sigma^{-1}(b_i^n)\right) e
			=	\left(	\sigma^\circ({b_i^1}^\circ) D	\psi^i,
			\ldots,
			\sigma^\circ({b_i^n}^\circ) D	\psi^i\right) e,  \end{align*}
		where to get the second equation we use \eqref{eq:70} together
		with $b=eb$ to write
		\begin{align}
			\label{eq:76bis}
			\sigma^{-1}(e) e \sigma^{-1}(b)&=  \sigma^{-1}(e\sigma(e)b)= \sigma^{-1}(e\sigma(e)eb)=
			\sigma^{-1}(eb)=\sigma^{-1}(b).\end{align}
		Therefore
		\begin{equation*}
			\left[	{D}_0,\pi_L(b)\right]_{\sigma^{'-1}}\psi=
			{\vphantom{\begin{matrix}
						[D,{b_1^1}^\circ]_{\sigma^\circ} & \dots & [D,{b_1^n}^\circ]_{\sigma^\circ}\\
						\vdots & & \vdots\\
						[D,{b_n^1}^\circ]_{\sigma^\circ} & \dots &[D,{b_n^n}^\circ]_{\sigma^\circ}
			\end{matrix}}}^T\left(\begin{matrix}
				[D,{b_1^1}^\circ]_{\sigma^\circ} & \dots & [D,{b_1^n}^\circ]_{\sigma^\circ}\\
				\vdots & & \vdots\\
				[D,{b_n^1}^\circ]_{\sigma^\circ} & \dots &[D,{b_n^n}^\circ]_{\sigma^\circ}
			\end{matrix}\right) \; ^T\left(
			\psi^1,
			\ldots,
			\psi^n
			\right)\sigma^{-1}(e).
		\end{equation*}	
		This shows that $\left[	{D}_0,\pi_L(b)\right]_{\sigma^{'-1}}$ is
		bounded. The same holds true for a generic connection, as in the proof
		of \ref{sec:morita-equiv-twist}. \end{proof}
	\begin{prop}
		\label{sec:twist-fluct-left-2}
		In the conditions of  Prop.~\ref{sec:twist-fluct-left-1} and with
		$\nabla^\circ$ hermitian, 
		then \linebreak $({\cal B}, \HH_L, D_L), \sigma^{'-1}$ is a twisted
		spectral triple.
	\end{prop}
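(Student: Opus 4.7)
The plan mirrors the proof of Proposition \ref{prop:rightMorita} with the substitutions appropriate to the left-module side. Proposition \ref{sec:twist-fluct-left-1} already delivers the boundedness of the twisted commutator $[D_L, \pi_L(b)]_{\sigma'^{-1}}$ for every $b\in\mathcal{B}$, and Proposition \ref{sec:lift-autom-its} gives that $\sigma'$ is an automorphism of $\mathcal{B}$, regular in the sense of \eqref{eq:142}; since \eqref{eq:142} is equivalent for $\sigma'$ and ${\sigma'}^{-1}$, the automorphism ${\sigma'}^{-1}$ is again regular. What remains to verify is that $D_L$ is selfadjoint with compact resolvent.

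For selfadjointness I would first treat the Grassmann case $\nabla^\circ=\nabla^\circ_0$, denoting the resulting operator $D_0$. Using the explicit expression \eqref{eq:175}, the inner product \eqref{eq:151} together with the right action \eqref{eq:19}, and the elementary identity $\{(e_j^i),(e_l^i)\}=e_l^j={e_j^l}^*$ analogous to \eqref{eq:121}, I would show by direct computation that $\langle\Phi',D_0\Phi\rangle_{\mathcal{H}_L}=\langle D_0\Phi',\Phi\rangle_{\mathcal{H}_L}$ for all $\Phi,\Phi'$ with components in $\mathrm{Dom}(D)$, so that $D_0$ is symmetric. Selfadjointness then follows from the standard range argument $\mathrm{Ran}(D\pm i)=\mathcal{H}$ \cite[Theo.~8.3]{Reed1980}, which lifts to $\mathrm{Ran}(D_0\pm i)=\mathcal{H}_L$. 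For a general hermitian $\nabla^\circ$, Lemma \ref{sec:twist-herm-conn-3} provides the decomposition $\nabla^\circ=\nabla^\circ_0+\tilde\nabla^\circ$ where $\tilde\nabla^\circ$ acts as a selfadjoint matrix $N^\circ\in M_n(\Omega^\circ)$ (by \eqref{eq:135left}, together with the right action \eqref{eq:19} of $\mathcal{A}$ on $\mathcal{H}$). This is a bounded selfadjoint perturbation of $D_0$, and selfadjointness of $D_L$ then follows by the Kato--Rellich-type argument already used for the right module.

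For the compact resolvent, in the Grassmann case the inverse $(D_0-\lambda\mathbb{I})^{-1}$ is given componentwise by $(D-\lambda)^{-1}$ and then projected by the idempotent $e$, exactly as in \eqref{eq:123} up to the transposition appropriate to the left-module convention; as a finite direct sum of compact operators it is compact. The passage to an arbitrary hermitian $\nabla^\circ$ is the same as at the end of the proof of Proposition \ref{prop:rightMorita}, appealing to \cite[Thm.~6.15]{Walterlivre}, since the difference $\nabla^\circ-\nabla^\circ_0$ is realised as a bounded operator.

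The main obstacle I expect is purely bookkeeping, but nontrivial: one must keep careful track of the right action \eqref{eq:19} of $\mathcal{A}$ on $\mathcal{H}$ that enters the inner product \eqref{eq:151} via the pairing $\{\zeta',\zeta\}$, of the anti-Leibniz rule \eqref{eq:147} for $\delta^\circ$ (which replaces $\delta$ anti-hermiticity), and of the fact that the relevant twisting automorphism here is ${\sigma'}^{-1}$ rather than $\sigma'$. In particular, closing up the symmetry computation requires precisely the regularity \eqref{eq:142} together with \eqref{eq:147}, and one must confirm that the decomposition of Lemma \ref{sec:twist-herm-conn-3} is compatible with the conjugate placement of the adjoint in \eqref{eq:139} as opposed to \eqref{eq:12}; modulo these adjustments the argument is entirely parallel to the right-module case.
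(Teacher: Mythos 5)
Your proposal is correct and follows essentially the same route as the paper: the paper likewise reduces everything to the selfadjointness (and compact resolvent) of $D_L$, establishes symmetry of the Grassmann operator $D_0$ by the explicit computation with the inner product \eqref{eq:151} and the right action \eqref{eq:19} (reducing to the same equations \eqref{eq:120}--\eqref{eq:120bis} as in the right-module case), and handles the general hermitian connection and the resolvent exactly as in Prop.~\ref{prop:rightMorita}. No gaps.
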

	\begin{proof}
		The proof is similar as in Prop.\ref{prop:rightMorita}. The
		only point is to check that $D_L$ is selfadjoint which is obtained as
		in the right module case, noticing that \eqref{eq:159} reduces to 
		\begin{equation}
			\label{eq:160}
			{D}_0\psi = (D\psi^1,\ldots, D\psi^n)e
		\end{equation}
		(either obviously in case $e$ is twist-invariant, or because
		\begin{align}
			(D\psi^j)\sigma^{-1}(e^i_j)&=(\sigma^{-1}(e^i_j))^\circ D\psi^j =
			\sum_j J\sigma(e^j_i)J^{-1} D\psi^j= \sum_j \epsilon'  J\sigma(e^j_i) D
			J^{-1}\psi^j,\\
			\nonumber
			&=\sum_j \epsilon'  JD e^j_iJ^{-1}\psi^j=D {e^i_j}^\circ\psi^j=D(\psi^j
			e^i_j)=D\psi^i
			\label{eq:179}
		\end{align}
		in case $e$ twist-commutes with $D$). Therefore 
		\begin{align}
			%  \label{eq:161}
			\langle\psi', {D}_0\psi\rangle =
			\langle{\psi'}^l\otimes(e_l^i),
			D\psi^k\otimes(e_k^j) \rangle&=
			\langle{\psi'}^l\left\{(e_l^i),(e_k^j)\right\}, D\psi^k \rangle_\HH,\\
			&=
			\sum_k\langle{\psi'}^le_l^k, D\psi^k \rangle_\HH =
			\sum_k\langle{\psi'}^k, D\psi^k \rangle_\HH 
		\end{align}
		and 
		\begin{align*}
			\label{eq:161bis}
			\langle {D}_0\psi', \psi\rangle &=
			\langle  D{\psi'}^l\otimes(e_l^i) , \psi^k\otimes(e_k^j)
			\rangle,\\
			& = \sum_k\langle  (D{\psi'}^l) e_l^k, \psi^k
			\rangle_\HH = \sum_l \langle  D{\psi'}^l, \psi^k e_k^l
			\rangle_\HH=\sum_{l}\langle  D{\psi'}^l, \psi^l
			\rangle_\HH
		\end{align*}
		where we use $\langle  \psi' a, \psi   \rangle_\HH= \langle
		a^\circ\psi' , \psi   \rangle_\HH =\langle  \psi' , (a^\circ)^*\psi
		\rangle_\HH =\langle  \psi' , (a^*)^\circ\psi   \rangle_\HH =\langle
		\psi' , \psi a^*   \rangle_\HH$. We are back to 
		eqs. \eqref{eq:120},  \eqref{eq:120bis}, and the rest of the proof is
		as in the right module case.
	\end{proof}
	
	\subsection{Hermitian connection on the conjugate module}
	
	The conjugate of the right $\A$-module ${\cal E}=e\A^n$ of column vectors with
	entries in $\A$ - invariant by left multiplication by $e$ - is the left
	$\A$-module $\bar{\cal E}=\A^n e$ of raw vectors invariant by right multiplication
	by~$e$. Explicitly, given
	\begin{equation}
		\xi=
		\begin{pmatrix}
			\xi_1\\ \vdots \\\xi_n
		\end{pmatrix} \in {\cal E}, \quad\text{ then }\quad\bar\xi=(\xi_1^*, \ldots,
		\xi_n^*).
		\label{eq:164}
	\end{equation}
	In particular the left $\A$ product is such that
	\begin{equation}
		\;a\overline{\xi}=\overline{\xi a^*} \quad\forall a\in\A, \xi\in\cal E.
		\label{eq:9}
	\end{equation}
	The module $\bar{\cal E}$ is hermitian for the product \eqref{eq:139}, and one
	checks by \eqref{eq:125} that \eqref{eq:10}
	holds:
	\begin{equation}
		\label{eq:162}
		\left\{\bar \xi',\bar\xi\right\}=\sum_i \bar \xi_i' \,\bar \xi_i^*=
		\sum_i (\xi_i')^* \,\xi_i = (\xi', \xi).
	\end{equation}
	The selfadjointness of $e$ makes that, for any $j=1,..,n$, one has
	\begin{equation}
		\label{eq:193}
		\overline{  \begin{pmatrix}
				e^j_1\\\vdots\\e^j_n
		\end{pmatrix}}=(e^1_j,\ldots,e^n_j)
	\end{equation}
	Identifying module elements with their components, that is
	$\xi=(\xi_i)$ and $\bar\xi=(\bar\xi^i)$, the equation above  writes
	\begin{equation}
		\label{eq:195}
		\overline{(e_i^j)}= (e^i_j) \qquad \forall j=1, ...,n.
	\end{equation}
	
	The lift $\Sigma$ to $\cal E$ and $\Sigma^\circ$ to  $\bar{\cal E}$ of an
	automorphism $\sigma$ of $\A$, as  defined \eqref{eq:62} and \eqref{eq:5}, are
	inverse of one another in that 
	\begin{equation}
		\label{eq:165}
		\overline{\Sigma\xi} =
		\overline{ e\begin{pmatrix}
				\sigma(\xi_1)\\ \vdots\\ \sigma(\xi_n) 
		\end{pmatrix}} =(\sigma(\xi_1)^*, \ldots,
		\sigma(\xi_n)^*)e=(\sigma^{-1}(\xi_1^*), \ldots,
		\sigma^{-1}(\xi_n^*))e=\Sigma^{\circ-1}\bar\xi.
	\end{equation}

	then any connection $\nabla$ on $\cal E$ induces a connection $\bar\nabla$
	on $\bar{\cal E}$ defined as follows. 
	\begin{lem}
		\label{sec:twisted-one-forms}
		Given a $\Omega^1_D(\A,\sigma)$-connection
		$\nabla$
		on $\cal
		E$ as in \eqref{eq:7}, then
		\begin{equation}
			\label{eq:26}
			\bar\nabla(\bar\xi) =\epsilon'\, J\,\xi_{(1)}\,J^{-1}\otimes \bar\xi_{(0)}
		\end{equation}
		is an $\Omega^1_D(\A^\circ,\sigma^\circ)$-connection on $\bar{\cal E}$
		defined by the derivation $\delta^\circ$.\end{lem}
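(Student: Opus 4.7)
The plan is to verify the three defining properties of a left $\Omega^\circ$-connection for the map $\bar\nabla$ given by \eqref{eq:26}: that it takes values in $\Omega^\circ\otimes_\A\bar{\cal E}$, that it is $\mathbb{C}$-linear, and that it satisfies the twisted left Leibniz rule of \eqref{eq:TwistedLeibnizConn}.

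For the first property, I would use the identity $\epsilon'\,J\,\delta(a^*)\,J^{-1}=\delta^\circ(a)$, which is \eqref{eq:twistzero} rewritten via the definition $\delta^\circ(a)=[D,a^\circ]_{\sigma^\circ}$ in \eqref{eq:1} together with the conjugation rule $JaJ^{-1}=\hat a$ from \eqref{eq:34}. For a generic $\xi_{(1)}=\sum_j a_j\,\delta(b_j)\in\Omega$ one then finds $\epsilon'\,J\xi_{(1)}J^{-1}=\sum_j \hat a_j\,\delta^\circ(b_j^*)\in\Omega^\circ$, so $\bar\nabla(\bar\xi)$ indeed lies in $\Omega^\circ\otimes\bar{\cal E}$. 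The $\mathbb{C}$-linearity in $\bar\xi$ is immediate, since both the bar-conjugation on $\cal E$ and the conjugation by the antilinear isometry $J$ are antilinear, and the two antilinearities combine in \eqref{eq:26} to give a $\mathbb{C}$-linear map on $\bar{\cal E}$.

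The main step is the Leibniz rule. Using $a\bar\xi=\overline{\xi a^*}$ from \eqref{eq:9}, I would compute $\bar\nabla(a\bar\xi)=\bar\nabla(\overline{\xi a^*})$ by first applying the right twisted Leibniz rule \eqref{eq:TwistedLeibnizConn} to $\xi a^*$:
\begin{equation*}
\nabla(\xi a^*)=\xi_{(0)}\otimes(\xi_{(1)}\,a^*)+\xi\otimes\delta(a^*).
\end{equation*}
Plugging this into \eqref{eq:26} yields two contributions. The second one reduces, by the identity of the previous paragraph, to $\epsilon'\, J\,\delta(a^*)\,J^{-1}\otimes\bar\xi=\delta^\circ(a)\otimes\bar\xi$. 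The first one is
\begin{equation*}
\epsilon'\,J\,\xi_{(1)}a^*\,J^{-1}\otimes\overline{\xi_{(0)}}=(\epsilon'\,J\xi_{(1)}J^{-1})\,\widehat{a^*}\otimes\bar\xi_{(0)}=\bar\xi_{(-1)}\,a^\circ\otimes\bar\xi_{(0)},
\end{equation*}
using $\widehat{a^*}=(a^{**})^\circ=a^\circ$. The bimodule law \eqref{eq:BimoduleOneForms} then identifies $\bar\xi_{(-1)}\,a^\circ$ with the left action $a\cdot\bar\xi_{(-1)}$, so this contribution equals $a(\bar\nabla\bar\xi)$ in the sense of \eqref{eq:4}. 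Summing the two terms yields precisely the left twisted Leibniz rule.

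The only real obstacle is the bookkeeping of the many twists and conjugations: one must navigate $\sigma$, $\sigma^\circ$, $J$, the bar-conjugation on $\cal E$, the hat $\hat a=JaJ^{-1}$, and the opposite-algebra map $a\mapsto a^\circ$, and in particular recognize the operator product $\bar\xi_{(-1)}\,\widehat{a^*}$ on $\HH$ as the left bimodule action $a\cdot\bar\xi_{(-1)}$ in $\Omega^\circ$. A subsidiary check is that \eqref{eq:26} is well defined on the $\A$-balanced tensor product, which follows from applying the same identifications to the balancing relation $\xi_{(0)}\,a\otimes\xi_{(1)}=\xi_{(0)}\otimes\sigma(a)\xi_{(1)}$. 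Once these identifications are set, the rest is direct computation.
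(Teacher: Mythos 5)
Your proof is correct and follows essentially the same route as the paper's: rewrite $a\bar\xi$ as $\overline{\xi a^*}$, apply the right Leibniz rule for $\nabla$, and convert the two resulting terms via the conjugation identities $J(\xi_{(1)}\, a^*)J^{-1}=(J\xi_{(1)}J^{-1})\,a^\circ=a\cdot J\xi_{(1)}J^{-1}$ and $\epsilon' J\delta(a^*)J^{-1}=\delta^\circ(a)$. The additional checks you include (that the values lie in $\Omega^\circ$, $\mathbb{C}$-linearity, well-definedness on the balanced tensor product) are routine and only make the argument more complete than the paper's, which handles the codomain by citation.
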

	\begin{proof}
		By the twisted first order condition, one has that $J\xi_{(1)}J^{-1}$
		belongs to $\Omega^1_D(\A^\circ, \sigma^\circ)$ \cite{landimart:twistgauge}. 
		The only point is to check the Leibniz rule
		\ref{eq:TwistedLeibnizConn}. By \eqref{eq:9},
		\begin{align}
			\label{eq:2}
			\bar\nabla(a\bar \xi) =\bar\nabla(\overline{\xi
				a^*}),
		\end{align}
		while for the Leibniz rule \ref{eq:TwistedLeibnizConn} for $\nabla$
		\begin{equation}
			\label{eq:30}
			\nabla(\xi a^*) = \xi_{(0)}\otimes (\xi_{(1)}\cdot a^*) + \xi
			\otimes \delta(a^*).
		\end{equation}
		Hence \eqref{eq:26} yields
		\begin{align}
			\label{eq:31}
			\bar\nabla(a\overline{\xi}) &=  \epsilon' J(\xi_{(1)}\cdot a^*) J^{-1} \otimes
			\bar\xi_{(0)}+\epsilon' J\delta(a^*) J^{-1}\otimes \bar\xi,\\
			\label{eq:39}
			&= \epsilon' a\cdot J\xi_{(1)} J^{-1}\otimes
			\bar \xi_{(0)}+ \delta^\circ(a) \otimes \bar\xi,
		\end{align}
		where we used
		\begin{align}
			\label{eq:37}
			J(\xi_{(1)}\cdot a^*) J^{-1} & =  J(\xi_{(1)} \, a^*) J^{-1}
			= J\xi_{(1)} J^{-1}\, a^\circ=a\cdot J\xi_{1} J^{-1}
		\end{align}
		which comes from \eqref{eq:BimoduleOneForms}  and \eqref{eq:38}, as well as
		\begin{equation}
			\label{eq:35}
			J\delta(a^*)J^{-1}= J[D, a^*]_\sigma \,J^{-1}=\epsilon' [D, J a^*
			J^{-1}]_\sigma
			= \epsilon'[D, a^\circ]_{\sigma^\circ}=\epsilon'\delta^\circ(a)
		\end{equation}
		that follows from \eqref{real_structure} and  \eqref{eq:36}.
		Eq. \eqref{eq:39}, rewritten as
		\begin{equation}
			\label{eq:41}
			\bar\nabla(a\bar\xi) =   a\bar\nabla(\bar \xi) + \delta^\circ(a)\otimes \bar \xi
		\end{equation}
		is the Leibniz rule for an $\Omega^1_D(\A^\circ,
		\sigma^\circ)$-connection generated by
		$\delta^\circ$.\end{proof}

	\begin{lem}
		\label{sec:herm-conn-conj}
		Let $\nabla$ be an hermitian connection on $\cal E$. Then
		the connection $\bar\nabla$ defined in  lemma
		\ref{sec:twisted-one-forms} is hermitian on $\bar{\cal E}$.
	\end{lem}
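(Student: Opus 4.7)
The plan is to unpack the left-module hermicity condition \eqref{eq:148} for $\bar\nabla$ and reduce it, term by term, to the right-module hermicity \eqref{eq:129} for $\nabla$, using the dictionary provided by the real structure $J$. The key ingredients are: the pairings agree via $\{\bar\xi', \bar\xi\}=(\xi',\xi)$ from \eqref{eq:10}; the lifts satisfy $\Sigma^\circ \bar\xi = \overline{\Sigma^{-1}\xi}$, obtained by inverting the identity \eqref{eq:165}; the derivations are linked by $\delta^\circ(a)=\epsilon' J\,\delta(a^*)\, J^{-1}$ as extracted in \eqref{eq:35}; and the involutions on $\Omega$ and $\Omega^\circ$ are mutually consistent via \eqref{eq:132}--\eqref{eq:132bis} and \eqref{eq:78}--\eqref{eq:78bis}.

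Concretely, writing $\bar\xi=\overline{\xi}$ and $\bar\xi'=\overline{\xi'}$ for $\xi, \xi'\in\cal E$, I would first expand $\{\bar\nabla \bar\xi',\bar\xi\}$ using the explicit formula \eqref{eq:26} for $\bar\nabla$ together with the pairing \eqref{eq:8}, obtaining an expression in which $\epsilon' J\,\xi'_{(1)}\, J^{-1}\in\Omega^\circ$ is multiplied on the right by $(\xi'_{(0)},\xi)\in\A$. Similarly, $\{\bar\xi',\bar\nabla (\Sigma^\circ \bar\xi)\}$ becomes, using $\Sigma^\circ \bar\xi = \overline{\Sigma^{-1}\xi}$, an expression involving $\epsilon' J\,(\Sigma^{-1}\xi)_{(1)}\, J^{-1}$, its involution, and the pairing $(\xi',(\Sigma^{-1}\xi)_{(0)})$. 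On the right-hand side, one rewrites $\delta^\circ(\{\bar\xi',\bar\xi\})=\delta^\circ((\xi',\xi))=\epsilon' J\,\delta((\xi',\xi)^*)\, J^{-1}=\epsilon' J\,\delta((\xi,\xi'))\, J^{-1}$.

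The last step is to invoke the hermicity \eqref{eq:129} of $\nabla$ with $\xi$ and $\xi'$ interchanged, giving $\delta((\xi,\xi'))=(\xi,\nabla\xi')-(\nabla\Sigma^{-1}\xi, \xi')$, and conjugate this identity by $\epsilon' J\,\cdot\, J^{-1}$. The two terms will then match, respectively, $\{\bar\nabla \bar\xi',\bar\xi\}$ and $-\{\bar\xi',\bar\nabla \Sigma^\circ \bar\xi\}$. The main obstacle is the bookkeeping required for the conversion between the $\Omega$-bimodule laws \eqref{eq:BimoduleOneForms} on the right-module side and the $\Omega^\circ$-bimodule laws on the left-module side: one must verify that the $\sigma$-twists arising from the left action of $\A$ on $\Omega$ become the $\sigma^\circ$-twists arising from the right action of $\A$ on $\Omega^\circ$ after conjugation by $J$, which amounts to the identity $J\sigma(a)J^{-1}=\sigma^\circ(\hat a)$ of \eqref{eq:22}, combined with the regularity \eqref{eq:RegularityAutomorph}. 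Once this dictionary is consistently applied, the identity \eqref{eq:148} follows.
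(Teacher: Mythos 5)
Your proposal is correct and follows essentially the same route as the paper: expand both pairings via \eqref{eq:26}, \eqref{eq:8} and $\Sigma^\circ\bar\xi=\overline{\Sigma^{-1}\xi}$, rewrite each as $\epsilon' J(\,\cdot\,)J^{-1}$ of the corresponding right-module pairing using the bimodule laws and $\delta^\circ(a)=\epsilon' J\delta(a^*)J^{-1}$, and then apply the hermicity \eqref{eq:129} of $\nabla$ with $\xi,\xi'$ interchanged. The bookkeeping you flag (converting the $\sigma$-twists into $\sigma^\circ$-twists via conjugation by $J$ and regularity) is precisely what the paper's computation carries out, so no gap.
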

	\begin{proof}
		Eq. \eqref{eq:8} together with the definition of $\bar\nabla$ in \eqref{eq:26} yields
		\begin{align*}
			%\label{eq:166}
			\left\{\bar\nabla\bar{\zeta'},\bar\zeta\right\}&=\epsilon'
			J\zeta'_{(1)}
			J^{-1}\cdot
			\left\{\bar\zeta'_{(0)},\bar\zeta\right\}=
			\epsilon'
			J\zeta'_{(1)}
			J^{-1}\cdot
			\left(\zeta'_{(0)},\zeta\right),\\
			&=\epsilon'\sigma^\circ\left(\left(\zeta'_{(0)},\zeta\right)^\circ\right)
			J\zeta'_{(1)} J^{-1} =
			\epsilon'\sigma^{-1}\left(\left(\zeta'_{(0)},\zeta\right)\right)
			^\circ J\zeta'_{(1)} J^{-1} ,\\
			&=
			\epsilon'J\sigma^{-1}\left(\left(\zeta'_{(0)},\zeta\right)\right)
			^*\zeta'_{(1)} J^{-1} =\epsilon'J\sigma\left(\left(\zeta'_{(0)}\zeta,
			\zeta'_{(0)}\right)\right) \zeta'_{(1)} J^{-1},\\ & =  \epsilon'J\left(\zeta,\zeta'_{(0)}\right)\cdot
			\zeta'_{(1)} J^{-1} = \epsilon' J(\zeta,\nabla\zeta')J^{-1};\\  %%%%%%%%
			\left\{\overline{\zeta'},\bar\nabla(\Sigma^\circ\bar\zeta)\right\}& = \left\{\overline{\zeta'},\bar\nabla(\overline{\Sigma^{-1}\zeta})\right\}=\epsilon'
			\left\{\overline{\zeta'},\overline{\Sigma^{-1}\zeta}_{(0)}\right\}\cdot \left(J(\Sigma^{-1}\zeta)_{(1)}   J^{-1}\right)^*,\\
			&=\epsilon' \left(\zeta',\Sigma^{-1}\zeta_{(0)}\right)\cdot
			J(\Sigma^{-1}\zeta)_{(1)}^*   J^{-1}=\epsilon'
			J(\Sigma^{-1}\zeta)_{(1)}^*   J^{-1}
			\left(\zeta',\Sigma^{-1}\zeta_{(0)}\right)^\circ ,\\
			&=\epsilon'
			J(\Sigma^{-1}\zeta)_{(1)}^*  
			\left(\zeta',\Sigma^{-1}\zeta_{(0)}\right)^*J^{-1} =\epsilon'
			J(\Sigma^{-1}\zeta)_{(1)}^*  \cdot
			\left(\Sigma^{-1}\zeta_{(0)},\zeta'\right)J^{-1}\\&=\epsilon'
			J\left(\nabla(\Sigma^{-1}\zeta),\zeta'\right)J^{-1}
		\end{align*}
		where we use \eqref{eq:162}, then \eqref{eq:38}, \eqref{eq:143}, the
		properties of the map $\circ$ and finally \eqref{eq:130}. Thus, being
		$\nabla$ hermitian by hypothesis, it follows from \eqref{eq:129} that
		\begin{align}
			\label{eq:167}
			- \left\{\overline{\zeta'},\bar\nabla(\Sigma^\circ\bar\zeta)\right\}
			+\left\{\bar\nabla\bar{\zeta'},\bar\zeta\right\}&=
			\epsilon'
			J\left(
			-\left(\nabla(\Sigma^{-1}\zeta),\zeta'\right)+(\zeta,\nabla\zeta')\right)J^{-1},\\ &=\epsilon'J\delta((\zeta,\zeta'))J^{-1}=\delta_\sigma((\zeta',\zeta))=\delta_\sigma(\left\{\overline{\zeta'},\bar\zeta)\right\})
		\end{align}
		where the former last equation follows from \eqref{eq:twistzero} written as
		$\delta^\circ(a)=\epsilon'J\delta(a^*)J^{-1}$.
		Hence $\bar\nabla$ is
		hermitian in the sense of \eqref{eq:148}.
	\end{proof}
	
	In particular, the conjugate of the Grassmann connection $\nabla_0$ 
	on $\cal E$
	is the Grassmann connection $\nabla_0^\circ$ on $\bar{\cal F}$.
	\begin{lem}
		\label{sec:herm-conn-conj-1}
		One has $\overline{\nabla_0}=\nabla_0^\circ$.
	\end{lem}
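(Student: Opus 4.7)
The statement $\overline{\nabla_0}=\nabla_0^\circ$ is a direct index-chasing computation, so my plan is to unfold both sides on a generic element $\bar\xi\in\bar{\cal E}$ (with $\bar\xi^j=\xi_j^*$) using the definitions at hand, and match them term by term. The three ingredients are: the explicit formula \eqref{eq:77} for $\nabla_0$ on the right module $\cal E$; the definition \eqref{eq:26} of the conjugate connection $\overline{\nabla_0}$; and the definition \eqref{eq:146} of the left-module Grassmann connection $\nabla_0^\circ$ on $\bar{\cal E}$. The bridge between the two sides is the compatibility \eqref{eq:35} of $J$-conjugation with the twisted derivations.

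First I would apply \eqref{eq:77} to read off the Sweedler components of $\nabla_0\xi$: taking $\xi_{(0)}=(e_i^j)$ (the column vector with $i$-th entry $e_i^j$, with $j$ implicitly summed) and $\xi_{(1)}=\delta(\xi_j)$. Substituting these into \eqref{eq:26} yields
\begin{equation*}
\overline{\nabla_0}(\bar\xi)=\varepsilon'\,J\,\delta(\xi_j)\,J^{-1}\otimes\overline{(e_i^j)}.
\end{equation*}
Next, by the conjugation rule \eqref{eq:193}--\eqref{eq:195}, the conjugate of the column $(e_i^j)_i$ is the row $(e_j^i)_i$, i.e.\ the raw vector $(e_j^1,\ldots,e_j^n)$. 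Then \eqref{eq:35} gives $\varepsilon' J\delta(a)J^{-1}=\delta^\circ(a^*)$ for every $a\in\A$ (apply \eqref{eq:35} with $a^*$ in place of $a$ and use $\varepsilon'^2=1$), so that $\varepsilon'J\delta(\xi_j)J^{-1}=\delta^\circ(\xi_j^*)=\delta^\circ(\bar\xi^j)$.

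Putting these together yields
\begin{equation*}
\overline{\nabla_0}(\bar\xi)=\delta^\circ(\bar\xi^j)\otimes(e_j^1,\ldots,e_j^n),
\end{equation*}
which is precisely the right-hand side of \eqref{eq:146} evaluated on $\bar\xi$. Since both connections are $\mathbb{C}$-linear, equality on arbitrary $\bar\xi$ settles the claim.

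The step I expect to be slightly delicate is keeping the index conventions straight when passing between the right-module column-vector picture and the left-module row-vector picture, in particular checking that the selfadjointness $(e_i^j)^*=e_j^i$ combined with the identification $\bar\xi^j=\xi_j^*$ lines up exactly with the placement of the implicit summation index in \eqref{eq:146}. There is no analytical obstacle: everything reduces to \eqref{eq:35} and the bookkeeping rule \eqref{eq:193}.
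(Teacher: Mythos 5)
Your proof is correct and follows essentially the same route as the paper: unfold $\overline{\nabla_0}(\bar\xi)$ via \eqref{eq:77} and \eqref{eq:26}, use $\epsilon' J\delta(\xi_j)J^{-1}=\delta^\circ(\xi_j^*)$ (the relation \eqref{eq:35}, equivalently \eqref{eq:twistzero}) together with $\overline{(e_i^j)}=(e_j^i)$ from \eqref{eq:195}, and recognise the result as \eqref{eq:146}. The index bookkeeping you flag as delicate is handled exactly as in the paper's one-line computation.
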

	\begin{proof}
		From \eqref{eq:77} and \eqref{eq:146} one has
		\begin{align*}
			\label{eq:169}
			\overline{\nabla_0}(\bar\xi)= \epsilon' J\delta(\xi_j)J^{-1}\otimes
			\overline{    \begin{pmatrix}
					e_{1j}\\ \vdots \\e_{nj}
			\end{pmatrix}}=\delta^\circ(\xi_j^*)\otimes(e_{j1},\ldots, e_{jn})=\nabla_0^\circ(\bar\xi).
		\end{align*}
		
		\vspace{-1.25truecm}\end{proof}

	\section{Semi-group for opposite twisted one-forms}
	\label{sec:semi-group-opposite}
	
	The map $\eta^\circ$ defined in
	\eqref{eq:86} has similar properties as the lap $\eta$
	\eqref{eq:81}.
	\begin{lem}
		\label{lem:twisted_opp_one_forms_map}
		i) The map $\eta^\circ$   is surjective; ii) The adjoint is given by
		\begin{equation*}
			\label{eq:870}
			\left(\eta^\circ\left(\sum_j a_j^\circ\otimes b_j\right)\right)^*=\eta^\circ\left(\sum_j {b_j^\circ}^*\otimes a_j^*\right).\end{equation*}
		iii) The gauge-transformed
		(\ref{eq:44})  of
		$\hat{\omega}=\eta^\circ\left(\sum_j a_j^\circ\otimes
		b_j\right)\in\Omega_{D}^{1}(A^\circ,\sigma^{\circ})$
		is  
		\begin{equation}
			\hat{\omega}^{u}=\eta^\circ\left(\sum_j\sigma^{\circ}({u^*}^\circ)\,
			a_j^\circ\otimes
			u b_j\right)\quad \forall
			u\in\mathcal{U}(A).
			\label{eq:60}
		\end{equation}
	\end{lem}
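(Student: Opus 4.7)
The plan is to mirror the proof of Lemma \ref{lem:TwistedOneFormsMap}, with each ingredient replaced by its opposite-algebra analog. Specifically, I will use the opposite Leibniz rule \eqref{eq:TwistedLeibniz2}, the regularity \eqref{eq:84} of $\sigma^\circ$, the signed-hermiticity \eqref{eq:147} of $\delta^\circ$, and the fact that $\delta^\circ(e)=0$. The normalisation condition $\sum_j a_j^\circ\sigma^\circ(b_j^\circ)=e$ can be rewritten, using $\sigma^\circ(b_j^\circ)=(\sigma^{-1}(b_j))^\circ$ and the opposite product, as $\sum_j \sigma^{-1}(b_j)a_j=e$, equivalently $\sum_j b_j\sigma(a_j)=e$; this is the form I will apply $\delta^\circ$ to.

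For (i), given an arbitrary element $\sum_{j=1}^{n}a_j^\circ\delta^\circ(b_j)$ of $\Omega^1_D(\A^\circ,\sigma^\circ)$, I would complete it to a normalised combination by appending the summand
\begin{equation*}
a_0^\circ:=e-\sum_{j=1}^{n}a_j^\circ\sigma^\circ(b_j^\circ),\qquad b_0:=e,
\end{equation*}
which does not change the image of $\eta^\circ$, since $\delta^\circ(e)=[D,e]_{\sigma^\circ}=0$. By construction $\sum_{j=0}^{n}a_j^\circ\sigma^\circ(b_j^\circ)=e$, proving surjectivity.

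For (ii), the Leibniz rule \eqref{eq:TwistedLeibniz2} applied to $b_j\sigma(a_j)$, followed by summation, gives
\begin{equation*}
\delta^\circ\!\left(\sum_j b_j\sigma(a_j)\right)=\sum_j a_j^\circ\delta^\circ(b_j)+\sum_j\delta^\circ(\sigma(a_j))b_j^\circ,
\end{equation*}
where I used $\sigma^\circ(\sigma(a_j)^\circ)=a_j^\circ$. The left-hand side is $\delta^\circ(e)=0$, so $\sum_j a_j^\circ\delta^\circ(b_j)=-\sum_j\delta^\circ(\sigma(a_j))b_j^\circ$. Taking adjoints and invoking \eqref{eq:147} in the form $\delta^\circ(\sigma(a))^{*}=-\delta^\circ(a^*)$ yields $\left(\sum_j a_j^\circ\delta^\circ(b_j)\right)^{*}=\sum_j(b_j^\circ)^{*}\delta^\circ(a_j^{*})$, which is $\eta^\circ\bigl(\sum_j(b_j^\circ)^{*}\otimes a_j^{*}\bigr)$. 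Normalisation of this preimage follows from the regularity $\sigma^\circ((a_j^*)^\circ)=(\sigma(a_j)^*)^\circ$ together with $\sum_j b_j\sigma(a_j)=e$.

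For (iii), I compute
\begin{equation*}
\eta^\circ\!\left(\sum_j\sigma^\circ(\hat u)a_j^\circ\otimes ub_j\right)=\sigma^\circ(\hat u)\sum_j a_j^\circ\,\delta^\circ(ub_j),
\end{equation*}
and apply \eqref{eq:TwistedLeibniz2} to $\delta^\circ(ub_j)=\sigma^\circ(b_j^\circ)\delta^\circ(u)+\delta^\circ(b_j)u^\circ$. The first piece collapses via the original normalisation to $\sigma^\circ(\hat u)\delta^\circ(u)$, while the second yields $\sigma^\circ(\hat u)\hat\omega\,u^\circ$. Identifying $\hat u^{*}=u^\circ$ and $\delta^\circ(u)=[D,u^\circ]_{\sigma^\circ}=[D,\hat u^*]_{\sigma^\circ}$, the result matches \eqref{eq:44}. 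Normalisation of the transformed element is immediate from the multiplicativity of $\sigma^\circ$, the normalisation of $\sum_j a_j^\circ\otimes b_j$, and $uu^{*}=e$.

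There is no serious obstacle here: the argument is a direct transcription of the proof of Lemma~\ref{lem:TwistedOneFormsMap}. The only delicate point is bookkeeping the $\circ$ and $\sigma^\circ$ in the adjoint computation, making sure the minus signs of \eqref{eq:147} and of the Leibniz identity cancel correctly; this is what justifies giving the parallel proof rather than merely invoking the earlier lemma.
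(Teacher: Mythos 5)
Your proof is correct and follows essentially the same route as the paper's: the same completion by $a_0^\circ=e-\sum_j a_j^\circ\sigma^\circ(b_j^\circ)$, $b_0=e$ for surjectivity, the same rewriting of the normalisation as $\sum_j b_j\sigma(a_j)=e$ combined with the Leibniz rule \eqref{eq:TwistedLeibniz2} and \eqref{eq:147} for the adjoint, and the same Leibniz computation of $\delta^\circ(ub_j)$ for the gauge transformation. Nothing to add.
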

	\begin{proof}
		i)  Surjectivity is proven as in lemma
		\ref{lem:TwistedOneFormsMap}. 
		
		ii) The normalisation condition in
		\eqref{eq:semi_group_hat} is equivalent to $\sum_j b_j\sigma(a_j) =e$, for
		\begin{equation}
			\label{eq:89}
			( b_j\sigma(a_j))^\circ =\sigma(a_j)^\circ b_j^\circ =
			{\sigma^\circ}^{-1}(a^\circ)\, b_j^\circ =  {\sigma^\circ}^{-1}(a^\circ\sigma^\circ(b_j)).
		\end{equation} The Leibniz rule
		\eqref{eq:TwistedLeibniz2} for
		$\delta^\circ(b_j\sigma(a_j)) =\delta^\circ(e)=0$
		(omitting the symbol of summation)  yields
		$\sigma^\circ(\sigma(a_j)^\circ)\delta^\circ(b_j) +
		\delta^\circ(\sigma(a_j))b_j^\circ=0$, that is
		\begin{equation}
			\label{eq:88}
			a^\circ_j \delta^\circ(b_j)=-\delta^\circ(\sigma(a_j))b_j^\circ.
		\end{equation}
		Therefore, for $\sum_j a_j^\circ\otimes b_j$ in
		$\text{Pert}(A^\circ, \sigma^\circ)$ and using \eqref{eq:147}, one has
		\begin{align}
			\label{eq:87}
			\left( \eta^\circ\left(\sum_j a_j^\circ\otimes b_j\right)\right)^*&=
			\left( \sum_ ja_j^\circ\delta^\circ(b_j)\right)^*= -\left( \sum_ j\delta^\circ(\sigma(a_j))b_j^\circ\right)^*,\\
			&= \sum_ j {b_j^\circ}^* \,\delta^\circ(a_j^*)=\eta^\circ\left(\sum_j {b_j^\circ}^*\otimes a_j^*\right).\end{align}
		The result follows noticing that  $\sum_j {b_j^\circ}^*\otimes a_j^*$
		is normalised, for \eqref{eq:84} yields
		\begin{equation*}
			\sum_j {b_j^\circ}^*\sigma^\circ({a_j^*}^\circ)=   \sum_j {b_j^\circ}^*{\sigma^\circ}^{-1}({a_j^\circ})^*=  \sum_j 
			\left({\sigma^\circ}^{-1}({a_j^\circ}) {b_j^\circ}\right)^*= \sum_j 
			{\sigma^\circ}^{-1}\left( {a_j^\circ} \sigma^\circ(b_j^\circ)\right)^*=
			{\sigma^\circ}^{-1}\left(e\right)^*=e.\end{equation*}
		
		iii) Let us first check that for $a_j^\circ\otimes b_j$ in
		$\text{Pert}(A^\circ,\sigma^\circ)$ (omitting the symbol of
		summation), then the argument of $\eta^\circ$ in
		\eqref{eq:60} is normalised: 
		\begin{equation}
			\label{eq:91}
			\sigma^\circ({u^*}^\circ)a_j^\circ\, \sigma^\circ((u b_j)^\circ ) =
			\sigma^\circ({u^*}^\circ)a_j^\circ\, \sigma^\circ(b_j^\circ )
			\sigma^\circ(u^\circ )=
			\sigma^\circ({u^*}^\circ)\sigma^\circ(u^\circ )=e.
		\end{equation}
		
		From the Leibniz rule \eqref{eq:TwistedLeibniz2}, one gets
		\begin{align}
			\label{eq:92}
			\eta^\circ\left(\sum_j\sigma^{\circ}({u^*}^\circ)\,  a_j^\circ\otimes u b_j\right)&=\sum_j\sigma^{\circ}({u^*}^\circ)  a_j^\circ\, \delta^\circ(u b_j)),\\
			&=\sum_j\sigma^{\circ}({u^*}^\circ)  a_j^\circ\,
			\left(\sigma^\circ(b_j^\circ) \delta^\circ(u) +  \delta^\circ(b_j) u^\circ\right),\\
			&=\sigma^{\circ}({u^*}^\circ)  \,\delta^\circ(u) +  \sigma^{\circ}({u^*}^\circ)\left( \sum_j  a_j^\circ\delta^\circ(b_j)\right)u^\circ,\\
			&=\sigma^{\circ}(\hat u)\,  \delta^\circ(u) + \sigma^{\circ}(\hat u)\,
			\hat\omega\, \widehat{u^*},
		\end{align}
		where in the last line we use ${u^*}^\circ =\hat u$, $u^\circ=\widehat{u^*}$. This
		coincides with the formula \eqref{eq:44} of ${\hat\omega}^u$, noticing that
		$\delta^\circ(u)=[D, u^\circ]_{\sigma^\circ}$\end{proof}

\end{document}